\documentclass[11pt,letterpaper]{article} 
\usepackage[letterpaper,margin=1in]{geometry}

\usepackage[UKenglish]{babel}
\usepackage{fullpage}

\usepackage{amsmath,amssymb,amsthm}
\usepackage{xcolor}
\usepackage[colorlinks=true, allcolors=blue]{hyperref}
\usepackage[capitalise,noabbrev]{cleveref}
\usepackage{thmtools} 
\usepackage{thm-restate}
\usepackage{enumerate}

\newtheorem{theorem}{Theorem}[section]

\newtheorem{lemma}[theorem]{Lemma}
\newtheorem{definition}[theorem]{Definition}

\newtheorem{claim}[theorem]{Claim}

\newtheorem{observation}[theorem]{Observation}
\newtheorem*{theorem*}{Theorem}

\newcommand{\eps}{\varepsilon}
\renewcommand{\phi}{\varphi}

\DeclareMathOperator*{\polylog}{polylog}

\DeclareMathOperator{\OPT}{OPT}
\DeclareMathOperator*{\area}{area}
\DeclareMathOperator*{\width}{width}
\DeclareMathOperator*{\height}{height}
\DeclareMathOperator*{\base}{base}
\DeclareMathOperator*{\shadow}{shadow}

\newcommand{\OnlineSorting}{\textsc{Online-Sorting}}
\newcommand{\OnlineTSPordering}{\textsc{On\-line-TSP-or\-de\-ring}}
\newcommand{\OnlineTSPscheduling}{\textsc{On\-line-TSP-Sche\-du\-ling}}

\usepackage{todonotes}

\title{Online Geometric Packing through Online TSP Scheduling} 
\date{}

\author{Anders Aamand \and Mikkel Abrahamsen \and Simon Bartlmae \and Arindam Khan \and Linda Kleist\and Csaba D. T\'{o}th}

\begin{document}

\maketitle

\begin{abstract}
We consider the problem of online packing of convex polygons into a strip by translations. While online algorithms with a constant competitive ratio have been known for rectangles for decades [Baker and Schwarz, SICOMP 1983],
the current best algorithm for convex polygons has competitive ratio $O(n^{\log_2 3-1}\log n) = O(n^{0.59})$, where $n$ is the number of polygons.
This algorithm was described by Aamand, Abrahamsen, Beretta, and Kleist [SODA 2023], who also proved a lower bound of  $\Omega(\sqrt{\log n/\log\log n})$ on the competitive ratio of any algorithm.
Their lower bound is obtained via a reduction from \emph{online sorting}, a problem introduced in the same paper, for which they established a lower bound on the competitive ratio.

We introduce a new, natural online problem that we call \emph{online TSP scheduling}.
Here, points $x_1,\ldots,x_n$ arrive online from a metric space $(M,d)$, and upon arrival each $x_i$ must be assigned a visit time $p_i\in[0,\infty)$ satisfying $|p_i-p_j|\ge d(x_i,x_j)$ for all $j<i$.
The cost of the schedule is $\max_i p_i$.
We present an $O(\log^2 n)$-competitive algorithm for online TSP scheduling, and show how this implies an $O(\log^2 n)$-competitive algorithm for online translational strip packing of convex polygons. We also prove that the same competitive ratio is achievable for other translational packing problems, including online packing of $d$-dimensional unit hyperdisks in $\mathbb R^{d+1}$, whose offline version was studied by Alt, Cabello, Cheong, Park, and Seiferth [Comp. Geom. 2026].

Our algorithm for online TSP scheduling builds on a recent breakthrough for online sorting by Azar, Panigrahi, and Vardi [SODA 2026].
We thus show that the connection between packing and online sorting can be used not only for lower bounds, but also for algorithms.

In a related problem, \emph{online TSP ordering}, we have to place the points $x_1,\ldots,x_n$ in empty cells in an array of size $(1+\eps)n$, for constant $\eps>0$, and the cost of a solution is the total length of the path defined by the points in left-to-right order in the array.
Our methods lead to an algorithm with competitive ratio $O((\log^3 n)/\eps)$ for online TSP ordering, making progress on a question by Bertram [ESA 2025].

\end{abstract}

\thispagestyle{empty}

\newpage
\thispagestyle{empty}
\tableofcontents

\newpage
\setcounter{page}{1}

\section{Introduction}
The problem of efficiently packing objects within a given container is a fundamental question in computational geometry, with wide-ranging applications in logistics, manufacturing, materials science, and robotics. 
While many real-world applications involve objects with high complexity, 
the vast majority of theoretical investigations focus on axis-aligned rectangular objects; see surveys~\cite{CHRISTENSEN201763,epstein2018multidimensional} and more recent results~\cite{AbrahamsenBeretta20,balogh2019lower, BansalK14, epstein2019lower, GalvezGIHKW21, Kar0R25, KarKW26}.
In some practical settings, objects must be placed with a fixed orientation, meaning that only translations are allowed (i.e., reflections and rotations are prohibited).
For example, in garment production, complex pieces are cut from a roll of fabric and later sewn together.
The orientation of each piece is constrained by the fabric grain, and therefore cannot be rotated arbitrarily.
Such questions from clothing production motivated Milenkovic~\cite{Milenkovic96,DBLP:journals/algorithmica/Milenkovic97,MILENKOVIC19993} and Milenkovic and Daniels~\cite{doi:10.1111/j.1475-3995.1999.tb00171.x} to study packing problems involving irregular shapes such as arbitrary convex or simple polygons.
They gave exponential-time offline algorithms finding optimal packings for various versions of the problem.
Since then, theoretical research on irregular packing problems has been mostly absent, but in recent years, there has been a growing effort in studying the theory of packings beyond rectangular objects to encompass more complex shapes, such as arbitrary simple polygons~\cite{DBLP:journals/theoretics/AbrahamsenMS24}, convex polygons~\cite{aamand2023online,alt_convexOffline_JoCG,alt_convexOffline_corr,DBLP:conf/esa/KurpiszS23,DBLP:conf/icalp/MerinoW20}, orthogonal polygons~\cite{OrthoOnlineSoCG}, disks and hyperspheres \cite{AcharyaBG0MW24, MiyazawaPSSW16}.

\paragraph*{The problem.} In this paper, we study online translational packing problems where the pieces to be packed are general convex polygons. Let us focus for now on one of these packing variants, called~\emph{strip packing}. The strip is the region $S := [0,\infty) \times [0,1]$. The input consists of a finite sequence $
\mathcal{I} = (P_1,P_2,\dots,P_n)$, where each $P_i$ is a convex polygon in the plane. The input is revealed sequentially and the task is to choose an irrevocable placement of polygon~$P_i$ in the strip $S$ before $P_{i+1}$ is revealed.
Each polygon can only be translated but neither rotated nor reflected.
The placements of the polygons must constitute a \textit{packing}, where the interiors of any two polygons are disjoint.
The objective is to minimize the cost of a packing defined as the maximum $x$-coordinate over all occupied points in $S$. \cref{fig:intro} illustrates a packing and its cost. 

\begin{figure}[htb]
    \centering
    \includegraphics[page=1]{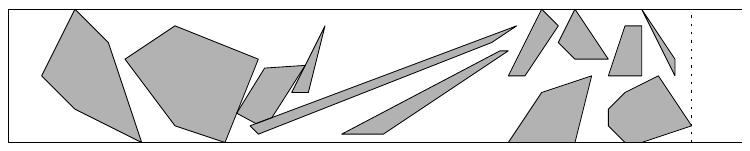}
    \caption{A strip packing of convex polygons}
    \label{fig:intro}
\end{figure}

Aamand, Abrahamsen, Beretta, and Kleist~\cite{aamand2023online}
showed that for online strip packing of convex polygons (even for parallelograms with diameter bounded by an arbitrary small constant), no online algorithm can achieve a constant competitive ratio. More precisely, they proved that the competitive ratio of any online algorithm is $\Omega(\sqrt{\log n/\log\log n})$ and that there exists an online algorithm with competitive ratio of $O(n^{\log_2 3-1}\log n)=O(n^{0.59})$, where $n$ is the number of polygons. They obtained the lower bound by reducing a natural online sorting problem to strip packing convex polygons and providing lower bounds for the sorting problem.
In online sorting, we are given an empty array and receive $n$ numbers from $[0,1]$ one after the other.
Each number has to be placed irrevocably in an empty cell in the array.
We consider the non-empty cells from left to right, and the goal is to minimize the sum of the pairwise differences of consecutive numbers in this order.
Placing the points in sorted order is clearly optimal, hence the name of the problem.
The online sorting problem has sparked several follow-up investigations \cite{DBLP:conf/esa/AbrahamsenB0K024,azar2026nearly, DBLP:conf/esa/Bertram25, DBLP:conf/soda/000126b, DBLP:conf/stacs/KalavasPT26,DBLP:conf/waoa/NirjhorW25}.

Online strip packing of axis-parallel rectangles was first studied by Baker and Schwarz~\cite{baker1983shelf}, followed by several other papers~\cite{DBLP:journals/acta/BrownBK82,Csirik1997shelf,han2007strip,DBLP:journals/iandc/HanIYZ16,HarrenK15,DBLP:journals/tcs/HurinkP11,KernP13,YeOnlineStrip}.
When rotations are allowed, packing problems often admit algorithms with better competitive ratios.
For instance, Lassak and Zhang~\cite{DBLP:journals/dcg/LassakZ91} described an $O_d(1)$-competitive online algorithm for packing $d$-dimensional convex objects of bounded diameter into a minimum-volume cube using rotations; such an algorithm does not even exist for $d=2$ when only translations are allowed by results from~\cite{aamand2023online}.

\subsection{Our contributions} 
In this paper, we show that ideas from online sorting (and the more general online TSP scheduling problem to be introduced here) can be used to devise algorithms for more complex online translational packing problems. 
In a sense, this reverses the perspective of the lower-bound construction of Aamand, Abrahamsen, Beretta, and Kleist~\cite{aamand2023online}: whereas they used online sorting to prove lower bounds for online packing problems, we use algorithmic progress on online-sorting-like problems to obtain improved packing algorithms.

\paragraph*{Two-dimensional online translational packing problems.}
Our first result considers two-dimen\-sion\-al online translational packing.
We develop a framework that gives results for several well-known packing variants.
In \textit{bin packing}, the arriving pieces must be packed into unit square containers and the goal is to minimize the number of used containers.
In \textit{perimeter packing} and \textit{area packing}, the pieces can be packed anywhere in the plane, and the goal is to minimize respectively the perimeter and the area of their axis-aligned bounding box.

\begin{restatable}{theorem}{main}\label{thm:main}
There exist algorithms with the following competitive ratios for translational online packing of $n$ convex polygons:
\begin{enumerate}[(i)]\itemsep 0pt
\item \label{item:0} an $O(\log^2 n)$-competitive algorithm for strip packing.
\item \label{item:i} an $O({(\log^2 n)/\delta})$-competitive algorithm for bin packing, as long as each arriving piece has horizontal span at most $1-\delta$. 
\item \label{item:ii} an $O(\log n)$-competitive algorithm for perimeter packing.
\item \label{item:iii} an $O(\sqrt{n}\cdot \log^2 n)$-competitive algorithm for area packing.
\end{enumerate}
The algorithms are deterministic and do not need to know $n$ in advance.
\end{restatable}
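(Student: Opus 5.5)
The plan is to reduce each packing variant to online TSP scheduling and invoke the $O(\log^2 n)$-competitive algorithm for that problem described in the abstract, which I assume is available as a black box: given online points $x_1,\dots,x_n$ in a metric space, it assigns visit times $p_i$ with $|p_i-p_j|\ge d(x_i,x_j)$ and $\max_i p_i = O(\log^2 n)\cdot\OPT_{\mathrm{TSP}}$.

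\textbf{Handling large pieces.} First, classify each arriving polygon by width. Pieces of width (horizontal extent) above some threshold are packed trivially side by side; their total width is a lower bound on $\OPT$ up to a constant, since widths add up under any packing. Also, the polygons are grouped into $O(1)$ height classes relative to the strip height. For pieces in a dyadic height class, I would normalize so that each piece is essentially a ``slab'' of height comparable to $1$.

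\textbf{The metric.} For the remaining pieces, I would define the metric $d(P,Q)$ as the minimum horizontal offset $t\ge 0$ such that $P$ and $Q+(t,0)$, with both anchored at the bottom of the strip, have disjoint interiors, symmetrized by taking the max over the two orders. Roughly, $d(P,Q)$ is the required horizontal separation between reference points. The key steps are then the following.

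\begin{enumerate}[(1)]
\item Show that $d$ is, up to constant factors, a metric, or is dominated by a metric such as the $L_\infty$ or Hausdorff-type distance between the left/right boundary functions. Convexity is used here: the separation of two convex pieces placed at the same vertical position is governed by how much the right chain of one overlaps the left chain of the other, and this behaves subadditively.
\item Run the TSP scheduling algorithm on this metric and place $P_i$ with reference point at $x$-coordinate $p_i$ (scaled by a constant). The constraint $|p_i-p_j|\ge d(P_i,P_j)$ ensures disjointness regardless of arrival order.
\item Lower-bound $\OPT$: any offline packing restricted to one height class, read left to right, yields an ordering whose consecutive separations sum to $O(\OPT)$ plus the total width. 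Hence the offline TSP schedule cost is $O(\OPT)$.
\end{enumerate}

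The main obstacle is step (3), since offline packings may stack pieces vertically and use vertical freedom. I would handle it by additionally discretizing vertical placement into $O(1)$ tracks per height class. Since the pieces are ``tall'' relative to the strip, only $O(1)$ of them can overlap any vertical line. So the optimal packing decomposes into $O(1)$ left-to-right chains, each giving a TSP path; concatenating them costs $O(\OPT)$. Thin or small pieces would instead go to a separate shelf-like packing whose cost is bounded by total area, which is at most $\OPT$. This yields (i).

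\textbf{The other variants.} For (ii), I would cut the strip solution into bins. The horizontal span condition $\le 1-\delta$ lets a strip of length $L$ be chopped into $O(L/\delta+1)$ bins, discarding the pieces crossing cut lines into extra bins. For (iii), I would use a strip whose height is doubled adaptively, with a cheaper $O(\log n)$ argument: the perimeter is dominated by the max width plus the max height, so simple doubling suffices. For (iv), I would choose a strip height that balances against $\sqrt{n}$ and apply (i), losing an extra $\sqrt n$ factor in the aspect ratio. In all cases the construction never uses $n$, and the doubling steps remain deterministic and $n$-oblivious.
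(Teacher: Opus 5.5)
\textbf{Gap 1: pieces that are short relative to the strip.} Your proposal does not handle pieces whose height is small compared to the strip, and this is where the difficulty of the problem lies. Several of your claims fail here.
Heights range over unboundedly many dyadic scales, not $O(1)$ classes.
Widths do not add up in a packing, since flat wide pieces can be stacked vertically.
A shelf-type packing of thin or small pieces cannot cost $O(\area)$: a slanted parallelogram with base $\eps$, height $h$ and shadow $s$ has area $\eps h$ but bounding-box area $h(s+\eps)$.
In fact, the $\Omega(\sqrt{\log n/\log\log n})$ lower bound of Aamand et al.\ holds for parallelograms of arbitrarily small diameter, so small pieces admit no $O(1)$-competitive treatment at all.

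The missing idea is the paper's box-packing layer:
\begin{enumerate}[(a)]
\item Replace each polygon by a horizontal parallelogram of dyadic height $h$, with area and width within constant factors.
\item For each class $h$, run TSP scheduling in a single track of a sub-strip of height $h$. There the center-to-center separation $\frac{b_i+b_j+|s_i-s_j|}{2}$ (resp.\ $\frac{b_i+b_j+s_i+s_j}{2}$ for opposite slopes) is an exact metric. For your general convex pieces of different heights, the triangle inequality that the scheduling algorithm relies on need not hold.
\item Cut this sub-strip online, using guess-and-double on $W_h$, into boxes of width $O(W_h)$ and height $h$.
\end{enumerate}
The boxes have total area $O(\log^2 n\sum_h(A_h+hW_h))=O(\log^2 n(A+WH))$, because distinct dyadic heights sum to at most $2H$. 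They are then handed to a constant-competitive rectangle algorithm: Baker--Schwarz for the strip, Abrahamsen--Beretta for perimeter and area.

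\textbf{Gap 2: comparison with $\OPT$ and the other variants.} Your step (3) is also unsound. It is false that only $O(1)$ pieces of height comparable to the strip can meet a vertical line. Many parallel slanted parallelograms of height $0.9$, base $\eps$ and shadow $10$ can be nested side by side and all cross $x=5$. So the chain decomposition of an optimal packing fails. The paper never analyzes the structure of $\OPT$. Sorting same-height parallelograms by slope directly bounds the optimal TSP path in class $h$ by $O(A_h/h+W_h)$. After that, only trivial lower bounds are needed: $\max(A,W)$ for the strip, $\max(4\sqrt A,2W,2H)$ for the perimeter, and $\max(A,WH)$ for the area.

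For (iii), your claim that the perimeter is governed by $W+H$ ignores the $\sqrt A$ term. The $O(\log n)$ ratio comes from the rectangle bound $O(\sqrt{\area(\mathcal B)}+\max(W,H))$, which takes a square root of the $O(\log^2 n)$ area blow-up of the boxes.

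Your (iv) sketch is too vague to check, and balancing against $\sqrt n$ seems to need $n$. The paper simply feeds the boxes to the $O(\sqrt m(\area+WH))$ rectangle algorithm.

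For (ii), moving pieces that cross a cut into extra bins does not work as stated. The pieces crossing one cut can span a horizontal length of nearly $2(1-\delta)>1$, so they need not fit into one bin. The paper instead assigns each piece to the overlapping window $[k\delta,k\delta+1]\times[0,1]$ selected by its leftmost $x$-coordinate, and that window contains the piece.
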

Our result on strip packing is an exponential improvement over the state of the art~\cite{aamand2023online}.
No algorithms seem to have been described for the other variants when the pieces are convex polygons.
Online bin packing of rectangles was first studied by Coppersmith and Raghavan~\cite{coppersmith1989multidimensional} with numerous later works improving upper and lower bounds~\cite{balogh2019lower, csirik1993line,epstein2019lower,DBLP:journals/siamcomp/EpsteinS05,DBLP:journals/talg/HanCTZZ11,DBLP:journals/algorithmica/SeidenS03,DBLP:journals/sigact/Stee15}, and perimeter and area packing have also been studied for squares and rectangles~\cite{AbrahamsenBeretta20,fekete2017online}.
There are known lower bounds of $\Omega(\sqrt{\log n/\log \log n})$ for strip and bin packing and $\Omega(\sqrt[4]{\log n/\log \log n})$ for perimeter packing~\cite{aamand2023online}.
Concerning area packing, there is a lower bound of $\Omega(\sqrt{n})$ already when the pieces are axis-parallel rectangles~\cite{AbrahamsenBeretta20}.
Closing the gaps between the upper and lower bounds remains an open problem.

\paragraph*{Online TSP scheduling.}
We obtain the above results on online translational packing problems through an understanding of a new variant of online TSP which is likely of independent interest.
\begin{definition}[Online TSP Scheduling]
Let $(M,d)$ be a metric space. Points $x_1,\ldots,x_n\in M$ arrive online,
and upon arrival, each $x_i$ must be assigned a visit time
$p_i\in[0,\infty)$ satisfying $|p_i-p_j|\ge d(x_i,x_j)$ for all $j<i$. The cost of the schedule is $\max_i p_i$.
\end{definition}

We denote this problem by \OnlineTSPscheduling$[n,M,d]$ and its
offline optimum by $\OPT$. Ordering the points by their visit times $p_i$ shows
that every feasible schedule induces a traveling-salesperson path of length at
most $\max_{i}p_i$. Conversely, any traveling-salesperson path can be scheduled with
cost equal to this length. Hence, $\OPT$ equals the length of a shortest
traveling-salesperson path through the input points. 

The \OnlineTSPscheduling~problem has a natural interpretation.
Imagine a person getting requests to visit different locations in an online fashion.
Upon receiving a request to visit $x_i\in M$, the time $p_i\in[0,\infty)$ at which $x_i$ is visited has to be scheduled immediately and irrevocably.
The underlying metric measures the time it takes to travel between different locations, and the constraint $|p_i-p_j|\ge d(x_i,x_j)$ ensures that there is always sufficient travel time. 

The problem can also be seen as an online variant of \emph{scheduling with sequence-dependent setup times (SSDST)}, which has a rich literature in operations research; see the surveys~\cite{Allahverdi1999, allahverdi2008survey} and subsequent work~\cite{LeibHellerKuehn2025,LinYing2022}.
In SSDST, jobs must be scheduled on a machine.
Job $i$ has processing time~$t_i$, and for every pair of jobs $(i,j)$, there is a setup time $\rho(i,j)$.
The algorithm has to assign a start time $s_i$ to job~$i$.
The resulting schedule must be feasible: jobs cannot overlap, and if job $j$ is scheduled after job $i$, then there has to be at least $\rho(i,j)$ time between the completion of $i$ and the start of $j$.
The objective is to minimize the makespan.
In many applications, the time needed to switch from one job to another depends on the two jobs involved, for example due to cleaning, color changes, temperature changes, tooling, or calibration, which is represented by the $\rho$-values.
Assuming $\rho$ is a metric, we can consider \OnlineTSPscheduling\ with the following metric: $\delta(i,j):= \frac{t_i}{2}+\rho(i,j)+\frac{t_j}{2}$ for $i \neq j$ and $\delta(i,i)=0$. Again, the constraint $|p_i-p_j|\ge d(x_i,x_j)$ ensures that sufficient setup time is left between jobs.

Our main result on online TSP scheduling is as follows.
\begin{restatable}{theorem}{mainTSPscheduling}\label{thm:TSP:scheduling}
There exists an algorithm for
\OnlineTSPscheduling$[n,M,d]$ with competitive ratio
$O(\log^2 n)$.
The algorithm does not need to know $n$ in advance.
\end{restatable}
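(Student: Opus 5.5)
We will build the $O(\log^2 n)$-competitive algorithm in three layers. First we remove the dependence on knowing $\OPT$ and $n$ by doubling. Then we reduce the general metric to a hierarchical (tree-like) structure. Finally we run an adaptation of the Azar--Panigrahi--Vardi online sorting scheme on each level of that structure.

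\textbf{Doubling.} We first assume a guess $L$ with $\OPT\le L\le 2\,\OPT$ and a guess $N\ge n$ are known. We aim for an algorithm that schedules all points inside a time window of length $O(L\log^2 N)$. Since $\OPT$ is at least the diameter of the points seen so far, and the TSP path length is monotone in the point set, we can maintain an online lower bound on $\OPT$. We take a constant-factor approximation of the TSP path on the current prefix, computable offline from the prefix. Whenever this lower bound exceeds $L$, or the count exceeds $N$, we start a fresh phase with $L\gets 2L$ and $N\gets N^2$. Each phase uses a new time window placed after all previous windows, separated by a gap of the phase's $L$. Points of different phases then satisfy $|p_i-p_j|\ge d(x_i,x_j)$, because each such distance is at most the current diameter bound. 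The window lengths grow geometrically, so the total cost is dominated by the last phase. The last phase has $L=O(\OPT)$ and $\log N=O(\log n)$.

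\textbf{Hierarchy.} Inside a phase, we round the metric to scales $L/2^k$ for $k=0,\dots,K$ with $K=O(\log N)$. Distances below $L/N$ are negligible: the slack $N\cdot L/N=L$ pays for them. We build online a hierarchical clustering of the arriving points, using the standard greedy net construction. A point at level $k$ joins an existing level-$k$ center within distance $L/2^k$, or otherwise becomes a new center. The key counting step: by the TSP-path lower bound, a set of level-$k$ centers that are pairwise more than $L/2^k$ apart has size $O(2^k)$ within cost $L$. So the total ``path length charged'' at each level is $O(L)$.

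\textbf{Scheduling.} The time window is treated like the array in online sorting, subdivided recursively. A level-$k$ cluster is allotted a sub-interval of length proportional to $L/2^k$ times a $\log$ overhead. This mimics how Azar--Panigrahi--Vardi give each value range a block of cells and open new blocks lazily. A new point descends the hierarchy and is placed in its cluster's interval at the next free slot. Consecutive slots in that interval are spaced by the cluster's diameter, so feasibility holds within a cluster. Between different clusters, feasibility comes from placing intervals of clusters at level-$k$ separation at least $O(L/2^k)$ apart in time. We expect to lose one $O(\log N)$ factor from the number of levels, since each level contributes $O(L)$ total time. The second $O(\log N)$ factor comes from the APV-style amortization of reserving space before a cluster's size is known. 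That amortization is geometric doubling of reserved blocks per cluster, which leaves unused reserved space and fragments the window.

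\textbf{Main obstacle.} The hardest step is the second $\log$ factor: showing that lazily opening and doubling cluster intervals costs only $O(L\log N)$ per level. APV does this on the line using a tight recursive argument. In a general metric we lack a linear order, so we must charge each newly opened block to a distinct net point at that scale. We would do this through the $O(2^k)$ packing bound above. We must also check that new blocks can always be placed far enough in time from blocks of distant clusters, with gaps that sum to $O(L)$ per level.
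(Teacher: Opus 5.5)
\textbf{There is a genuine gap.} The step that actually proves the theorem is the one you call the ``main obstacle'' and leave open: that all points can be placed while the total time used stays $O(L\log^2 N)$. The mechanism you propose for it is not adequate as described.

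Your intervals are nested, since a level-$k$ cluster's interval is subdivided among its children. You control unused space by geometric doubling of reserved blocks per cluster. That makes each cluster's unused reservation comparable to what it uses. But what it uses is its children's reservations, which already contain their own unused space. Nothing in the sketch prevents this overhead from compounding multiplicatively over the $\Theta(\log N)$ nesting levels, instead of adding $O(L\log N)$ per level. To close the argument, the waste at each level must be bounded independently of usage and charged to well-separated points.

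You also never say how time positions are fixed once a cluster's blocks are fragmented and interleaved with other clusters' blocks. So the cross-cluster feasibility claim is unverified.

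The hierarchy itself is a further unaddressed issue. An online net must be both nested and globally separated, and the greedy rule you state gives only one of the two. Choosing a level-$k$ center among all centers breaks nesting. Choosing only among the children of the parent center breaks global $L/2^k$-separation, and with it your $O(2^k)$ count. This is exactly the loss-of-nesting obstacle the paper discusses.

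The paper resolves these points with fixed-capacity binary blocks and a depth-greedy rule. It uses $\ell=\lceil 2N/2^H\rceil=O(\log N)$ elementary trees of height $H$, with $H$ maximal such that $(H+1)2^H\le N$, and labels nodes by input points. A path is feasible for $x$ if it has no labeled node, or if its deepest label $y$, at height $h$, satisfies $d(x,y)\le 2^{h-H-1}$. The point $x$ is inserted along a feasible path of \emph{maximum labeled depth}. This rule makes the labels of partial nodes at height $h$ pairwise more than $2^{h-H-1}$ apart, so there are at most $2^{H-h+1}+1$ of them. Each accounts for at most $2^{h-1}$ empty leaves, so at most $(H+1)2^H\le N$ leaves are wasted and the $\ge 2N$ leaves suffice. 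Visit times follow DFS order, with gap $2^{h_j-H+1}$ at a boundary whose least common ancestor has height $h_j$. Each tree then spans $2H$, and feasibility follows from $d(x,x')\le 2^{h-H+1}$. The two logarithms are $O(\log N)$ trees times span $O(\log N)$ each.

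Your net-tree picture can be repaired the same way. Use a bottom-up online net with parent pointers: a new point attaches at the finest level where some center lies within that level's radius, and becomes a new center at all finer levels; this net is nested and globally separated. Give level-$k$ blocks capacity $2^{H-k}$, split into two slots, and always fill the deepest cluster on the point's chain that has a free slot. Then each cluster has at most one partially filled block, and the same counting applies.

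Finally, your doubling layer is wrong as stated. Updating $L\gets 2L$ and $N\gets N^2$ together on every trigger squares $N$ each time the $\OPT$ estimate doubles, for example with points at $1,2,4,8,\dots$. So the claim $\log N=O(\log n)$ fails for the last phase. Each guess must be updated only when its own bound is violated, as in the paper's separate $\Delta$-doubling and $N_{i+1}=N_i^2$ layers.
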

Our proof uses \textit{elementary trees}, recently introduced by Azar, Panigrahi, and Vardi~\cite{azar2026nearly} to study the online sorting problem of~\cite{aamand2023online}. However, to handle the case of general metric spaces we need several new ideas.
We point out that further improving the competitive ratio in \cref{thm:TSP:scheduling} will improve all the competitive ratios of the packing algorithms in \cref{thm:main}.

\paragraph*{Online TSP ordering.}
Abrahamsen, Bercea, Beretta, Klausen and Kozma~\cite{DBLP:conf/esa/AbrahamsenB0K024} introduced another natural generalization of the online sorting problem, that we call\footnote{The authors of~\cite{DBLP:conf/esa/AbrahamsenB0K024} dubbed this problem~\emph{Online TSP}. However, this term usually refers to the online traveling salesperson problem of Ausiello, Feuerstein, Leonardi, Stougie, and Talamo~\cite{ausiello2001algorithms} which is of a very different nature. To distinguish between the problems, we therefore use the suffix `\textsc{-ordering}' to indicate that the order of the points in the array determines the cost.} \OnlineTSPordering{} 
in this paper.

\begin{definition}[Online TSP Ordering]
Let $(M,d)$ be a metric space, $n$ a natural number, and $\eps\geq 0$. Points $x_1,\ldots,x_n\in M$ arrive online,
and upon arrival each $x_i$ must be assigned to an empty cell of an array $A$ consisting of $(1+\eps)n$ cells. The cost  of the ordering is $\sum_{i=1}^{n-1}d(z_i,z_{i+1})$ where $z_1,\dots,z_n$ are the points in non-empty cells of $A$ in left-to-right order.
\end{definition}
To see how \OnlineTSPscheduling~differs quantitatively from \OnlineTSPordering, it is instructive to consider this case where the metric space $M$ is the unit interval $[0,1]$ equipped with the usual Euclidean distance $d$.
Assuming both $0$ and $1$ appear as part of the input, there is a trivial 1-competitive algorithm for \OnlineTSPscheduling$[n,M,d]$ which places $x_i$ at position $p_i=x_i$\footnote{Without assuming that both 0 and 1 are part of the input, a simple guess-and-double strategy would give an $O(1)$-competitive algorithm.}. 
Conversely, when the metric space is the unit interval, \OnlineTSPordering$[n,\eps,M,d]$ is exactly the online sorting problem, \OnlineSorting$[n,\eps]$, for which it is known that any algorithm has competitive ratio $\Omega(\frac{\log n}{\log \log n})$ when $\eps=\Theta(1)$~\cite{aamand2023online}.  

Our techniques also apply to \OnlineTSPordering$[n,\eps,M,d]$ where we obtain the following result, making progress on a question by Bertram~\cite{DBLP:conf/esa/Bertram25}.
This result was obtained concurrently and independently by Azar, Panigrahi, and Vardi~\cite{azar2026beyond}.

\begin{restatable}{theorem}{mainTSP}\label{thm:TSP}
Let $\eps\in (0, 1]$. There exists an algorithm for
\OnlineTSPordering$[n,\eps,M,d]$ with competitive ratio
$O((\log^2 n)/\eps)$ if $\OPT$ is given as part of the input, and competitive ratio $O((\log^3 n)/\eps)$ if $\OPT$ is unknown to the algorithm.
\end{restatable}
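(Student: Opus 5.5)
We derive \cref{thm:TSP} from \cref{thm:TSP:scheduling}, first in the case where $\OPT$ is known. The plan is to run the $O(\log^2 n)$-competitive scheduling algorithm on the input and turn the visit times $p_i$ into array positions. Because $|p_i-p_j|\ge d(x_i,x_j)$, listing the points by increasing $p_i$ gives a path of length at most $\max_i p_i = O(\log^2 n)\cdot\OPT$; the difficulty is that the array is fixed in advance. Since $\OPT$ is known, the scheduling algorithm can be run in a version that keeps all times in $[0,T]$ with $T=C\log^2 n\cdot\OPT$. Cut $[0,T]$ into $m=\eps n/2$ buckets of length $L=T/m$, and give each bucket a contiguous block of cells in left-to-right order. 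A point with time $p_i$ goes into the block of its bucket.

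Two issues must be handled, and the second is the crux: within a bucket, points arrive in arbitrary order, and a bucket may receive many points. First, order inside a bucket. Any two points in the same bucket are at distance at most $L$ (their times differ by at most $L$), so any order inside a bucket costs at most $L$ per consecutive pair. This sounds expensive, but we charge it only to the pairs where the array order differs from the time order. Second, capacity. Points in one bucket have pairwise distinct times and pairwise distances at most $L$, but the count is not bounded a priori. I would therefore allocate cells by a recursive (binary-tree) scheme over the time axis, in the spirit of the elementary trees of Azar, Panigrahi, and Vardi. There are $\log n$ levels, each level reserves $\eps n/\log n$ spare cells, and an overflowing bucket borrows from its parent interval. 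The path cost then splits by level. At level $\ell$ there are $2^\ell$ intervals of length $T/2^\ell$, and each of them contributes $O(1)$ out-of-order jumps of length at most its length, plus a count term controlled by the slack. Summing gives $O(T\log n/\eps)$ in the worst case. Making this tight to $O(T/\eps)=O(\log^2 n\,\OPT/\eps)$ needs a sharper charging argument, and this is where I expect most of the work.

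An alternative I would try first, since it avoids the extra logarithm, is the cleaner charging argument used for online sorting with an $\eps n$ slack. Identify each point with its time $p_i\in[0,T]$; sorting by time then costs at most $T$ in the metric. Apply the online-sorting algorithm of Azar, Panigrahi, and Vardi to the reals $p_i$. A jump between consecutive array entries $p,p'$ costs $d\le |p-p'|$, so the metric cost is at most the sorting cost of the $p_i$. Their result gives sorting cost $O(T\cdot \mathrm{polylog}/\eps)$; I would redo their analysis to confirm it is $O(T/\eps)$ up to the one extra $\log$ already absorbed into our bound. This comparison between $d$ and $|p-p'|$ is what makes the reduction go through.

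For unknown $\OPT$, use guess-and-double. Maintain a guess $g$, and whenever the scheduling algorithm's running cost exceeds $C\log^2 n\cdot g$, double $g$. Each phase is then a fresh instance with a fresh subarray of size $(1+\eps)n_k$. The obstacle is that the phase sizes $n_k$ are unknown. I would instead split the array geometrically, allowing $O(\log n)$ phases (bounded, say, by using distances relative to the minimum nonzero scale and the aspect ratio via $n$), each with slack $\eps/\log n$. This costs a factor $\log n$ in $1/\eps$ and gives $O(\log^3 n/\eps)$, since phase costs form a geometric sum dominated by the last phase, and consecutive phases are joined by single jumps of cost $O(\OPT)$. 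Making the phase-wise array allocation work without knowing the $n_k$ is the second delicate step. I would handle it by letting each phase use its cells in the order of a global level-indexed reservation.
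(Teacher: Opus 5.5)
Your reduction is valid but lossy, and the step meant to recover the loss cannot work. The array cost is bounded by the online-sorting cost of the reals $p_i\in[0,T]$, and $T$ is already $\Theta(\log^2 n)\cdot\OPT$. With the $O((\log^2 n)/\eps)$ online-sorting algorithm of Azar, Panigrahi, and Vardi this gives $O(\OPT\log^4 n/\eps)$. Your bucket scheme gives $O(T\log n/\eps)=O(\OPT\log^3 n/\eps)$ by your own count.

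Getting sorting cost $O(T/\eps)$ by redoing their analysis is ruled out. \OnlineSorting$[n,\eps]$ with $\eps=\Theta(1)$ has an $\Omega(\log n/\log\log n)$ lower bound on the competitive ratio. Your argument uses nothing about the $p_i$ beyond $d(x_i,x_j)\le|p_i-p_j|$, so at best it gives $\Omega(\log^3 n/\log\log n)\cdot\OPT$.

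The missing idea is to run the metric elementary trees directly on the array, instead of on the time axis followed by a second sorting step:
\begin{itemize}
\item Split the array into $\ell=(1+\eps)n/2^H$ blocks of $2^H$ cells, with $H$ maximal such that $(H+1)2^H\le\eps n$, so $\ell=O((\log n)/\eps)$.
\item Build a tree over each block and insert along a feasible path of maximum labeled depth.
\item \cref{claim:sched:partial:separation}, together with the fact that a $\delta$-separated set has at most $1/\delta+1$ points when $\OPT=1$, bounds the empty cells in non-empty trees by $(H+1)2^H\le\eps n$. So all $n$ points fit.
\item Within one tree, consecutive non-empty cells map injectively to their least common ancestors. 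A pair whose ancestor has height $h$ is at distance at most $2^{h-H+1}$ by \cref{claim:sched:tree:distance}, so each tree costs at most $2H$.
\item Adding at most $1$ per transition between trees gives $\ell(2H+1)=O((\log^2 n)/\eps)$.
\end{itemize}

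For unknown $\OPT$, two steps of your plan are unsupported.
\begin{itemize}
\item The number of doubling rounds is not $O(\log n)$. In a general metric the aspect ratio is not bounded in terms of $n$, and each arrival can multiply $\OPT$ by an arbitrary factor. So there can be up to $n$ rounds, and a per-round slack of $\eps/\log n$ does not sum to at most $\eps n$.
\item You do not specify how a round of unknown size gets its cells; a global level-indexed reservation is exactly the difficulty, not a solution.
\end{itemize}

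The paper resolves this by making cell usage proportional to actual arrivals. Within an epoch with guess $\Delta_i$, the first $N_0=\Theta((\log n)/\eps)$ points are placed consecutively. The rest go into phases of capacities $N_0,2N_0,4N_0,\dots$, each using trees with slack $\eta N$ ($\eta=\eps/10$) that are opened lazily, left to right, only when no feasible path exists. A phase receiving $q$ points uses at most $q+\eta N$ cells. Because capacities double, an epoch with $n_i$ points uses at most $(1+2\eta)n_i$ cells, however many epochs occur. Each epoch has $O(\log n)$ phases of cost $O(\Delta_i\log^2 n/\eps)$, which is where the third logarithm comes from, and $\sum_i\Delta_i=O(\OPT)$.
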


It is worth noting that one can obtain a weaker version of~\cref{thm:main}, with competitive ratio $O(\log^3 n)$, by a careful application of the $O((\log^2 n)/\eps)$ competitive algorithm for \OnlineSorting$[n,\eps]$ in~\cite{azar2026nearly}.
The advantage of applying the algorithm of~\cref{thm:TSP:scheduling} for \OnlineTSPscheduling$[n,M,d]$ is not merely the improved competitive ratio, but also that the proof of~\cref{thm:main} becomes very clear. While the online sorting lower bound was useful for the lower bound in~\cite{aamand2023online}, it seems that online TSP scheduling is a more natural problem for designing online geometric packing algorithms.
In the three-dimensional online disk packing problem, discussed below, we use the full power of~\cref{thm:TSP:scheduling}.

\paragraph*{Online translational packing of hyperdisks.}
To illustrate the flexibility of our result on online TSP scheduling, we show how it can be applied to higher dimensional translational online packing problems. When the pieces to be packed can be rotated or are axis-aligned boxes, various results in the literature establish approximation algorithms for packing such convex pieces into minimum volume containers; see, e.g., \cite{diedrich2008approximation, JansenK0ST25, Jansen0LS22, jansen2014new, DBLP:journals/dcg/LassakZ91}. 

In contrast, the translational setting of the problem is much harder.
Even in the offline setting, no known algorithms achieve non-trivial approximation factors for, e.g., \textit{volume packing} of convex polyhedra, where the goal is to pack the pieces in $\mathbb{R}^3$ minimizing the volume of their axis-aligned bounding box.
On the other hand, going beyond axis-aligned boxes, Alt, Cabello, Cheong, Park, and Seiferth~\cite{AltCCPS26} 
presented a constant factor approximation algorithm for translational packing of $d$-dimensional unit hyperdisks into a minimum-volume axis-aligned box in $\mathbb{R}^{d+1}$. When $d=2$, this corresponds to packing 
unit disks (of given normal vectors) in three dimensions. 
To obtain their results, they used techniques very different from methods for packing axis-aligned boxes.
In particular, they showed that the set of disks defines a certain metric TSP problem whose solution yields a constant factor approximation algorithm for packing the disks.
In this paper, we show that our algorithm for online TSP scheduling can be used to produce an online packing of disks by translation with only an $O(\log^2 n)$ blowup in volume used.
Concretely, we consider volume packing, but with minor modifications, our techniques can also be used to obtain analogous results for, e.g., strip packing into $[0,3]^{d}\times [0,\infty)$.

\begin{restatable}{theorem}{mainDisks}\label{thm:packing:3d:disks}    For every $d\in \mathbb{N}$, there is a deterministic online algorithm for packing a sequence of $n$ pairwise non-parallel 
unit hyperdisks in $\mathbb R^{d+1}$ into an axis-aligned box of minimum volume with competitive ratio $O_d(\log^2 n)$.
    The algorithm does not need to know $n$ in advance.
\end{restatable}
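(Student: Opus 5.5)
We reduce the packing problem to \OnlineTSPscheduling\ and invoke \cref{thm:TSP:scheduling}, following the offline approach of Alt, Cabello, Cheong, Park, and Seiferth~\cite{AltCCPS26}.

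Each unit hyperdisk $D_i$ is determined, up to translation, by its unit normal vector $u_i\in S^d$, taken modulo $\pm$. The plan is to stack the disks along a single axis, the $x_{d+1}$-axis, each centred on that axis. The centre of $D_i$ is placed at height $p_i$.

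\textbf{Step 1: the metric.} For two non-parallel unit hyperdisks centred on the same line, let $d(u_i,u_j)$ be the minimum vertical separation $|p_i-p_j|$ at which they have disjoint interiors. This is a function of the angle between $u_i$ and $u_j$ (and their angles to the axis), and it is roughly proportional to the sine of the angle between the hyperplanes, which is at most $2$. I would show that $d$, or a quantity within a constant factor of it such as $c\cdot\min(\|u_i-u_j\|,\|u_i+u_j\|)$ with the axis chosen so that all disks are not too steep, is a metric on the projective space $\mathbb{RP}^d$ and dominates the true separation requirement.

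Disks nearly parallel to the axis are a problem, since they are tall. Following~\cite{AltCCPS26}, I would partition the normals into $d+1$ classes according to which coordinate axis is most aligned with $u_i$. Each class is stacked along its own coordinate axis in a separate column. This keeps every disk's extent along the stacking axis bounded by $O_d(1)$, and its cross-section within a box of side $2$. The resulting columns are then placed side by side; this costs only a factor $O_d(1)$ in volume, as long as each column's length is at least a constant. That is fine, because any nonempty packing has volume $\Omega_d(1)$.

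\textbf{Step 2: online algorithm.} For each class, run the \OnlineTSPscheduling\ algorithm of \cref{thm:TSP:scheduling} on the metric space of normals with metric $d$, placing the centre of $D_i$ at height $p_i$ plus an offset of $O(1)$ to absorb disk thickness. The feasibility constraint $|p_i-p_j|\ge d(x_i,x_j)$ guarantees disjointness. The column height is then $O(\log^2 n)\cdot \mathrm{TSP}_{\mathrm{class}}+O(1)$. Column heights grow online, but the bounding box is determined at the end, so no doubling is needed: we simply place columns in disjoint slabs of fixed cross-section.

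\textbf{Step 3: lower bound on OPT.} This is the main obstacle. We must show that any packing of the disks has volume $\Omega_d(\mathrm{TSP}(u_1,\dots,u_n)+1)$, where TSP is computed in the metric $d$. I would import this from~\cite{AltCCPS26}, whose constant-factor approximation relies precisely on such a lower bound. If it must be reproved, the argument would go as follows. Many disks with pairwise-close normals behave like near-parallel slabs, so they need volume proportional to their count, while disks spread out in normal space need volume reflecting the spread. This is formalized by a packing/net argument: in an optimal packing, a $1/k$-net of directions bounds the TSP by the number of occupied direction cells, and each cell's disks occupy $\Omega(1/k)$ volume apiece. Combining, the total volume is $O_d(\log^2 n)\cdot\OPT$.
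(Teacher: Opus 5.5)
There is a genuine gap in how you combine the classes. You stack each class $i$ in a column along its own axis $e_i$, and a column cannot be reoriented under translations. So any axis-aligned box containing all the columns has extent at least $L_i$ in direction $e_i$ for every nonempty class $i$. Its volume is therefore of order $\prod_i L_i$ over the nonempty classes, not $O_d(\sum_i L_i)$, and placing the columns in ``disjoint slabs'' does not change this. For instance, take two classes whose optimal stabbing lengths are both $T\gg 1$. Each online span is at least $T$, so your packing has volume $\Omega(T^2)$, whereas $\OPT=O_d(T)$ (e.g.\ by the cube construction below applied to optimal orders). The ratio is therefore unbounded.

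The missing idea, which is what the paper does, is to discretize. Cut each class's stabbing line into unit intervals $[k,k+1)$. Open a hypercube of side $3$ for every interval that receives a disk, centring the disk with coordinate $t_i(D)$ at $(\tfrac32,\dots,\tfrac32)+(t_i(D)-k-\tfrac12)e_i$ so that it fits. Then stack all hypercubes of all classes along one common axis. The volume is $3^{d+1}$ times the number of cubes, which is at most $1+\sum_i(\lceil S_i\rceil+2)$. Here $S_i=O_d(\OPT\log^2 n)$ by \cref{thm:TSP:scheduling} together with the per-class bound $\OPT_{\mathrm{stab}}(\mathcal D_i,e_i)\le 2(d+1)^d\OPT_{\mathrm{AABB}}(\mathcal D)$ from~\cite{AltCCPS26}.

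This discretization, like your ``$+O(1)$'' offset, produces additive constants, and your justification for absorbing them is false: a nonempty packing need not have volume $\Omega_d(1)$. A single disk with normal $e_{d+1}$ has a bounding box of volume $0$. Likewise, $n$ pairwise non-parallel disks with normals within angle $\theta$ of $e_{d+1}$ can be stacked in volume $O_d(n\theta)$. So the ``$+1$'' in your Step~3 lower bound is also wrong.

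The paper handles this with an initial phase. As long as all disks lie in one class $i_1$ and the online span is at most $1$, they are placed directly on a single stabbing line. If this phase never ends, the volume is at most $(E_{i_1}+S_{i_1})\prod_{j\ne i_1}E_j$, where the true extents satisfy $\prod_j E_j\le \OPT$ and $E_j\le 2$. The algorithm switches to cubes only once $\OPT\log^2 n=\Omega_d(1)$ is certified. Either a second class has appeared, which forces every coordinate extent to be at least $\sqrt2$, or the span has exceeded $1$, which gives $1<S_{i_1}=O_d(\OPT\log^2 n)$.

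With these two repairs, the rest of your plan agrees with the paper: the per-class metric $d_{e_i}$, online scheduling via \cref{thm:TSP:scheduling}, and importing the stabbing lower bound from~\cite{AltCCPS26} per class rather than as a single global TSP bound.
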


The following theorem shows that it is impossible to generalize the result to a $\polylog n$ competitive algorithm for disks of arbitrary radii.

\begin{restatable}{theorem}{lowerDisks}\label{thm:arbitrary:radius:hyperdisk:lower}
For every fixed $d\ge 2$, every deterministic online algorithm for translational volume packing of $d$-dimensional hyperdisks of arbitrary
radii in $\mathbb R^{d+1}$ has competitive ratio
\[
    \Omega_d\left(n^{\frac{d-1}{d(d+1)}}\right).
\]
\end{restatable}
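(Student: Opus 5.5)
We prove the lower bound with an adversary that exploits the sharp contrast between small and large hyperdisks. Fix $d\ge 2$ and a parameter $k$, to be chosen as a power of $n$. The adversary works in two phases.

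\textbf{Phase 1 (small disks).} The adversary sends many tiny hyperdisks of radius $r$, with pairwise different normal vectors chosen from a well-separated set of directions. Each small disk has $d$-volume $\Theta(r^d)$ but has ``thickness'' zero, so offline they can be stacked in a thin slab. The aim is to force the algorithm, which must commit to positions immediately, to spread them out. If the algorithm packs them densely, they are packed inside a small region.

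\textbf{Phase 2 (large disks, depending on Phase 1).} The adversary then sends large disks of radius $R$, with normals chosen adversarially relative to the Phase 1 placement. A large disk of radius $R$ with normal $u$ placed in a box occupies a ``shadow'': any small disk it intersects must avoid it. The adversary chooses the normals so that the large disks cannot pass through the region used by the small disks. The online box therefore has to be enlarged to contain both a large-disk configuration and the small-disk cluster separately. Offline, in contrast, the small disks can be tucked into gaps between nearly parallel large disks, for example into the wedge-shaped space between large disks with slightly different normals.

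\textbf{Volume comparison.} We would compare volumes as follows. Offline, $m$ large disks of radius $R$ with pairwise non-parallel but nearly equal normals fit in a box of volume $\Theta(R^{d}\cdot(R\theta m))$, where $\theta$ is the angular spread. The small disks fit in the leftover space, which has volume about $R^{d+1}\theta$ per gap. Online, we want to force volume about $R^{d}\cdot s$, where $s$ is an unavoidable extra side length.

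Getting these estimates right requires balancing $r$, $R$, $\theta$, and the numbers of small and large disks. Optimizing the parameters should yield the exponent $\frac{d-1}{d(d+1)}$. A natural guess is that $R/r\approx n^{1/(d+1)}$, with the $d-1$ coming from degrees of freedom on the sphere of normals, $S^{d}$ having dimension $d$.

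\textbf{Alternative and main obstacle.} An alternative, likely cleaner, route is to adapt the online sorting reduction of~\cite{aamand2023online}. There, thin parallelograms whose slopes encode numbers force any online algorithm to sort. Here the normals of the hyperdisks would encode numbers, and large radius variation would make the cost of misordering large. The main obstacle in either approach is the geometric lemma that two large disks with normals differing by angle $\alpha$ cannot be stacked closer than $\Omega(R\alpha)$ apart unless they are nested appropriately. Combined with a pigeonhole argument on the region of the box used by the algorithm, this lemma must yield the polynomial loss. I would first prove this lemma and then tune the parameters to obtain $\Omega_d\bigl(n^{\frac{d-1}{d(d+1)}}\bigr)$.
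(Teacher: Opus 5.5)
Your proposal is a research plan rather than a proof, and it is missing the idea that drives the lower bound. In Phase~1 you only state the \emph{aim} of forcing the algorithm to spread the small disks out, and you even consider the case where it packs them densely. Nothing in your argument gives a lower bound on the algorithm's bounding box after Phase~1, so the claim that ``optimizing the parameters should yield the exponent'' has nothing to rest on.

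The missing ingredient is an \emph{offline} packing lower bound. By Alt, Cabello, Cheong, Park, and Seiferth, there is a family of $n-1$ unit hyperdisks such that every convex container into which they can be packed by translation has volume $\Omega(n^{(d-1)/d})$. So the algorithm cannot pack them densely. Whatever it does, its axis-aligned bounding box after these arrivals has side lengths with $\prod_j L_j=\Omega(n^{(d-1)/d})$. Hence some coordinate direction $e_i$ has $L_i=\Omega\bigl(n^{\frac{d-1}{d(d+1)}}\bigr)$. This is where the exponent comes from: $\frac{d-1}{d}$ from that volume bound, and $\frac{1}{d+1}$ from pigeonholing over the $d+1$ coordinate directions. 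It does not come from a ratio $R/r\approx n^{1/(d+1)}$; the right choice is $r=1$, $R=n^{1/d}$.

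Once you have this, the second phase is a single disk and needs no stacking lemma. Present one hyperdisk of radius $n^{1/d}$ with normal $e_i$. Placed disks cannot move and the cost is the bounding-box volume. So the final box has side at least $L_i$ along $e_i$ and at least $2n^{1/d}$ along every other axis, giving volume at least $2^d n L_i$. Whether the large disk ``passes through'' the small-disk region is irrelevant. Your claim that it cannot do so is also unjustified, since the algorithm may place it anywhere disjoint from the small disks.

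Offline, the big disk is flat in direction $e_i$. Place the unit disks with centers on a $d$-dimensional grid of spacing $2$, inside a slab of thickness $O(1)$ along $e_i$ and extent $O(n^{1/d})$ in the other directions. This gives optimum $O(n)$ and ratio $\Omega(L_i)$; no tucking of small disks into wedges between nearly parallel large disks is needed.

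Your alternative via online sorting is unlikely to give a polynomial bound. The sorting-to-packing reduction of Aamand et al.\ transfers lower bounds for online sorting with extra array space. Those are at most polylogarithmic, since such sorting is $O(\log^2 n)$-competitive by Azar, Panigrahi, and Vardi. The polynomial loss has to come from adaptively orienting one huge disk against an unavoidably elongated box, which your plan never isolates.
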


\subsection{Technical Overview}
Here we outline our techniques used to obtain our main results.

\paragraph*{New Bounds for Online Translational Packing Problems.}
We start with sketching how we use~\cref{thm:TSP:scheduling} to obtain~\cref{thm:main}. Full proofs appear in~\cref{sec:box-packing} and~\cref{sec:other-packings}. Let $\mathcal{I}=(P_1,\dots,P_n)$ be an input sequence of convex polygons, and let $A$, $W$, $H$ denote, respectively, the total area of the pieces in $\mathcal{I}$, and the maximum width and height of any piece in $\mathcal{I}$. Here the \textit{width} and \textit{height} of a piece $P$ are,  respectively, the maximum distance between $x$-coordinates and $y$-coordinates of any two points of $P$. First, one can show through standard arguments that one can replace each arriving piece $P$ by a slightly larger piece $P'\supset P$ such that $P'$ is a parallelogram with a pair of horizontal edges and a height that is a power of two. 
Moreover, the area, width, and height of~$P'$ are within constant factors of the area, width, and height of $P$. It therefore suffices to design an online algorithm for packing such parallelograms. We partition these parallelograms according to height, letting~$\mathcal{I}_h$ denote the arriving pieces that have height $h$ (which is a power of two). As a subroutine, we use an online algorithm to pack the arriving elements of $\mathcal{I}_h$ into a strip $S_h:=[0,\infty)\times [0,h]$. Note that there is a natural metric~$d_h$ on $\mathcal{I}_h$ such that for $P_1,P_2\in \mathcal{I}_h$, $d_h(P_1,P_2)$ is the minimum possible distance between the centers of $P_1,P_2$ when packed in $S_h$. We can thus use the algorithm of~\cref{thm:TSP:scheduling} to pack the arriving pieces of $\mathcal{I}_h$ into $S_h$. If this algorithm schedules $P$ at $x\in [0,\infty)$, we pack $P$ with its center at $(x,\frac{h}{2})$\footnote{A small caveat is that the pieces may extend beyond the leftmost boundary of the strip, but this issue turns out to be easy to address.}.  The requirement that $|p_i-p_j|\geq d_h(P_i,P_j)$ in \OnlineTSPscheduling$[n,M,d]$ exactly guarantees that the pieces are non-overlapping, and we obtain a packing that within polylogarithmic factors is that of the minimum TSP cost of $\mathcal{I}_h$. Let $A_h$ denote the total area of the pieces of $\mathcal{I}_h$ and $W_h$ denote the width of the largest piece in $\mathcal{I}_h$. By sorting the pieces in $\mathcal{I}_h$ by slope, it is easy to see that a shortest TSP path costs $O(A_h/h+W_h)$, so we obtain a strip packing into $S_h$ of cost $O( (A_h/h+W_h)\log^2 |\mathcal{I}_h|)$, or equivalently, such that the area of the occupied strip is $O( (A_h+hW_h)\log^2 |\mathcal{I}_h|)$

If all pieces had height $1$, this would give~\cref{thm:main}.
However, all pieces could be of a tiny height $h$, and then it is necessary to split $S_h$ into boxes that can, e.g., be stacked on top of each other. To capture what we want of this splitting, we introduce the notion of \emph{box packing}, a problem that might be of independent interest. In box packing, as pieces arrive, an algorithm can either pack them into existing boxes or open new axis-aligned bounding boxes of arbitrary width and height for the packing. Our goal is to minimize the total area of opened boxes, while keeping the maximum width and height of any opened box within a constant factor of the maximum width and height of the  input pieces. 
By using a guess-and-double strategy for $W_h$, we show that each strip $S_h$ can be split into boxes online, without increasing the area used by more than a constant factor and such that the width of any box is $O(W_h)$, and the height of any box is $h$. Doing so for each height class $h$, we obtain a box packing algorithm such that any opened box has width $O(W)$ and height $O(H)$ and such that the total area of the boxes is 
\[
O\left(\sum_h(A_h+hW_h)\log^2 |\mathcal{I}_h|\right)=O\left((A+WH)\log ^2n\right) .
\]

The reason the box packing result is interesting in its own right is that it can be combined in a black-box fashion with known algorithms for strip packing, perimeter packing, and area packing of axis-aligned rectangles. Focusing here on strip packing, there are standard constant factor competitive algorithms for online strip packing of \textit{axis-aligned rectangles}~\cite{baker1983shelf,coffman1980performance}. If the total area of the rectangles is $A_R$ and the maximum width of a rectangle is $W_R$, these algorithms guarantee a packing of cost $O(W_R+A_R)$. We feed the boxes initialized to this algorithm, and thus incur cost
\[
O(W+(A+WH)\log^2 n)=O((A+W)\log ^2n),
\]
where we used that $H\leq 1$ in a strip packing instance.
As the cost of any strip packing is $\Omega(A+W)$,~\cref{thm:main}~(\ref{item:0}) follows.
Similarly, applying known algorithms for perimeter packing and square packing of axis-aligned rectangles~\cite{AbrahamsenBeretta20}, we obtain ~\cref{thm:main}~(\ref{item:ii}) and~(\ref{item:iii}).
For the bin packing result, ~\cref{thm:main}~(\ref{item:i}), we instead use a black-box reduction from online strip packing.
We view the strip as covered with overlapping unit squares of the form $[ k\delta, k\delta +1]\times [0,1]$ for $k\in\mathbb{N}$. With the assumption that the arriving pieces have width at most $1-\delta$, there will always be some such square that fully contains an arriving piece in which the piece is packed. 

We present our result on box packing in~\cref{sec:box-packing}, and apply it to obtain~\cref{thm:main} in~\cref{sec:other-packings}.

\paragraph*{Online Metric TSP Scheduling.}
We next discuss the proof of~\cref{thm:TSP:scheduling}, which appears in~\cref{sec:TSP-scheduling}.
Our algorithm uses the idea of~\emph{elementary trees} introduced by Azar, Panigrahi, and Vardi~\cite{azar2026nearly} for their analysis of the online sorting problem, but with several modifications. We first describe their elementary trees.

Let $m=2^H$ for some positive integer $H$. An elementary tree $T$ of size $m$ is a rooted perfect binary tree of height $H$ with $m=2^H$ leaves.
We say that the root node $r$ of $T$ has height $H$ and if $v$ is a non-leaf node of height $h$, we say that its two children have height $h-1$. In this way, leaves are of height~$0$. Elements from $[0,1]$ can be inserted into leaves of an elementary tree. During an insertion, some of the internal nodes of $T$ become \emph{labeled} with dyadic intervals $I\subset [0,1]$. A node of height $h$ can be labeled with a dyadic interval of length $2^{h-H}$. To insert an element $x\in [0,1]$, we look for the leftmost leaf $v$ such that for any node with any label $I$ on the unique path from $r$ to $v$, it holds that $x\in I$. Then $x$ is inserted at $v$ and all unlabeled nodes on the $r$-to-$v$ path are labeled with the dyadic interval of the appropriate length containing $x$. It is possible that one cannot insert an element $x$ at any node, and in this case we open a new elementary tree, always attempting to insert a new element in a leftmost possible such tree. Azar, Panigrahi, and Vardi~\cite{azar2026nearly} prove the following result.

\begin{lemma}
\cite[Lemma~3.5]{azar2026nearly}
\label{lemma:elementary:trees} 
At any time during the insertion process, the number of unused leaves (over all opened elementary trees) is at most $H2^H$.
In particular, after initializing $2H$ elementary trees, one can always insert at least $n=H2^H$ elements.
\end{lemma}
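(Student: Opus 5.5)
The plan is to charge every unused leaf to a pair consisting of a height $h\in\{1,\dots,H\}$ and a dyadic interval of length $2^{h-H}$. There are $2^{H-h}$ such intervals at height $h$, and we aim to show that each pair is charged at most $2^h$ times. Summing then gives at most $\sum_{h=1}^{H}2^{H-h}\cdot 2^h=H2^H$ unused leaves.

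\textbf{Setup.} First I record the structural invariant of the process. In each opened tree, the filled leaves form a prefix in left-to-right order: a leaf is only ever skipped when an ancestor label excludes the element. Moreover, labels are never changed once set. Call a leaf \emph{unused} if it is empty and lies to the left of some filled leaf, or lies in a tree other than the last opened one. Empty leaves in the rightmost tree, to the right of all filled leaves, still count as unused. I will treat them separately: there are at most $2^H$ of them, and they can be absorbed by taking care at the top level.

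\textbf{Assigning blockers.} For an empty leaf $v$ that some later element skipped, consider the maximal empty subtree containing $v$. Its root $u$ is the child of a labeled node $w$, say at height $h$, whose other subtree is nonempty. Since $u$ is empty and lies inside $w$, the node $w$ is labeled with some interval $I$. I charge the entire empty subtree under $u$, which has at most $2^{h-1}$ leaves, to the node $w$ and hence to the pair $(h,I)$.

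\textbf{Main step: bounding charges per pair.} The key claim is that for each pair $(h,I)$ there is at most one node $w$ with label $I$ at height $h$ that has an empty subtree at the time a later element is placed elsewhere. Suppose two such nodes $w$ and $w'$ carry the same label $I$, with $w$ to the left, either in an earlier tree or earlier in the same tree. When the element that caused $w'$ to become labeled arrived, that element lay in $I$. All labels on the path to the empty child of $w$ are consistent with $I$: the ancestors of $w$ hold intervals that contain $I$, by nestedness of dyadic intervals along a root path, and the empty child has no labels below it. So the leftmost-insertion rule would have placed that element into the empty part of $w$, not into $w'$. This contradicts the choice of $w'$.

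Hence each pair $(h,I)$ owns at most one partially empty node, which is charged at most $2^{h-1}\le 2^h$ leaves. Summing over the pairs gives the bound $H2^H$.

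\textbf{Main obstacle.} The delicate point is that this argument needs care with the timing of events. A node $w$ could become half-empty only after $w'$ was already labeled. I would handle this by charging only the \emph{current} state: at any fixed time, take any two nodes with the same label $I$ that both currently have an empty child. The argument still applies, because the later-filled of the two had to be filled at a moment when the empty child of the other was already available. That child was empty then, because emptiness is monotone backward in time.

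\textbf{Second statement.} With $2H$ trees there are $2H\cdot 2^H$ leaves in total. At most $H2^H$ of them are unused, so at least $H2^H=n$ insertions succeed before any new tree is needed.
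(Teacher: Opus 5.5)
Your counting framework matches the partial-node argument the paper sketches for this cited lemma. You charge each empty leaf of a non-empty tree to its nearest labeled ancestor $w$, which then has exactly one nonempty child, and show each pair $(h,I)$ owns at most one such node. The gap is in that key claim, in the timing case you flag yourself. Let $w$ be to the left of $w'$, both at height $h$ with label $I$, and both currently with an empty child. Applying the leftmost rule to the later-labeled of the two gives a contradiction only if that node is $w'$, because then the empty child of $w$ was a feasible slot to its left. If instead $w$ was labeled after $w'$, the empty child of $w'$ was available then but lies to the right. The leftmost rule correctly prefers $w$, and nothing is contradicted. So your claim that ``the argument still applies'' is false in exactly this case.

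The missing case has to be argued at the earlier time, when $w'$ was labeled by some $y\in I$. At that moment $w$ and all its descendants were unlabeled. Every ancestor of $w$ that was already labeled carries, permanently, an interval containing $I$, by nestedness of the labels on the path to $w$ now. So the path to the leftmost leaf under $w$ was feasible for $y$, and that leaf lies to the left of $w'$, contradicting the leftmost choice. This backward-in-time use of nestedness is precisely the step the paper identifies as breaking in the metric setting, which is why it switches to maximum labeled depth. It is the heart of the lemma, not a formality.

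Separately, your setup invariant is false. With $H=2$, inserting $0.1,0.7,0.2$ fills the first, third, then second leaf, so filled leaves need not form a prefix. You don't need the invariant: your maximal-empty-subtree charging already covers every empty leaf of a non-empty tree, trailing ones included. With $2^{h-1}$ leaves per charged node this gives $\sum_{h=1}^H 2^{H-h}2^{h-1}=H2^{H-1}\le H2^H$. By contrast, adding $2^H$ trailing leaves on top of your $2^h$-per-pair bound would overshoot $H2^H$.
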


They then showed that letting the leaves of each tree in DFS order correspond to consecutive entries of the array in the online sorting problem gives a low cost solution for the online sorting problem. The lemma above guarantees that picking $H$ such that $H2^H=\eps n$, all $n$ elements can be inserted. In particular, they only need to open $n(1+\eps)/2^H=O(H/\eps)=O((\log n)/\eps)$ such trees.

We now discuss the core challenges in generalizing this idea to arbitrary metric space for the online TSP scheduling problem. In this discussion, for simplicity, we assume that $n$ and  $\OPT$ are known. The case where they are not can be handled relatively easily by guess-and-doubling tricks. By rescaling, we may assume that $\OPT=1$.

\underline{\textit{Node labels}:} A natural attempt of a labeling scheme for general metric spaces, is to label internal nodes with elements of the metric. When an element $x$ is inserted at some leaf $v$, all unlabeled nodes on the $r$ to $v$ path get labeled by $x$. Supposing that $u$ is a node of height $h$ with a label $x$, we think of the label as the radius $2^{h-H}$ ball $B(x,2^{h-H})$. For an unlabeled leaf $v_0$, let $v_0,v_1,\ldots,v_H$ be the path from this leaf to the root, where $v_h$ has height $h$. The \textit{labeled depth} of the path
is the number of labeled nodes on it. The path is feasible for $x$ if
either no node on the path is labeled, or, letting $v_h$ be the deepest
labeled node on the path and $y$ its label, we have
\[
    d(x,y)\le 2^{h-H-1}.
\]
If this is the case, then $x$ could be inserted at $v_0$.

\underline{\textit{How to prioritize feasible paths}:} The proof of~\cref{lemma:elementary:trees} proceeds by bounding the number of \textit{partial nodes} at a given height $h$, where a partial node is a node where exactly one of its children is labeled. Essentially, no two such height $h$ partial nodes $v_1,v_2$, with say $v_1$ to the left of $v_2$, can have the same label. Otherwise, the element labeling $v_2$ could have been inserted under the rightmost child of $v_1$, contradicting that we always pick the leftmost feasible path. Thus the number of partial nodes can be at most $2^{H-h}$. In our setting, we would like to argue that for any two partial nodes at height $h$ with labels $x,y$, it holds that $d(x,y)\geq 2^{h-H}$. If there were more than $2^{H-h}+1$ such partial nodes, then this is a witness that the shortest TSP-path has length $>1$, a contradiction. Unfortunately, the bound $d(x,y)\geq 2^{h-H}$ does not hold if we use the leftmost feasible path when inserting an element. For the elementary trees in~\cite{azar2026nearly}, the argument crucially relies on the dyadic labels along a path being \textit{nested}. To see the problem, in the metric setting, let $v_1$ and $v_2$ be labeled by $x_1$ and $x_2$. Even if $d(x_1,x_2)<2^{h-H}$, if $x_2$ is inserted before $x_1$, we cannot conclude that at the time $x_2$ was inserted, there was a feasible path through $x_1$ --- without the nesting, $x_2$ lying in the label of $x_1$, does not imply that $x_2$ lies in labels of the ancestors of $x_1$. All attempts to resolve this issue solely by changing the labeling or the feasibility definition, run into the same issue. 

To overcome this issue, we prioritize inserting elements at large \textit{labeled depth}, dispensing entirely with the idea of going as far left as possible. With the above notation, it is easy to see that $d(x_1,x_2)>2^{h-H}$. If not, the insertion of the latest inserted element $x_i$ caused the labeling of $v_i$, which means that it could have been inserted under $v_{1-i}$ along a feasible path and at greater labeled depth, a contradiction. 

\underline{\textit{Scheduling points in $[0,\infty)$}:} The above argument gives us an analogue of~\cref{lemma:elementary:trees}. In particular, we can insert all $n$ arriving points in $O(\log n)$ {\em metric elementary trees}. It remains to describe how points are scheduled in $[0,\infty)$. For this, we let the leaves $v$ of the trees corresponding to points $x_v\in [0,\infty)$. Label the leaves of a tree $T$ in DFS order $v_1,\dots, v_{2^H}$. For two nodes $v_i,v_{i+1}$ with a least common ancestor of depth $h$, we ensure that  $x_{v_{i+1}}-x_{v_i}=2^{h-H+2}$. When trees are inserted, they are pushed as far left as possible on $[0,\infty)$ with the requirement that we leave space $1$ on $[0,\infty)$ between two trees. Summing over all depths and all trees, the total space used is $O(\log^2 n)$. Finally, by a geometric sum argument, if $x_1,x_2$ are inserted with least common ancestor $v$ of height $h$ labeled with some $y$, then by the feasibility of paths, $d(x_i,y)\leq 2^{h-H+1}$, for $i=1,2$, so $d(x_1,x_2)\leq 2^{h-H+2}$ implying that we are not violating the online TSP scheduling constraints. 

The elementary-tree framework also yields our result for
\OnlineTSPordering\ (\cref{thm:TSP}); since this is not needed for the packing applications, we
defer the details to \cref{sec:TSP}.

\paragraph{Packing disks.}
Finally, in \cref{ssec:disks}, we apply \cref{thm:TSP:scheduling} to obtain \cref{thm:packing:3d:disks} following the approach of
Alt, Cabello, Cheong, Park, and Seiferth~\cite{AltCCPS26}. 
For a fixed direction $s$, they define the \textsc{DiskStabbing} problem: the centers of the hyperdisks must lie on a line of direction $s$ and the goal is to minimize the length of the occupied segment. For two hyperdisks $D, D'$, let $d_s(D, D')$ be the separation needed along this line so that the two translated hyperdisks are non-overlapping. The key idea is that $d_s$ is a metric, hence \textsc{DiskStabbing} is a metric TSP path instance.

They then reduce volume packing to $O(d)$ of these stabbing problems. A hyperdisk with unit normal vector $u=(u_1,\ldots,u_{d+1})$ is assigned to a coordinate direction $e_i$ with maximum $|u_i|$. For each resulting
class, the optimum stabbing length in direction $e_i$ is within a (dimension-dependent) constant factor of the optimum axis-aligned bounding-box volume. After computing the stabbing order, the stabbing line is cut into unit-length portions, and these are rearranged into an axis-aligned container.

Our online algorithm uses the same idea, replacing only the offline TSP-path step. For each coordinate direction $e_i$, we can run the online TSP scheduling algorithm on the metric $d_{e_i}$. When a hyperdisk $D$ arrives, the algorithm assigns it a coordinate on the stabbing line. The scheduling constraints $|p(D) - p(D')| \geq d_s(D, D')$ then guarantee that the corresponding hyperdisks are disjoint and the total span of the stabbing line is within $O_d(\log^2 n)$ factor of the optimum stabbing length in that direction.

We partition each stabbing line into unit intervals and open one constant-size hypercube for every such interval, placing the hyperdisks into these hypercubes and stacking the hypercubes online. Now we can compare this solution to the minimum value of an axis-aligned bounding box $\OPT$:
Since there are $d+1$ classes and each stabbing optimum is bounded by $O_d(\OPT)$, the number of hypercubes and hence used volume is $O_d(\log^2 n)$, giving the $O_d(\log^2 n)$-competitive algorithm.

\subsection{Preliminaries}
\paragraph{Competitive ratios.}
We use the standard notion of competitive ratio for online minimization problems. For such a problem, let $\mathcal{I}$ be an input instance with $|\mathcal{I}|=n$ online updates. For a deterministic online algorithm $\mathcal{A}$, we write $\mathcal{A}(\mathcal{I})$ for the cost of the solution
produced by $\mathcal{A}$ on $\mathcal{I}$, and we write $\OPT(\mathcal{I})$ for the minimum cost of an offline solution. 
The competitive ratio of $\mathcal{A}$ on inputs of size $n$ is
\[
\rho_{\mathcal{A}}(n)
  :=
  \sup_{|\mathcal{I}|=n}
      \frac{\mathcal{A}(\mathcal{I})}{\OPT(\mathcal{I})}.
\]
We say that $\mathcal{A}$ is $O(f(n))$-competitive if $\rho_{\mathcal{A}}(n)=O(f(n))$. 

\paragraph*{Two-dimensional packing problems.}
The input to the two-dimensional packing problems considered in this paper is a
finite sequence
\[
    \mathcal{I}=(P_1,\ldots,P_n)
\]
of convex polygons in the plane. In the online setting, the polygons are revealed one by one in this order. When $P_i$ arrives, the algorithm must pack $P_i$ by translation (formally, pick a vector $t_i$ such that the copy of $P_i$ is placed at $P_i+t_i=\{p+t_i\mid p\in P_i\}$). Rotations and reflections are not allowed. A packing is feasible if the interiors of the translated polygons are pairwise disjoint.

We consider different packing problems with different objectives to minimize.
\begin{itemize}
\item \emph{Strip packing}: The pieces of $\mathcal{I}$ are packed in the strip $S=[0,\infty)\times[0,1]$. The cost of a packing is the largest $x$ coordinate of any occupied point of $S$. 
\item \emph{Bin packing}: The pieces are packed into an infinite sequence of axis-aligned unit squares. The cost is the number of squares containing a piece.
\item \emph{Perimeter packing}: The pieces are packed in the plane. The cost is the perimeter of their minimal axis-aligned bounding box. 
\item \emph{Area Packing}: The pieces are packed in the plane. The cost is the area of their minimal axis-aligned bounding box.
\end{itemize}
\paragraph*{Unit hyperdisk packing.}
For a fixed integer $d \geq 2$, the input is a finite sequence $\mathcal D = (D_1, \dots, D_n)$ of unit hyperdisks in $\mathbb R^{d+1}$.
A unit hyperdisk is a $d$-dimensional unit ball contained in a $d$-dimensional affine subspace. The hyperdisks are packed by translation only, and we say a packing is feasible if the (relative) interiors of the translated disks are pairwise disjoint.

The cost of a packing of the hyperdisks is the volume of a minimal axis-aligned bounding box, and for an instance $\mathcal D$, we write $\OPT_{\mathrm{AABB}}(\mathcal D)$ for the minimum possible cost.

\section{Online TSP Scheduling}\label{sec:TSP-scheduling}
In this section, we  prove~\cref{thm:TSP:scheduling}. We start with the following theorem, where we assume that upper bounds for $n$ and $\OPT$ are known in advance.

\begin{lemma}\label{thm:online:tsp:scheduling:known:opt}
Assume that an upper bound $N\ge n$ on the number of input points and a value $\Delta\ge \OPT$ are given. There is a polynomial-time algorithm for \OnlineTSPscheduling$[n,M,d]$ that produces a schedule of cost $O(\Delta\log^2 N)$. In particular, if $N=n^{O(1)}$ and $\Delta=O(\OPT)$, the competitive ratio is $O(\log^2 n)$. 
\end{lemma}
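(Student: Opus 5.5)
We follow the metric elementary-tree scheme from the technical overview. After rescaling we assume $\Delta=1$. We fix $H$ as the smallest integer with $H2^H\ge N$, so $H=O(\log N)$. We use perfect binary trees of height $H$. A node of height $h$ carries a label that is an input point $y$, read as the ball $B(y,2^{h-H})$. A root-to-leaf path to an unused leaf is \emph{feasible} for $x$ if it has no labeled node, or if its deepest labeled node $v_h$, with label $y$, satisfies $d(x,y)\le 2^{h-H-1}$. To insert $x$, we choose among all feasible paths in all open trees one of maximum labeled depth, opening a new tree if there is none. We place $x$ at that leaf and label every unlabeled node on the path with $x$. Labels on a path then satisfy a telescoping property. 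If $u$ at height $h$ is labeled $y$ and $x$ sits at a leaf below $u$, then $d(x,y)\le\sum_{h'\le h}2^{h'-H-1}\le 2^{h-H}$. The chain of labels from $u$ down to $x$ moves by at most half the current radius at each step, because each label was feasible relative to the deepest labeled ancestor at insertion time.

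\emph{Counting trees.} Call a node \emph{partial} if exactly one child is labeled. We claim that any two partial nodes $v_1,v_2$ of the same height $h$ have labels $x_1,x_2$ with $d(x_1,x_2)>2^{h-H-1}$. Suppose not, and let $x_2$ be the later of the two labels. When the element $x$ that created label $x_2$ at $v_2$ was inserted, $x=x_2$ because $v_2$ was previously unlabeled. At that moment $v_1$ was already labeled by $x_1$ and still had an unlabeled child; the child is unlabeled now, hence it was then too. The path through that child is then feasible for $x_2$, since its deepest labeled node is $v_1$ and $d(x_2,x_1)\le 2^{h-H-1}$. Its labeled depth is $H-h+1$, whereas $v_2$ was unlabeled, so the chosen path had labeled depth at most $H-h$. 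This contradicts the choice of maximum labeled depth.

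A set of points that are pairwise more than $2^{h-H-1}$ apart lies on a TSP path of length at most $\OPT\le 1$, so it has at most $2^{H-h+1}+1$ points. Hence there are $O(2^{H-h})$ partial nodes at height $h$. Unused leaves lie under partial nodes (or in untouched subtrees of partial roots), and a partial node at height $h$ contributes at most $2^{h-1}$ unused leaves. Summing over $h$ gives $O(H2^H)$ unused leaves in total, which is the analogue of \cref{lemma:elementary:trees}. All trees except the last are touched, so at most $O(N/2^H+H)=O(H)=O(\log N)$ trees are ever opened. Making this counting tight, in particular handling whole unlabeled subtrees and roots, is the step I expect to need the most care. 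The first thing I would try is to charge each maximal unlabeled subtree to its partial parent.

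\emph{Scheduling.} In each tree we list the leaves in DFS order. Consecutive leaves whose lowest common ancestor has height $h$ are placed $2^{h-H+2}$ apart. The span of one tree is then $\sum_h 2^{H-h}\cdot 2^{h-H+2}=O(H)$. Trees are laid out left to right with gaps of $1$, giving a total cost of $O(H\cdot\log N)=O(\log^2 N)$; since the tree positions are fixed in advance, each visit time is determined at insertion.

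For feasibility, take $x_1,x_2$ with lowest common ancestor of height $h$ labeled $y$. The telescoping bound gives $d(x_1,x_2)\le 2^{h-H+1}$, while their scheduled positions differ by at least $2^{h-H+2}$. Points in different trees differ in position by at least $1+\text{gap}\ge 1\ge\OPT\ge d(x_1,x_2)$. Every step runs in polynomial time. With $N=n^{O(1)}$ and $\Delta=O(\OPT)$, this gives ratio $O(\log^2 n)$.
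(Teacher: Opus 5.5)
Your proposal is correct and follows essentially the same route as the paper: point labels read as balls, insertion along a feasible path of maximum labeled depth, the partial-node separation claim combined with the bound $2^{H-h+1}+1$ on $2^{h-H-1}$-separated sets, and DFS-order visit times spaced by the height of the least common ancestor. The differences are cosmetic. You open trees on demand and take $H$ minimal with $H2^H\ge N$, whereas the paper preallocates $\lceil 2N/2^H\rceil$ trees with $H$ maximal such that $(H+1)2^H\le N$ and uses spacing $2^{h-H+1}$; you should also handle $\Delta=0$ separately before rescaling. The counting step you flagged is settled exactly as you suggest: each unused leaf of an opened tree lies below the unlabeled child of the deepest labeled node on its root path, which is a unique partial node. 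This gives at most $\sum_h(2^{H-h+1}+1)2^{h-1}\le (H+1)2^H$ unused leaves, and roots need no separate case because the root of every opened tree is labeled.
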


\begin{proof}
If $\Delta=0$, then all input points have pairwise distance zero, and we assign all of them time $0$. Thus assume $\Delta>0$. By scaling distances and times by $\Delta$, we may assume $\Delta=1$. Then the optimum of the whole input is at most $1$, and in particular any two points of the input have distance at most $1$.

We may additionally assume that $N$ is  larger than a sufficiently large constant $C$. Indeed, if $N\leq C$, we may schedule the $i$'th arriving point at $i-1$. The distance between any pair of points is at most the minimal TSP cost which is at most $1$ by assumption, so this is a feasible schedule incurring cost $n\leq N\leq C=O(1)$.
Let $H\ge 1$, be maximal such that $(H+1)2^H\le N$, and let  $\ell:=\left\lceil \frac{2N}{2^H}\right\rceil$.
Our algorithm uses  $\ell$ elementary trees as in~\cite{azar2026nearly}, each of height $H$, but with a different labeling and insertion procedure. To recap, an elementary tree is a rooted perfect binary tree with $2^H$ leaves, ordered from left to right in DFS order. We say that the root has height $H$ and that the children of a node of height $h$ have height $h-1$ so that leaves have height $0$. During the insertion process, the algorithm labels certain internal nodes of the tree with input points, as described below and depicted in~\cref{fig:elementarytree}.

\begin{figure}[htb]
    \centering
    \includegraphics[page=1]{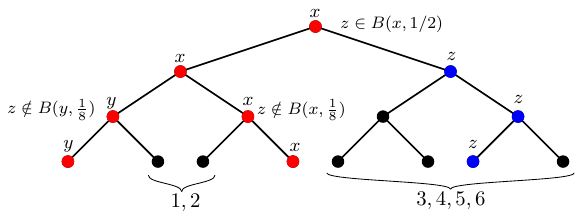}
    \caption{An insertion of the element $z$ into an elementary tree of height 3 which already contains elements $x,y$. Labeled nodes before the insertion of $z$ are red. The algorithm attempts to insert $z$ at a leaf in the order indicated below the tree according to the maximum labeled depth requirement. Since $z\notin B(x,1/8)$ and $z\notin B(y,1/8)$, the paths to the first two leaves in this order are not feasible. Since $z\in B(x,1/2)$, and the rightmost child of the root is unlabeled, $z$ can be inserted under this child along any root to leaf path. Finally, the unlabeled nodes of this path (in blue) get label $z$.}
    \label{fig:elementarytree}
\end{figure}

For an unlabeled leaf $v_0$, let $v_0,v_1,\ldots,v_H$ be the path from this leaf to the root, where $v_h$ has height~$h$. We define the \textit{labeled depth} of the path as the number of labeled nodes on it. We say that the path is \textit{feasible} for a point $x$ if either no node on the path is labeled, or, letting $v_h$ be the deepest labeled node on the path and $y$ its label, we have \[ d(x,y)\le 2^{h-H-1} . \]

Now we describe insertions. When a point $x$ arrives, our algorithm chooses a feasible path of \textit{maximum} labeled depth among all $\ell$ elementary trees, breaking ties arbitrarily. The point $x$ is inserted at the leaf of the chosen path, and all previously unlabeled nodes on the path are labeled by $x$. If no feasible path exists, the insertion fails. We note the easy invariant that if a node is labeled, then either it is a root or its parent is also labeled.

If a node $v$ at height $h$ is labeled by $y$, then every point inserted in the subtree rooted at $v$ is at distance at most $2^{h-H}$ from $y$. Indeed, let $v=w_h,w_{h-1},\ldots,w_0$ be the path from $v$ to the leaf containing such a point $x$, where $w_r$ has height $r$, and let $y_r$ be the label of $w_r$. Then $y_h=y$ and $y_0=x$, so by the feasibility constraint and the triangle inequality,
\begin{align*}\label{eq:ball-containment}
    d(x,y)\le \sum_{r=0}^{h-1} d(y_r,y_{r+1}) \le \sum_{r=0}^{h-1} 2^{r-H} < 2^{h-H}.
\end{align*}
The following claim follows immediately from this observation and the triangle inequality.
\begin{claim}\label{claim:sched:tree:distance}
If two points $x$ and $x'$ are inserted at leaves of the same elementary tree whose least common ancestor has height $h$, then \[ d(x,x')\le 2^{h-H+1} . \]
\end{claim}

Call a node \emph{partial} if it has exactly one labeled child.

\begin{claim}\label{claim:sched:partial:separation}
Let $v$ and $v'$ be distinct partial nodes of height $h$, possibly in different elementary trees. If $y$ and $y'$ are their labels, then \[ d(y,y')> 2^{h-H-1} . \]
\end{claim}

\begin{proof}[Proof of Claim.]
Assume for contradiction that $d(y,y')\leq 2^{h-H-1}$. Let $t$ and $t'$ be the times at which $v$ and $v'$ were labeled, respectively, and assume without loss of generality that $t<t'$. Since $v$ is partial at the end of the process, its unlabeled child was still unlabeled just before time $t'$. Hence, just before inserting $x_{t'}=y'$, there was a root-to-leaf path through this unlabeled child whose deepest labeled node was $v$, thus having labeled depth $H-h+1$. This path was feasible for $y'$ by the assumed inequality.

On the other hand, the insertion at time $t'$ caused the labeling of the node $v'$, which was unlabeled just before time $t'$. The labeled depth of the feasible path used for the insertion at time $t'$ was therefore at most $H-h$.
This contradicts the rule that the algorithm chooses a feasible path of maximum labeled depth.
\end{proof}

Call a subset $S$ of the input $\{x_1,\dots,x_n\}$ $\delta$\textit{-separated} if any distinct $x,x'\in S$ have $d(x,x')>\delta$.
For every $\delta>0$, any $\delta$-separated subset of the input has size at most $1/\delta+1$. Indeed, order the elements of $S$ as they appear along an optimal traveling-salesperson path of length at most $1$, and sum the distances between consecutive points in $S$. 

By \cref{claim:sched:partial:separation}, the labels of partial nodes at any fixed height $h$ are $2^{h-H-1}$-separated, so the number of partial nodes at height $h$ is at most \[ 2^{H-h+1}+1 . \]
Every empty leaf in a non-empty elementary tree lies in the subtree rooted at the unlabeled child of a unique partial node. A partial node at height $h$ accounts for at most $2^{h-1}$ such empty leaves. Therefore the number of empty leaves in non-empty elementary trees is at most \[ \sum_{h=1}^H (2^{H-h+1}+1)2^{h-1} \le (H+1)2^H \le N . \]
Since there are at least $2N$ leaves in total, the algorithm is able to insert all $n$ arriving points. Indeed, if it failed before inserting all $n$ points, then no empty tree would remain, so all empty leaves would lie in non-empty trees. But then the number of empty leaves in non-empty trees would be $\geq 2N-(n-1)>N$, a contradiction.

It remains to assign visit times. In each elementary tree, let $v_1,\ldots,v_{2^H}$ be its leaves in DFS order. For $j<2^H$, let $h_j$ be the height of the least common ancestor of $v_j$ and $v_{j+1}$, and set \[ q_1=0, \qquad q_{j+1}-q_j=2^{h_j-H+1} . \]
There are exactly $2^{H-h}$ adjacent-leaf boundaries whose least common ancestor has height $h$. Hence \[ q_{2^H} =\sum_{h=1}^H 2^{H-h}2^{h-H+1} =2H . \]
In particular, $0\le q_j\le 2H$ for every $j$. 
Intuitively, $q_j$'s denote the local position of a point within its elementary tree. 
If a point is inserted at leaf $v_j$ of the $k$th elementary tree, we assign it the irrevocable visit time \[ p=(k-1)(2H+1)+q_j . \]

We now verify feasibility. Consider two points inserted into the same elementary tree, and suppose their leaves have least common ancestor of height $h$. Then the two points are scheduled with distance at least $2^{h-H+1}$ in $[0,\infty)$. Moreover, by \cref{claim:sched:tree:distance}, their metric distance is at most $2^{h-H+1}$, so the TSP scheduling requirement is satisfied. Now consider two points in distinct elementary trees, say the $k$th and $k'$th trees with $k<k'$. Their assigned times differ by at least \[ (k'-k)(2H+1)-2H \ge 1 . \]
This is at least their metric distance, since after scaling the input has diameter at most $1$.

By maximality of $H$, under the assumption that $(H+1)2^H\leq N$, we have $H=O(\log N)$ and $2^H=\Omega(N/\log N)$, and therefore $\ell=O(\log N)$. The cost incurred by the algorithm is therefore at most 
\[
\ell(2H+1)=O(\log^2 N) . 
\]
Rescaling by $\Delta$ proves the theorem.
\end{proof}

Next, we use the previous theorem to prove a similar bound in the case where $\OPT$ is unknown.

\begin{lemma}\label{thm:online:tsp:scheduling:unknown:opt}
When $n$ is known in advance, there exists an algorithm for
\OnlineTSPscheduling$[n,M,d]$ with competitive ratio
$O(\log^2 n)$. 
\end{lemma}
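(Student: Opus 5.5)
We use a guess-and-double strategy on $\OPT$, running the algorithm of \cref{thm:online:tsp:scheduling:known:opt} in phases with $N=n$ and geometrically increasing guesses $\Delta$. We maintain an online lower bound on $\OPT$: after $x_1,\dots,x_i$ have arrived, let $L_i:=\max_{j\le i} d(x_1,x_j)$. Any TSP path through the first $i$ points has length at least $L_i$, and $\OPT\le 2(n-1)\max_{j,k}d(x_j,x_k)\le 4(n-1)L_n$. Points at distance $0$ from $x_1$ while $L_i=0$ are handled by scheduling them all at time $0$ (they are pairwise at distance $0$). Once $L_i>0$ for the first time, we start phase $0$ with guess $\Delta_0:=4n L_i$. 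By the triangle inequality, the optimum on the prefix is at most $4nL_i$, so $\Delta_0$ is a valid upper bound for the prefix.

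In phase $k$ we run a fresh copy of the algorithm from \cref{thm:online:tsp:scheduling:known:opt} with $N=n$ and $\Delta_k=2^k\Delta_0$. Its schedule is shifted to start at an offset $s_k$, so it occupies times in $[s_k,s_k+c\Delta_k\log^2 n]$. A phase ends as soon as the current prefix certifies that the guess is too small. We use the computable certificate $4nL_i>\Delta_k$, which is an upper bound on the prefix optimum up to the factor $4n$. The alternative certificate ``the fresh copy would fail to insert'' also works, since the known-OPT algorithm succeeds whenever $\Delta\ge$ the optimum of the points fed to it. The points of phase $k$ form a subset of the input, so their TSP optimum is at most the prefix optimum. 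Hence the guarantee of \cref{thm:online:tsp:scheduling:known:opt} applies within the phase, and constraints between points of the same phase are satisfied.

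For points in different phases, we set $s_{k+1}:=s_k+c\Delta_k\log^2 n+\Delta_{k+1}$. Any two points seen up to the end of phase $k+1$ are at distance at most $2L\le \Delta_{k+1}$, and the gap between phases only grows for later phases, so cross-phase constraints hold. This gives a geometric sum: the final cost is $O(\Delta_K\log^2 n)$, where $K$ is the last phase.

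The main obstacle is bounding $\Delta_K$ against $\OPT$. Using the distance certificate alone gives only $\Delta_K\le 8nL_n\le 8n\,\OPT$, which loses a factor $n$. I would fix this by making the phase change triggered only by actual failure of the copy, meaning no feasible path remains. By the proof of \cref{thm:online:tsp:scheduling:known:opt}, failure can only occur if the points fed to the copy have optimum greater than $\Delta_k$. Then $\Delta_K\le 2\,\OPT$ whenever $K\ge 1$, and $\Delta_0\le 4n\,\OPT$ still needs care. To handle $\Delta_0$, I would start instead at $\Delta_0:=L_i$, which is at most $\OPT$, and let failures double it. Cross-phase separation is then guaranteed by inserting gap $\Delta_{k+1}$ together with the observation that a failing copy certifies $\OPT>\Delta_k$. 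Distances between points in earlier phases and phase $k+1$ are then at most the diameter of the prefix, which is at most the prefix optimum. I still have to check that this diameter is at most $2\Delta_{k+1}$; enlarging gaps by a constant factor absorbs this. Finally, the total cost is $O(\sum_k\Delta_k\log^2 n)=O(\OPT\log^2 n)$.
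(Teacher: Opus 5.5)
Your final scheme, in which a phase ends only when the copy fails, does not produce a feasible schedule. The algorithm of \cref{thm:online:tsp:scheduling:known:opt} satisfies the constraint for two points in \emph{different} elementary trees only because the points fed to it have diameter at most $\Delta$. Such points may be as little as $\Delta$ apart in time, and that step of the proof uses the diameter bound, not the insertion rule.

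Absence of failure certifies no such bound. A point that is far from every label is simply placed in an empty tree, and failure occurs only once no empty tree remains. For example, suppose phase $0$ has guess $\Delta_0=L_i$ and the next point is at distance $n^{10}\Delta_0$ from $x_i$. It goes into an empty tree of the same copy and receives a visit time within $O(\Delta_0\log^2 n)$ of that of $x_i$. This violates the scheduling constraint, yet nothing failed.

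The same problem defeats your cross-phase argument. Under failure-only triggers, the prefix diameter during phase $k+1$ is not bounded by any constant multiple of $\Delta_{k+1}$. So the check you postpone (``enlarging gaps by a constant factor absorbs this'') fails by an unbounded factor. A failure gives the lower bound $\OPT>\Delta_k$, which is what the cost analysis needs. Feasibility, however, needs the current guess to be an \emph{upper} bound for the current prefix.

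The missing ingredient is an online-computable estimate of the prefix optimum that is accurate up to a constant factor; your $L_i$ is off by a factor of order $n$. The paper uses the MST-doubling $2$-approximation $A_t$, so $\OPT_t\le A_t\le 2\OPT_t$. It starts a new phase whenever $A_t>\Delta_i$, with $\Delta_{i+1}$ the smallest $2^k\Delta_i$, $k\ge 1$, that is at least $A_t$. Then:
\begin{itemize}
\item every guess is a valid upper bound for its copy, so the copy never fails;
\item all distances among points seen so far are at most $\Delta_i$, so a gap of $\Delta_i$ between phases suffices;
\item $\Delta_i<2A_t\le 4\OPT$ and the guesses at least double, so $\sum_i\Delta_i=O(\OPT)$ and the total cost is $O(\OPT\log^2 n)$.
\end{itemize}

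Alternatively, you can keep your failure trigger but also end a phase whenever the prefix diameter exceeds the current guess, raising the guess to the smallest $2^j\Delta_k$ that is at least that diameter. Since the diameter is at most $\OPT$, all guesses stay below $2\OPT$, and both the within-phase and cross-phase constraints are restored.
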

\begin{proof}
For a prefix $X_t=\{x_1,\ldots,x_t\}$, let $\OPT_t$ denote the optimum TSP-path length for $X_t$. Upon arrival of $x_t$, we compute in polynomial time a value $A_t$ such that
\[
    \OPT_t \le A_t \le 2\OPT_t,
\]
for instance by the standard 2-approximation algorithm for TSP based on doubling the minimum spanning tree (MST) and then shortcutting.  

The algorithm proceeds in phases, where in each phase, we keep a current estimate of $\OPT_t$. As long as $A_t=0$, assign $x_t$ visit time $0$. When the first point with $A_t>0$ arrives, we start a phase with estimate $\Delta_1=A_t$. In general, if the current phase has estimate $\Delta_i$ and $A_t>\Delta_i$, we close the current phase and start a new phase with
\[
    \Delta_{i+1}=2^k\Delta_i,
\]
where $k\ge 1$ is the minimum integer such that $2^k\Delta_i\ge A_t$. If $x_t$ caused a new phase $i$, $x_t$ is inserted during phase $i$.

For each phase $i$, we run the algorithm from \cref{thm:online:tsp:scheduling:known:opt} with $N=n$ and $\Delta=\Delta_i$. When a new phase $i$ starts, we translate the resulting visit times and schedule every arrival in phase $i$ to the right of every arrival in phase $i-1$, leaving a gap of exactly $\Delta_i$. Note that for an arrival time $t$ in phase $i$, we always have that $\OPT_t\leq A_t\leq \Delta_i$, and in particular, the maximum pairwise distance between a pair of points in the instance is at most $\Delta_i$, so a gap of size $\Delta_i$ suffices for feasibility between the phases. Feasibility within a phase follows from \cref{thm:online:tsp:scheduling:known:opt}.

We also need to check that the algorithm of \cref{thm:online:tsp:scheduling:known:opt} is always called with a valid upper bound. Clearly, $N=n$ is a valid upper bound on the number of arriving elements. Moreover, at any time $t$ in phase $i$, we have that 
$\OPT_t\leq A_t\leq \Delta_i$, since this holds by the start of the phase by the choice of $k$, and since we proceed to the next phase when it is no longer true.

It remains to bound the cost. By \cref{thm:online:tsp:scheduling:known:opt}, phase $i$ has length $O(\Delta_i\log^2 n)$, and the additional gap contributes only $\Delta_i$. The $\Delta_i$ terms are (at least) geometrically increasing, and for all $i$, we have $\Delta_i<2A_t\le 4\OPT$, by the minimality of $k$, so $\sum_i \Delta_i=O(\OPT)$. Therefore the total cost is
\[
    \sum_i O(\Delta_i\log^2 n)
    = O(\OPT\log^2 n).
\]
The algorithm runs in polynomial time since each $A_t$ is computed by a polynomial-time $2$-approximation and the subroutine from \cref{thm:online:tsp:scheduling:known:opt} runs in polynomial time.
\end{proof}

We are now ready to prove~\cref{thm:TSP:scheduling}.

\mainTSPscheduling*

\begin{proof}
Let $N_0=2$ and $N_{i+1}=N_i^2$. We split the input into phases, where phase $i$ contains the next at most $N_i$ points. Within phase $i$, we run the algorithm from \cref{thm:online:tsp:scheduling:unknown:opt} with input size parameter $N_i$, stopping it if the phase ends earlier, in which case, we have seen all $n$ input points.

We schedule the phases consecutively, scheduling the first arrival from the first phase at $0$. Let $z_i$ be the first point of phase $i$. Before starting phase $i>0$, we leave an empty gap of length
\[
    D_i=\max\{d(z_i,x)\mid x \text{ arrived before phase } i\}.
\]
The first point in each phase is assigned {\em local} time $0$, as in the proof of \cref{thm:online:tsp:scheduling:unknown:opt}, and then the whole local schedule of the phase is translated to start after the gap.

Feasibility inside each phase follows from \cref{thm:online:tsp:scheduling:unknown:opt}. For feasibility between phases, let $y$ be a point in phase $i$ and let $x$ be any earlier point. If $q_y$ is the local time of $y$, then $q_y\ge d(y,z_i)$, since $z_i$ has local time $0$. Hence the difference between the scheduled time of $y$ and $x$ is 
\[
    p_y-p_x\ge D_i+q_y\ge d(x,z_i)+d(y,z_i)\ge d(x,y),
\]
so the scheduling constraint also holds between different phases.

It remains to bound the cost. Let $\OPT_i$ be the optimum TSP-path length of the points in phase $i$. Then $\OPT_i\le \OPT$, and also $D_i\le \OPT$. 
Let $R$ be the last nonempty phase. Then
 $\log N_R = O(\log n)$.
Since $\log N_i=O(2^i)$, we have by \cref{thm:online:tsp:scheduling:unknown:opt}, the cost of the algorithm is
\[
    \sum_{i=0}^R O(\OPT_i\log^2 N_i)+O(R\OPT)
    =O(\OPT\log^2 n).
    \qedhere
\]
\end{proof}
We will need the following technical observation about the algorithm of~\cref{thm:TSP:scheduling}, which follows immediately from its proof.
\begin{observation}\label{observation:first:at:zero}
The algorithm in~\cref{thm:TSP:scheduling} schedules the first arriving point at $0$.
\end{observation}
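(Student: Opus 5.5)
We check that the first arriving point is placed at $0$ by following it through the three nested layers of the construction: the proof of \cref{thm:TSP:scheduling}, the algorithm of \cref{thm:online:tsp:scheduling:unknown:opt}, and the algorithm of \cref{thm:online:tsp:scheduling:known:opt}.

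\emph{Outer layer.} The proof of \cref{thm:TSP:scheduling} splits the input into phases with $N_0=2, N_1=4,\dots$. The first arriving point $x_1$ is the first point of phase $0$. No gap $D_i$ is inserted before phase $0$, and the proof explicitly says the first arrival of the first phase is scheduled at $0$. Its global time therefore equals its local time in the subroutine of \cref{thm:online:tsp:scheduling:unknown:opt} run with parameter $N_0$, and that subroutine is stated to assign the first point of each phase local time $0$. It remains to confirm that this is actually how the subroutine behaves.

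\emph{Middle layer.} Inside \cref{thm:online:tsp:scheduling:unknown:opt}, the prefix $X_1=\{x_1\}$ has $\OPT_1=0$, so $A_1=0$. By the rule ``as long as $A_t=0$, assign visit time $0$'', the point $x_1$ receives time $0$. This settles the observation without analysing \cref{thm:online:tsp:scheduling:known:opt} at all.

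\emph{Robustness check.} Suppose instead that the first point were handled by the algorithm of \cref{thm:online:tsp:scheduling:known:opt}. It goes into the first elementary tree ($k=1$), and since no node is labeled yet, every path is feasible with labeled depth $0$. If ties are broken toward the leftmost leaf $v_1$, its time is $q_1=0$. This would require fixing that tie-breaking convention. It is not needed, however, because the $A_t=0$ rule already places $x_1$ at $0$.

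The only step needing care is this last point: the middle layer's $A_1=0$ rule does all the work, so no assumption about tie-breaking is required.
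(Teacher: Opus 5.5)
Your proposal is correct and follows the same reasoning the paper leaves implicit. The outer proof places the first arrival of phase $0$ at time $0$ with no preceding gap, and this agrees with the subroutine of \cref{thm:online:tsp:scheduling:unknown:opt}, where $\OPT_1=0$ forces $A_1=0$ and hence local time $0$. Your robustness aside is unnecessary; note also that its claim that $x_1$ lands in tree $k=1$ would itself depend on tie-breaking, since ties are broken arbitrarily over all $\ell$ trees.
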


\section{Box Packing}\label{sec:box-packing}
In this section, we present an algorithm for an auxiliary problem, \emph{box packing}. 
\begin{definition}[Online Box Packing]
\label{def:boxpacking}
A sequence of $n$ convex polygons $\mathcal{I}=(P_1,\dots,P_n)$ arrive online. Upon arrival of a new piece, the algorithm can open an axis-aligned rectangular box of arbitrary dimensions, or pack the arriving piece into any previously opened box. 
\end{definition} 
To state our result on box packing, we first introduce a bit of notation.
For a polygon $P \subseteq \mathbb{R}^2$, let $\area(P)$ denote its area. For a sequence of polygons $\mathcal{I} = (P_1,P_2,\dots,P_n)$, write $\area(\mathcal{I})=\sum_{i=1}^n \area(P_i)$. For a polygon $P \subseteq \mathbb{R}^2$, define its \emph{width} by 
\[
\width(P)
   := \max\{x : (x,y)\in P\}-\min\{x : (x,y)\in P\},
\]
and its \emph{height} by
\[
\height(P)
   := \max\{y : (x,y)\in P\}-\min\{y : (x,y)\in P\}.
\]

Our box packing result is as follows.
\begin{lemma}\label{lemma:box:packing}
There exists an online algorithm for box packing $n$ convex polygons, $\mathcal{I}=(P_1,\dots,P_n)$. Letting $W=\max_i \width (P_i)$ and $H=\max_i \height (P_i)$, each box opened has width $O(W)$, height $O(H)$ and the total area of the opened boxes is $O(\log^2n (\area(\mathcal{I})+WH))$.
\end{lemma}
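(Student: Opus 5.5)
The plan follows the technical overview: normalize pieces, split them into height classes, run the scheduling algorithm of \cref{thm:TSP:scheduling} inside each class strip, and cut each strip online into boxes.

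\textbf{Normalization.} Each arriving convex polygon $P$ is replaced by a parallelogram $P'\supseteq P$ with two horizontal edges and height $h$ a power of two. I would build $P'$ as follows. Take the topmost and bottommost points of $P$ and the segment $s$ joining them. Enclose $P$ in the parallelogram whose horizontal edges lie on the lines $y=\max$ and $y=\min$, and whose two non-horizontal edges are parallel to $s$ and pass through the extreme points of $P$ in the direction orthogonal to $s$ (measured horizontally). Then stretch the height up to the next power of two.

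A standard argument bounds the blow-up. The triangles spanned by $s$ and the two extreme points lie in $P$. Hence $\area(P')=O(\area(P))$, $\width(P')=O(\width(P))$ and $\height(P')\le 2\height(P)$. Packing $P'$ packs $P$, so it suffices to handle such parallelograms. Write $\mathcal I_h$ for those of height $h$, and $A_h, W_h$ for their total area and maximum width.

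\textbf{One height class.} For $\mathcal I_h$ define $d_h(P,Q)$ as the minimum horizontal distance between the centers of translates of $P$ and $Q$ placed in $S_h=[0,\infty)\times[0,h]$ with disjoint interiors. I must check that $d_h$ is a metric (possibly up to zero distances). Symmetry is clear. For the triangle inequality, note that parallelograms of common height with horizontal bases, sitting in the strip, are non-overlapping iff their horizontal offset is at least the separation along the bottom and top lines. That separation is (half-width sums of the slanted sides) a max of two expressions, each additive, so the triangle inequality holds.

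Run the algorithm of \cref{thm:TSP:scheduling} on $(\mathcal I_h,d_h)$ and place $P$ with center at $(p_P+W_h',h/2)$ to avoid crossing the left boundary; the shift $W'_h$ can be handled by guessing $W_h$ with doubling, or simply by padding each box. The scheduling constraint gives disjointness.

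To bound $\OPT$, order the pieces by slope. Consecutive pieces with slopes sorted then need center separation at most (sum of half-widths of their horizontal edges) plus $h\cdot|\Delta \text{cot}|$. Summing, the first term gives $O(A_h/h)$ since the area of a parallelogram is $h$ times the edge length. The slope term telescopes to $O(W_h)$, because every piece's horizontal offset $h|\cot\theta|$ is at most its width. So the strip length used is $O((A_h/h+W_h)\log^2 n)$.

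\textbf{Splitting into boxes.} This is the step I expect to be delicate, since $W_h$ and the final length are unknown online. Maintain a guess $w$ of $W_h$, doubled when exceeded. Partition $S_h$ into consecutive windows of length $w$; a box $[a,a+3w]\times[0,h]$ is opened covering windows as soon as a piece lands in them. A piece centered in a window with width $\le w$ fits inside the box spanning that window and its neighbours.

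When the guess doubles, start a fresh run of the scheduling algorithm in a new strip copy. Past phases have geometrically smaller widths and $n$-bounded log factors, so their overhead sums to $O(hW_h\log^2 n)$ plus the area term. Each opened box has area $O(hw)$ and corresponds to an occupied window. The number of windows is at most strip length over $w$, plus $1$, per phase. So the total area in class $h$ is $O((A_h+hW_h)\log^2 n)$. I need care that boxes are axis-aligned rectangles of width $O(W)$ and height $h\le 2H$; both hold.

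\textbf{Summing over classes.} $\sum_h A_h=O(\area(\mathcal I))$. The term $\sum_h hW_h\le W\sum_{h\le 2H}h=O(WH)$ by the geometric series over powers of two. This gives $O(\log^2 n(\area(\mathcal I)+WH))$.
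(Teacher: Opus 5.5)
Your proposal is correct and follows essentially the same route as the paper. You normalize to dyadic horizontal parallelograms, run the TSP-scheduling algorithm in each height class with the center-separation metric, bound the TSP optimum by sorting by slope, cover the virtual strip with overlapping boxes of width $O(w)$ under a guess-and-double on the width with a fresh run per phase, and sum the $hW_h$ terms geometrically over dyadic heights. The one cosmetic difference is the left-wall issue: the paper handles it through \cref{observation:first:at:zero}, while in box packing it can simply be ignored, since the strip coordinates only serve to assign pieces to boxes.
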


\subsection{Reduction to Dyadic Parallelograms}\label{sec:modification}
To prove our box packing result, we first argue that we can employ some fairly standard modifications to the arriving pieces. We say that $P$ is a \emph{horizontal} parallelogram if $P$ is a parallelogram having two horizontal sides, see \cref{fig:horizontalPolygon} (a).
For a horizontal parallelogram $P$, we define the \emph{base} of $P$, $\base(P)$, as the length of either of these horizontal sides. 
We define the \emph{shadow} of $P$ as $\shadow(P)=\width(P)-\base(P)$. Note that shadows are always non-negative. 
We call a real number $x>0$ \emph{dyadic} if $x=2^k$ for some $k \in \mathbb Z$. We call a horizontal parallelogram $P$ \emph{dyadic} if its height is dyadic.

\begin{figure}[htb]
    \centering
    \includegraphics[page=1]{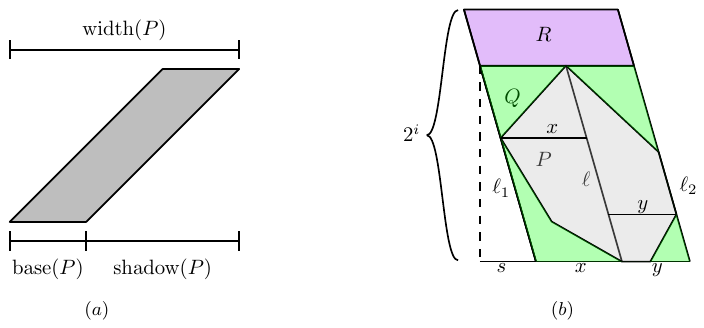}
    \caption{(a) A horizontal parallelogram $P$ with $\base(P)$, $\shadow(P)$, and $\width(P)$. (b) Illustration for \cref{lemma:modification}.}
    \label{fig:horizontalPolygon}
\end{figure}

\begin{lemma}\label{lemma:modification}
For every convex polygon $P$, there exists a dyadic parallelogram $R$ such that 
\begin{enumerate}
    \item $P \subseteq R$,
    \item $\area(R) \leq  4\area(P)$, 
    \item $\width(R) \leq  4\width(P)$, and 
    \item If $k$ is the smallest integer such that $\height(P)\leq 2^k$, then $\height(R)=2^k$.
\end{enumerate}
\end{lemma}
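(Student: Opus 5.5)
Let $h=\height(P)$ and let $y_{\min}$ be the smallest $y$-coordinate of a point of $P$; the largest is then $y_{\min}+h$. Let $t$ and $b$ be points of $P$ attaining the maximum and minimum $y$-coordinate. The plan is a standard two-step enclosure: first put $P$ inside a horizontal parallelogram of height exactly $h$ with two sides parallel to the segment $bt$, and then stretch it vertically to the dyadic height $2^k$.

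\textbf{Step 1.} Take the two lines through $b$ and $t$ that are parallel to $bt$'s direction. Concretely, let $P'$ be the horizontal parallelogram bounded by the lines $y=y_{\min}$ and $y=y_{\min}+h$ and by the two supporting lines of $P$ parallel to $bt$. Then $P\subseteq P'$ and $\height(P')=h$.

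For the area, note that the base of $P'$ equals the horizontal extent of $P$ measured along lines parallel to $bt$. Call this $\ell$, the maximum over points of $P$ of the horizontal offset from the line through $b,t$ on the right, plus the analogous quantity on the left. Let $q_1,q_2$ be extreme points of $P$ on the two sides. Then $P$ contains the triangles $b t q_1$ and $b t q_2$. These triangles have disjoint interiors and total area $\tfrac12 h\ell$, because a triangle with one side $bt$ of vertical extent $h$ has area equal to half of $h$ times the horizontal offset of its third vertex from the line $bt$. Since $\area(P')=h\ell$, this gives $\area(P')\le 2\area(P)$.

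For the width, $P'$ is the union of horizontal segments of length $\ell$ whose left endpoints lie on a segment parallel to $bt$ of horizontal extent $|t_x-b_x|\le \width(P)$. Hence $\width(P')\le \ell+\width(P)$. Each horizontal offset is at most $\width(P)$, because the extreme point and the corresponding point on segment $bt$ at the same height both lie in $P$. So $\ell\le 2\width(P)$ and $\width(P')\le 3\width(P)$.

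\textbf{Step 2.} Set $R$ to be $P'$ extended upward by prolonging its slanted sides until the height is $2^k$. Since $h\le 2^k<2h$ (when $h>0$), the base is unchanged and the height at most doubles. This gives $\area(R)\le 2\area(P')\le 4\area(P)$.

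The extension also increases the shadow. The shadow scales with height, so $\shadow(R)\le 2\shadow(P')$. Since $\shadow(P')=|t_x-b_x|\le \width(P)$, we get $\width(R)\le \ell+2\width(P)\le 4\width(P)$.

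\textbf{Main obstacle.} The step needing care is the width bound: the shadow doubles under the stretch, so the crude estimate $\ell\le 2\width(P)$ must be combined with the shadow estimate $\shadow(P')\le\width(P)$ to land within the factor $4$.

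\textbf{Degenerate cases.} If $\area(P)=0$, that is, $P$ is a segment, then $P'$ is the segment itself and $R$ is its stretched version; these are degenerate parallelograms with the same bounds. If $h=0$, then $P$ is a horizontal segment, and no dyadic height $2^k$ with $\height(R)\le$ any positive bound can be matched in area. This case is excluded or treated by the convention on $k$, since no smallest integer $k$ exists.
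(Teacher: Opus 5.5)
Your proof is correct and follows essentially the same route as the paper. Both arguments enclose $P$ in the horizontal parallelogram bounded by the two horizontal translates of the top--bottom segment $bt$ that are tangent to $P$, and bound its area by $2\area(P)$ using the two triangles $btq_1, btq_2$ (the paper's ``convex hull of $\ell$ and the intersection points''). Both bound its width by $\ell+|t_x-b_x|$, where each side offset and the shadow are at most $\width(P)$, and then stretch to height $2^k<2h$, which at most doubles both the area and the shadow.
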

\begin{proof}
Consider \Cref{fig:horizontalPolygon} (b) for an illustration. Let $\ell$ be a line segment between a topmost and bottommost point of $P$. Let $\ell_1$ and $\ell_2$ be obtained by horizontal translates of $\ell$ such that $\ell_1$ is tangent to $P$ on the left and $\ell_2$ is tangent to $P$ on the right. Let $Q$ be the convex hull of $\ell_1\cup\ell_2$, which is a horizontal parallelogram.
It is easy to check that $\area(Q)\leq 2\area(P)$, as $P$ contains the convex hull of $\ell$ and the intersection points of $\ell_1$ and $\ell_2$ with $P$. For the width, consider~\Cref{fig:horizontalPolygon} (b), and note that the width of $Q$ is $s+x+y$, where $s$ is the shadow of the parallelogram $Q$. It is clear that $\max(x,y,s)\leq \width(P)$, so 
\[
\width(Q)=s+x+y\leq 3\width(P).
\]

We can obtain the final dyadic parallelogram $R$ with the desired properties by at most doubling the height of $Q$. It is easy to check that the area increases by at most a factor of two and the width by at most $s\leq \width(P)$ by this operation.
\end{proof}

\subsection{Algorithm for Box Packing}
In this section, we prove~\cref{lemma:box:packing}.
We start with the following lemma about strip packing that assumes that all arriving pieces are dyadic parallelograms of height one.
\begin{lemma}\label{lemma:strip:same:height}
There exists an $O(\log^2 n)$-competitive online algorithm for strip packing $n$ unit-height horizontal parallelograms $\mathcal{I}=\{P_1,\dots,P_n\}$. The algorithm need not know $n$ in advance.
\end{lemma}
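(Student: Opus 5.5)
Since every piece has height one and the strip has height one, each piece occupies the full height, so its $y$-placement is fixed and only its horizontal translation matters. The plan is to reduce directly to \OnlineTSPscheduling\ via \cref{thm:TSP:scheduling}. For a unit-height horizontal parallelogram $P$, let its \emph{center} be the midpoint of its bottom edge, or equivalently any canonical reference point such as the midpoint of the horizontal segment at height $1/2$. For two pieces $P,Q$, define
\[
d(P,Q):=\min\{t\ge 0 : P+(t,0)\text{ and } Q \text{ have disjoint interiors, or } Q+(t,0)\text{ and } P \text{ have disjoint interiors}\}.
\]
More precisely, we take $d(P,Q)$ to be the minimum horizontal separation of the centers such that the pieces do not overlap in either left/right order. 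Since both pieces are parallelograms spanning $[0,1]$ vertically, their overlap is determined by the two slanted sides.

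\textbf{Step 1: the metric.} Placing $P$ left of $Q$ requires a center gap $g_{PQ}$, and by a symmetric computation this equals the gap needed with $Q$ left of $P$. Indeed, at height $y$ the right endpoint of $P$ minus the left endpoint of $Q$ is an affine function of $y$, maximized at $y=0$ or $y=1$. Writing $P$ by base $b_P$ and slope offset $s_P$ (signed top shift), the required gap is
\[
\tfrac{b_P+b_Q}{2}+\tfrac{|s_P-s_Q|}{2}.
\]
This quantity is symmetric, and for $P\ne Q$ it satisfies the triangle inequality because it is a sum of half-bases plus the $\ell_1$ distance of the shifts. We take $d(P,P)=0$, which gives a metric on the instance: exactly the $\delta$-type construction from the introduction with $t_i=b_i$ and $\rho=|s_P-s_Q|/2$. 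The key check is that whenever two centers at heights $1/2$ are at horizontal distance at least $d(P,Q)$, the interiors are disjoint. This is the main obstacle: the convexity argument must be verified, and multiple copies of identical pieces must be handled, which is what setting $d=0$ on duplicate points would break. To fix this, we treat arriving pieces as distinct points (indexed by arrival), which keeps $d$ a metric on the indexed multiset.

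\textbf{Step 2: run the scheduler.} Apply \cref{thm:TSP:scheduling} to this metric and place piece $P_i$ with center at $(p_i+W',1/2)$, where the constraint $|p_i-p_j|\ge d(P_i,P_j)$ guarantees a packing. The shift $W'$ prevents pieces from crossing $x=0$. Since $W$ is unknown, we use \cref{observation:first:at:zero} together with a guess: a center at $p$ needs shift half its width. A cleaner alternative is to shift by $\width(P_1)/2$ plus the running bound $d(P_1,P_i)\le p_i$, noting that $\width(P_i)/2\le d(P_1,P_i)+\width(P_1)/2$. This holds because $d(P_1,P_i)\ge(b_1+b_i)/2+|s_1-s_i|/2$, and hence the left extent of $P_i$ stays inside the strip. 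The cost is then at most $\max_i p_i+O(W)$.

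\textbf{Step 3: compare with OPT.} Any strip packing orders the pieces left to right. Consecutive pieces in this order have center distance at least $d$, since adjacency in a valid packing forces that separation. Hence the optimal TSP path length is at most $\OPT$, and $W\le\OPT$. Therefore the cost is $O(\OPT\log^2 n)+O(W)=O(\OPT\log^2 n)$, and no knowledge of $n$ is needed because \cref{thm:TSP:scheduling} does not require it.
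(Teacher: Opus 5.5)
Your proof is correct, and most of it is the paper's reduction. You use the same metric on arrival indices: your signed-shift formula $\frac{b_i+b_j}{2}+\frac{|\sigma_i-\sigma_j|}{2}$ is exactly the paper's two-case formula written with nonnegative shadows and slope signs. You make the same call to \cref{thm:TSP:scheduling}, and the same shift by $\width(P_1)/2$ via \cref{observation:first:at:zero}. Your inequality $\width(P_i)/2\le d(P_1,P_i)+\width(P_1)/2\le p_i+\width(P_1)/2$ also makes explicit why every piece stays inside the left wall, not just $P_1$; the paper leaves this implicit.

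Where you genuinely differ is the comparison with the optimum. The paper visits the pieces in order of slope to bound the shortest TSP path by $O(\area(\mathcal I)+\max_i\width(P_i))$. It then invokes the trivial lower bound $\OPT\ge\max(\area(\mathcal I),\max_i\width(P_i))$. You instead take an optimal packing and observe that full-height pieces with disjoint interiors are linearly ordered, with consecutive centers at least $d$ apart. This gives TSP optimum $\le\OPT$ directly.

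\begin{itemize}
\item Your route is tighter and shows that, for unit-height pieces, strip packing is essentially the TSP path problem up to an additive $\max_i\width(P_i)$.
\item The paper's route yields the explicit bound $O(\log^2 n(\area(\mathcal I)+\max_i\width(P_i)))$ on the algorithm's cost. This is what \cref{lemma:box:packing:helper} actually uses. From your version you would still need the slope-sorting estimate to get $\OPT=O(\area(\mathcal I)+\max_i\width(P_i))$.
\end{itemize}

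One correction: the midpoint of the bottom edge is not an equivalent reference point. With it, the gap required for $P$ left of $Q$ is $\frac{b_P+b_Q}{2}+\max(0,\sigma_P-\sigma_Q)$. This is not symmetric in $P,Q$, and in one of the two orders it exceeds your formula. The center must therefore be the midpoint of the cross-section at height $1/2$ (the centroid), which is what your Steps 1 and 2 actually use.
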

\begin{proof}
There is a natural metric on $\{1,\dots, n\}$ defined by these parallelograms. Namely, if $i\neq j$ and $P_i,P_j$ have bases $b_i,b_j$ and shadows $s_i,s_j$, then we can define
\[
d(i,j)=\begin{cases}
0 & \text{if $i=j$},\\
\frac{b_i+b_j+|s_i-s_j|}{2}&  \text{if $i\neq j$ and both $P_i,P_j$ have nonnegative slope or both have negative slope,}\\
\frac{b_i+b_j+s_i+s_j}{2}& \text{otherwise}.
\end{cases}
\]
This metric measures the minimum possible distance between the centers of $P_i$ and $P_j$ in any packing of them into the strip. It is easy to check that this is indeed a metric.
Now run the algorithm of~\cref{thm:TSP:scheduling} on inputs $\{1,\dots,n\}$ considered as points of this metric space. Assume that the algorithm schedules $i$ at $t_i\in[0,\infty)$.  \Cref{observation:first:at:zero} implies $t_1=0$. If $P_1$ has base $b_1$ and shadow $s_1$, we have to make sure that it is packed so that it does not extend beyond the left wall of the strip. This can be done by defining $c=\frac{b_1+s_1}{2}$ and packing $P_1$ at $(c,1/2)$. In general, our algorithm packs $P_i$ with its center at $(c+t_i,1/2)$. The feasibility of this packing follows from the definition of the metric and the constraints of the online TSP scheduling problem. To bound the competitive ratio, note that the cost of a traveling salesperson path of $\{1,\dots, n\}$ in this metric is $O(\area(\mathcal{I})+\max_i \width (P_i))$, which can be seen by sorting the pieces $P_i$ by slope. The cost of the TSP scheduling solution returned by the algorithm of~\cref{thm:TSP:scheduling} is thus $O(\log^2 n(\area(\mathcal{I})+\max_i \width (P_i)))$. The cost of the packing produced is at most $\max_i \width (P_i)$ larger than this, accounting for the fact that the leftmost and rightmost pieces can each extend $\frac{1}{2}\max_i \width (P_i)$ beyond the centers of the pieces. The desired result follows since any strip packing algorithm must incur cost at least $\max(\area(\mathcal{I}),\max_i \width (P_i))$.
\end{proof}

We first use this lemma to design an algorithm for box packing in the special case where all pieces are unit-height dyadic parallelograms. 

\begin{lemma}\label{lemma:box:packing:helper}
There exists an online algorithm for box packing $n$ horizontal parallelograms of height one, $\mathcal{I}=(P_1,\dots,P_n)$ such that, letting $W=\max_i \width (P_i)$, each box opened has width at most $4W$, height $1$, and the total area of the boxes is $O(\log^2n (\area(\mathcal{I})+W))$.
\end{lemma}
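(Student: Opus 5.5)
We reduce to \cref{lemma:strip:same:height} by a guess-and-double strategy on the maximum width. Throughout, we maintain a current width estimate $\omega$: initially $\omega=\width(P_1)$, and whenever an arriving piece $P$ has $\width(P)>\omega$, we start a new \emph{phase} with $\omega$ replaced by $\width(P)$ (or by the smallest $2^k\omega\ge \width(P)$). Within a phase, we run a fresh copy of the strip packing algorithm of \cref{lemma:strip:same:height} on the pieces of that phase, packing them into a virtual strip $[0,\infty)\times[0,1]$. All pieces of the phase have width at most $\omega$ and height $1$.

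We then cut the virtual strip of the current phase into consecutive segments $[2j\omega,2(j+1)\omega]\times[0,1]$ of length $2\omega$ and realize them as boxes of width $4\omega$, height $1$. A segment's box is opened the first time a piece is placed whose leftmost $x$-coordinate lies in that segment. Because every piece has width at most $\omega$, a piece whose left end lies in segment $j$ lies entirely in $[2j\omega,(2j+3)\omega]$, which fits in a box of width $4\omega\le 4W$ anchored at $2j\omega$. Pieces in overlapping parts of consecutive boxes come from a single feasible strip packing. We therefore do not need the boxes to be disjoint regions of the strip; each box is a separate physical box, and we copy into box $j$ exactly the pieces whose left end lies in segment $j$. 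Those pieces are pairwise disjoint as they were in the strip, so each box is a valid packing. This handles the issue of pieces straddling a cut.

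\textbf{Area bound.} In a phase with estimate $\omega$ and piece set $\mathcal I'$ of size $n'\le n$, the strip packing has length $L=O(\log^2 n(\area(\mathcal I')+\omega))$ by \cref{lemma:strip:same:height}, since its competitive ratio is against $\OPT=O(\area(\mathcal I')+\omega)$. The number of boxes opened is at most $L/(2\omega)+1$, so the total box area in the phase is $O(L+\omega)=O(\log^2 n(\area(\mathcal I')+\omega))$. Summing over phases, the area terms sum to $O(\log^2 n\,\area(\mathcal I))$. The $\omega$ terms form a geometrically increasing sequence (if we double) bounded by $2W$. With the pure "reset to $\width(P)$" rule we instead have each new estimate at least twice the old if we choose the next power-of-two multiple, so the sum is $O(W)$. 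This gives total area $O(\log^2 n(\area(\mathcal I)+W))$.

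The main subtlety is keeping the width bound $4W$ exact while doubling. If the estimate is $\omega=2^k\omega_0$ chosen as the smallest such value $\ge \width(P)$, then $\omega<2W$, and boxes of width $4\omega$ could exceed $4W$. To fix this, we set $\omega$ equal to the actual current maximum width when starting a phase, so $\omega\le W$, and start a new phase only when the maximum width at least doubles. Pieces in a phase then have width at most $2\omega$. We use segments of length $2\omega$ and boxes of width $4\omega$: a piece with left end in a segment spans at most $2\omega+2\omega=4\omega\le 4W$. The phases' $\omega$ values at least double, so their sum is at most $2W$, preserving the area bound.
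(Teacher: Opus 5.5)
Your final version is correct and is essentially the paper's proof. That version sets $\omega$ to the current maximum width, restarts with fresh boxes only when a piece of width more than $2\omega$ arrives, and puts each piece in the width-$4\omega$ box anchored at the length-$2\omega$ segment containing its left end; this is exactly the paper's estimate $W_0\in[W^t,2W^t]$ with overlapping boxes $[kW_0,kW_0+2W_0]\times[0,1]$, taking $W_0=2\omega$. Drop the earlier ``reset to $\width(P)$ on every increase'' rule, because without forced doubling the $\omega$ terms need not sum to $O(W)$.
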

\begin{proof}
If we are given a value $W_0$ with $W\leq W_0\leq 2W$ as part of the input, we can directly use the strip packing algorithm of~\cref{lemma:strip:same:height}. Namely, cover the strip with boxes of the form $[kW_0,kW_0+2W_0]\times [0,1]$ where $k$ is a nonnegative integer. Every arriving piece packed by the strip packing algorithm will fit entirely in one of these boxes, and we use this box in the box packing algorithm. The total area of the boxes is clearly within a constant factor of the area used by the strip packing algorithm, and thus within $O(\log^2n)$ factor of $\area(\mathcal{I})+W$.

In general, we use a guess-and-double strategy for $W_0$. Let $W^{t}$ be the maximum width of all the pieces seen up to time $t$. We proceed in phases, such that at times $t$ in phase $i$, we have an estimate $W_i$ such that $W^{t}\leq W_i\leq 2W^{t}$. Whenever this constraint is violated, we start a new phase $i+1$ picking $W_{i+1}=2^kW_i$, where $k$ is the minimal integer such that the constraint holds for phase $i+1$. For phase $i+1$, we use completely new boxes. Let $\mathcal{I}_i$ be the pieces of phase $i$. For phase $i$, we use the algorithm above, which guarantees that the total area used for the boxes is $O(\log^2 n (\area(\mathcal{I}_i)+W_i))$. Summing this over all $i$ and using that the $W_i$ increase geometrically gives the desired result.
\end{proof}

We are now ready to prove our main box packing lemma.

\begin{proof}[Proof of~\cref{lemma:box:packing}]
When a polygon $P_i$ arrives, we first replace it by the dyadic horizontal parallelogram $P_i'$ from \cref{lemma:modification}. We then pack $P_i'$ and place $P_i$ by the same translation. By \cref{lemma:modification}, the resulting instance $\mathcal I'=(P_1',\dots,P_n')$ satisfies
\[
    \area(\mathcal I')=O(\area(\mathcal I)),\qquad
    \max_i\width(P_i')=O(W),\qquad
    \max_i\height(P_i')=O(H).
\]
It remains to pack the dyadic horizontal parallelograms. We treat each height $h$ (which is a power of two) separately. Let $\mathcal I'_h$ be the subsequence of pieces of height $h$, and let $W_h=\max_{P\in\mathcal I'_h}\width(P)$. For each height class, we use the box packing algorithm of~\cref{lemma:box:packing:helper} scaled to height $h$. The total area of the boxes in height class $h$ is $O(\log^2 n(\area(\mathcal I'_h)+hW_h))$ by that lemma, so the total area of all boxes is  
\[
\sum_h O(\log^2 n(\area(\mathcal I'_h)+hW_h)).
\]
Since the sets $\mathcal I'_h$ partition $\mathcal I'$, since $W_h\le \max_i\width(P_i')$ for all $h$, and since the distinct nonempty dyadic heights sum to at most twice $\max_i\height(P_i')$, this is
\[
    O(\log^2 n(\area(\mathcal I')
    +\max_i\height(P_i')\max_i\width(P_i'))).
\]
Using the bounds from \cref{lemma:modification}, this is $O(\log^2 n(\area(\mathcal I)+HW)).$
The width and height bounds on the boxes follow directly from~\cref{lemma:box:packing:helper}.
\end{proof}

\section{Main Online Planar Packing Results}
\label{sec:other-packings}
In this section, we prove~\cref{thm:main}, which we repeat here for convenience.

\main*

\subsection{Strip Packing}
To prove~\cref{thm:main}~(\ref{item:0}), we will use, as a subroutine,  the following result for online strip packing of axis-aligned rectangles:
\begin{lemma}[\cite{baker1983shelf,coffman1980performance}]\label{lemma:strip:rectangles}
There exists an algorithm for translational online strip packing of axis-aligned rectangles $\mathcal{I}=(R_1,\dots,R_n)$ into a strip of height~$1$ and unbounded length that produces a packing of cost
\[
O(\area(\mathcal{I})+ \max_{1\le i\le n} \width(R_i)).
\]
The algorithm need not know $n$ in advance.
\end{lemma}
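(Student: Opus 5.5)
This is the classical shelf-type result, and I would prove it with a Next-Fit Decreasing-Height style shelf algorithm adapted to the online setting, following Baker and Schwarz.

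\emph{Height classes and shelves.} Round the height $h_i\le 1$ of each arriving rectangle $R_i$ up to the dyadic value $2^{-k}$ with $2^{-k-1}<h_i\le 2^{-k}$, and assign $R_i$ to class $k$. For each class $k$ the algorithm keeps one open shelf: a horizontal band of height $2^{-k}$ spanning the full height-$1$ direction's complement. Concretely, we treat the strip as a sequence of vertical ``columns'' placed left to right. A new column of width $w$ is a region $[a,a+w]\times[0,1]$ opened at the current right end $a$ of the packing. Inside a column, shelves of height $2^{-k}$ are stacked vertically.

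\emph{Packing rule.} Round widths as well, to powers of two $2^{j}$ with $\width(R_i)\le 2^j<2\width(R_i)$. Maintain, for each pair $(j,k)$, a current open column of width $2^j$ subdivided into $2^k$ slots of size $2^j\times 2^{-k}$. An arriving rectangle goes into the next free slot of its $(j,k)$-column. When that column is full, a new one is opened at the right end. A rectangle fills at least a quarter of its slot, so every full column is at least $1/4$ covered by area. Hence the total width of full columns is at most $4\area(\mathcal I)$.

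\emph{Bounding the partially full columns.} This is the main obstacle, because there may be many classes $(j,k)$, each with one partially full column, and we cannot afford $\sum_{j}2^j$ over all present $j$ times the number of $k$'s. The width cost of open columns of class $j$ summed over $k$ is unbounded in the number of classes. To fix this, I would instead merge all height classes of a given width class $j$ into a single column of width $2^j$. This column is filled by a buddy-style dyadic subdivision of its unit height: each slot of height $2^{-k}$ is carved from a free dyadic interval of $[0,1]$. A column is only closed when some arriving piece cannot find a free aligned dyadic interval of its size. In that case the occupied height is at least $1-2^{-k}\ge 1/2$ in the buddy system, since free blocks of distinct sizes smaller than $2^{-k}$ each appear at most once. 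So all closed columns are $\ge 1/4$ area-full, except in the case $k=0$, where a height-$>1/2$ piece alone gives the same bound. Then the open columns contribute at most $\sum_{j\le j_{\max}}2^j\le 2\cdot 2^{j_{\max}}\le 4\max_i\width(R_i)$. Widths smaller than the column never arise, since the width class is fixed per column.

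Adding the two contributions gives cost at most $16\area(\mathcal I)+4\max_i\width(R_i)$. The algorithm never uses $n$, which proves the lemma. The step needing the most care is the buddy-allocation claim that a failed insertion forces at least half the column height to be occupied. I would verify it by the standard invariant that the free space consists of aligned dyadic intervals with at most one of each size below the request size.
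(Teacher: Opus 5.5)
Your proof is correct in substance, but it takes a different route from the paper. The paper gives no algorithm of its own. It combines the offline bound of Coffman, Garey, Johnson and Tarjan, $\OPT(\mathcal I)\le 2\area(\mathcal I)+\max_i\width(R_i)$, with the constant competitiveness of the online shelf algorithms of Baker and Schwarz. Hence the online cost is $O(\area(\mathcal I)+\max_i\width(R_i))$.

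You instead build a self-contained algorithm: one open column of width $2^j$ per dyadic width class, with buddy allocation of dyadic slots along the unit height. You then bound its cost directly, never comparing to $\OPT$. Closed columns are charged to area, and the open columns contribute a geometric sum $<4\max_i\width(R_i)$. The citation route is a two-line argument that inherits the good constants of the shelf algorithms. Yours is elementary, gives explicit constants, and makes it transparent that $n$ is not needed.

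Three details should be tightened.

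First, the allocator must be the best-fit buddy rule: serve a request of size $2^{-k}$ from the smallest free block of size at least $2^{-k}$, splitting it. Only this rule gives the invariant of at most one free block per size. If any free aligned interval may be used, two tiny pieces at heights $0$ and $1/2$ block a request of size $1/2$ while occupying almost nothing.

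Second, a failed request with $k\ge 1$ leaves allocated height $>1/2$, and each piece fills more than a quarter of its slot. So the column is more than $1/8$ full by area, not $1/4$.

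Third, the case $k=0$ needs an explicit charging step. The column closed by a height-$>1/2$ request may contain a single tiny piece. Charge its width $2^j$ to the closing piece instead: that piece sits in the next column, has area $>2^j/4$, and closes at most one column.

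With these fixes, closed columns have total width $<12\area(\mathcal I)$, so your final bound $16\area(\mathcal I)+4\max_i\width(R_i)$ still holds.
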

Indeed, Coffman, Garey, Johnson, and Tarjan~\cite{coffman1980performance} establish the offline bound
\[
\OPT(\mathcal{I})\leq 2\area(\mathcal{I}) +\max_{1\le i\le n} \width(R_i),
\]
and Baker and Schwarz~\cite{baker1983shelf} give a constant-competitive online
algorithm for strip packing.

We are now ready to prove the result from~\cref{thm:main} regarding strip packing.

\begin{proof}[Proof of~\cref{thm:main}~(\ref{item:0})]
Let us denote the strip packing algorithm of~\cref{lemma:strip:rectangles} by $\mathcal{A}$.
We run the box packing algorithm of~\cref{lemma:box:packing} on the arriving pieces $\mathcal{I}=(P_1,\dots, P_n)$, and feed the created boxes to $\mathcal{A}$. Let $B_1,\dots,B_m$ be the created boxes where $m\leq n$. Let $A=\sum_i \area(P_i)$, $W=\max_i \width(P_i)$, and $H=\max_i \height(P_i)$. Similarly, let $A_R=\sum_i \area(B_i)$, $W_R=\max_i \width(B_i)$, and $H_R=\max_i \height(B_i)$.~\cref{lemma:box:packing} yields that $A_R=O((A+WH)\log^2n)$, $W_R=O(W)$ and $H_R=O(H)$. Using~\cref{lemma:strip:rectangles}, it follows that the cost of the strip packing produced by $\mathcal{A}$ is 
\[
O(A_R+W_R)=O((A+WH)\log^2n+W)=O((A+W)\log^2 n),
\]
where the last step used that $H\leq 1$ in strip packing. The desired result follows since trivially any packing must incur cost $\max(A,W)$.
\end{proof}

\subsection{Bin Packing}\label{sec:bin-packing}
In this section, we study the online translational bin packing problem.
We proceed by a reduction to strip packing.

\begin{proof}[Proof of~\cref{thm:main}~(\ref{item:i})]
Assume that every input polygon $P$ has horizontal span at most $\delta<1$.
We prove that one can obtain an online bin-packing algorithm $A_\textrm{bin}$ with competitive ratio $\frac{1}{1-\delta} \cdot \log^2 n$.
The idea is to simulate the $O(\log^2 n)$-competitive strip packing algorithm from~\cref{thm:main}~(\ref{item:0}), denoted by $A_{\mathrm{strip}}$, but instead of placing objects in a strip, we place them into bins. The bin to which an object is assigned depends on where $ A_{\mathrm{strip}}$ places it in the strip.

We define $\Delta = 1-\delta$, and for each integer $k\geq 0$, let
\[
    J_k=[k\Delta, (k+1)\Delta] \qquad \text{and} \qquad W_k=[k\Delta, k\Delta+1] \times [0, 1] .
\]
We assign each polygon $P$ that would be placed on the strip the unique index $k$, such that its leftmost $x$-coordinate lies in $J_k$.
This implies $P \subseteq W_k$, because we know its leftmost $x$-coordinate lies in $[k\Delta, (k+1)\Delta]$ and because $P$ has horizontal span at most $\delta$, its rightmost $x$-coordinate is upper-bounded by
\[(k+1)\Delta + \delta = (k+1)(1-\delta)+\delta = k\Delta+1.\]
Hence all polygons assigned to the same index $k$ can be placed into the same unit-square bin. Therefore the number of bins used is at most the number of ``windows'' $W_k$ intersecting the used strip. Let $L$ be the length of the used strip, i.e., the cost of the strip packing. The number of used bins is therefore
$\big\lceil \frac{L}{\Delta}\big\rceil$.
Thus
\[
    A_\textrm{bin}(I) \leq \bigg\lceil\frac{A_\textrm{strip}(I)}{\Delta}\bigg\rceil \leq \frac{A_\textrm{strip}(I)}{1-\delta} +1.
\]
Since any packing into $k$ unit bins yields a strip packing of length $k$, then
$\textrm{OPT}_\textrm{strip}(\mathcal I) \leq \textrm{OPT}_\textrm{bin}(\mathcal I)$.
Thus, the cost is 
\[
     A_\textrm{bin}(I) \leq \frac{1}{1-\delta} \cdot A_\textrm{strip}(\mathcal I) + 1 \leq \frac{\log^2(n)}{1-\delta}\cdot \textrm{OPT}_\textrm{strip}(\mathcal I)+1 \leq \frac{\log^2(n)}{1-\delta}\cdot \textrm{OPT}_\textrm{bin}(\mathcal I)+1 .
\]
The desired result follows since $\textrm{OPT}_\textrm{bin}\geq 1$.
\end{proof}

\subsection{Perimeter Packing}\label{sec:perimeter-packing}
We now turn our attention to perimeter packing. We use the constant-competitive algorithm of Abrahamsen and Beretta \cite{AbrahamsenBeretta20} for translational perimeter packing of axis-aligned rectangles.

\begin{lemma}[\cite{AbrahamsenBeretta20}, proof of Theorem~1] \label{lemma:perimeter:rectangles} There exists a deterministic online algorithm for translational packing of axis-aligned rectangles $\mathcal R$ such that the perimeter of the bounding box is 
\[O\left(\sqrt{\area(\mathcal R)} + \max \{H_{\mathcal R}, W_{\mathcal R}\}\right),\] 
where  $W_{\mathcal R} = \max_{R \in \mathcal R} \width(R)$ and $H_{\mathcal R} = \max_{R \in \mathcal R} \height(R)$. 
\end{lemma}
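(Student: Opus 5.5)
The plan is a guess-and-double argument over square-like containers, with the strip packing algorithm of \cref{lemma:strip:rectangles} as the packing routine inside each container. Let $A=\area(\mathcal R)$ and $M=\max\{H_{\mathcal R},W_{\mathcal R}\}$. The target $O(\sqrt{A}+M)$ is the side length of a square whose area is $\Theta(A)$ and which can hold the largest piece. So it suffices to produce a packing whose bounding box has both side lengths $O(\sqrt A+M)$.

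\textbf{The phase scheme.} The algorithm runs in phases $k=0,1,2,\dots$, each with a scale $s_k$, where $s_0$ is the larger dimension of the first rectangle and $s_{k+1}\ge 2s_k$.
\begin{itemize}
\item Phase $k$ owns a vertical strip of width $s_k$ placed immediately to the right of all earlier strips.
\item Inside that strip we run the algorithm of \cref{lemma:strip:rectangles}, rotated and rescaled by $s_k$, so that the strip's unbounded direction is vertical.
\item Phase $k$ ends when one of two events occurs: a rectangle arrives with width or height exceeding $s_k$, or the height used in the current strip would exceed $c\,s_k$ for a suitable constant $c$.
\item The new scale $s_{k+1}$ is the smallest $2^j s_k$ with $j\ge1$ that is at least the dimensions of the offending rectangle. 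The offending rectangle is then packed in phase $k+1$.
\end{itemize}

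\textbf{Feasibility and the bounding box.} Strips of different phases are disjoint, so the packing is feasible. The bounding box has width $\sum_k s_k\le 2s_K$, where $K$ is the last phase, by the geometric growth of the scales. Its height is at most $\max_k c\,s_k=O(s_K)$. It therefore remains to show $s_K=O(\sqrt A+M)$.

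\textbf{The guarantee inside a phase.} After rescaling, \cref{lemma:strip:rectangles} bounds the height used in phase $k$ by $O(A_k/s_k+H_k)$. Here $A_k$ is the area of the rectangles packed in phase $k$, and $H_k\le s_k$ is the largest height among them; during a phase all dimensions are at most $s_k$, since the dimension condition would otherwise have ended the phase. Choosing $c$ larger than twice the hidden constant, the height condition can end phase $k$ only if $A_k=\Omega(s_k^2)$. This gives $s_k=O(\sqrt A)$, and hence $s_{k+1}=O(s_k)=O(\sqrt A)$ when the jump is a single doubling.

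\textbf{Controlling $s_K$, the main obstacle.} The delicate step is ruling out a large jump in scale. Whenever $s_{k+1}>2s_k$, this is because some rectangle has dimension at least $s_{k+1}/2$, so that phase's scale is $O(M)$. Every scale reached by doubling is either $O(\sqrt A)$ by the area argument, or twice a scale that is itself $O(M)$ or $O(\sqrt A)$. An induction on $k$ then gives $s_K=O(\sqrt A+M)$.

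One subtlety needs care. The dimension condition may fire even though the current phase has little area, which forces an unnecessary doubling. Any scale produced this way is still bounded by twice the offending dimension, hence $O(M)$, so the induction goes through.

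With $s_K=O(\sqrt A+M)$, the bounding box has perimeter $O(s_K)=O(\sqrt{A}+M)$, as claimed. The algorithm never uses $n$, and it is deterministic because the algorithm of \cref{lemma:strip:rectangles} is.
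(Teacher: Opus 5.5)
Your proof is correct, but it does not follow the paper. The paper gives no proof of its own and simply imports the bound from the proof of Theorem~1 in Abrahamsen and Beretta, which analyses their constant-competitive perimeter-packing algorithm. Since $\OPT_{\mathrm{per}}=\Theta(\sqrt{A}+M)$, the stated bound is, up to constants, the same thing as $O(1)$-competitiveness.

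You instead rebuild such an algorithm from \cref{lemma:strip:rectangles} alone. You guess-and-double a scale $s_k$, run the rotated strip packer in a column of width $s_k$ capped at height $c\,s_k$, and place the columns side by side. This makes the lemma self-contained: it rests only on the strip-packing subroutine already used for \cref{thm:main}~(\ref{item:0}). The price is an unspecified constant, which is harmless because \cref{thm:main}~(\ref{item:ii}) only needs the $O(\cdot)$ form. The cited result instead comes with a concrete constant from a tailored analysis.

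One tightening: drop the ``induction on $k$''. Carried out with $O(\cdot)$ it would be the classic fallacy, since constants compound along a chain of doublings. It is also unnecessary, because each transition bounds the new scale directly. Let $C$ be the constant of \cref{lemma:strip:rectangles} and take $c\ge 2C$.
\begin{itemize}
\item A height overflow gives $c\,s_k<C(A'/s_k+s_k)$, where $A'\le A$ is the area of the phase together with the offending piece. Hence $s_k<\sqrt{A}$ and $s_{k+1}=2s_k<2\sqrt{A}$.
\item A dimension overflow by a piece with larger dimension $m>s_k$ gives $s_{k+1}<2m\le 2M$. This holds also when $j=1$, precisely because $m>s_k$.
\item Finally, $s_0\le M$.
\end{itemize}
Hence $s_K<2\max\{M,\sqrt{A}\}$ outright, with no recursion.
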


We can now prove our main result about perimeter packing.

\begin{proof}[Proof of~\cref{thm:main}~(\ref{item:ii})]
Let $\mathcal I = (P_1,\ldots,P_n)$ be the input, and write \[ W:=\max_i\width(P_i),\qquad H:=\max_i\height(P_i). \]
We run \cref{lemma:box:packing} on $\mathcal I$, and place every opened box using the rectangle packing algorithm from \cref{lemma:perimeter:rectangles}. Let $\mathcal B$ be the set of opened boxes, and define 
\[  W_{\mathcal B}:=\max_{B\in\mathcal B}\width(B),\qquad H_{\mathcal B}:=\max_{B\in\mathcal B}\height(B). 
\]
We get $\area(\mathcal B) = O(\log^2 n (\area(\mathcal I)+HW))$, $H_\mathcal B = \Theta(H)$ and $W_\mathcal B = \Theta(W)$.

Each original polygon is packed inside its assigned box, so this gives a feasible packing of $\mathcal I$. Moreover, by \cref{lemma:perimeter:rectangles}, the perimeter of the produced bounding box is 
\begin{align*}
    O\left(\sqrt{\area(\mathcal B)} + \max\{H_\mathcal B, W_\mathcal B\}\right)
    &= O\left(\log n \sqrt{\area(\mathcal I)+HW} + \max\{H, W\}\right) \\
    &= O\left(\log n (\sqrt{\area(\mathcal I)} + \max\{H, W\})\right)\\
    &= O\left(\log n \cdot \OPT_{\mathrm{per}}(\mathcal I)\right) .
\end{align*}
Here the last step follows from the fact that any solution to perimeter packing of $\mathcal{I}$ must have perimeter at least $\max (4\sqrt{\area(\mathcal{I}}),2H,2W)$

\end{proof}

\subsection{Area Packing}\label{sec:area-packing}

We now consider the online translational packing problem in the plane in
which the cost of a packing is the area of its axis-aligned bounding box.
For an instance $\mathcal I$, let $\OPT_{\mathrm{area}}(\mathcal I)$
denote the minimum possible area of such a bounding box.

We use the following consequence of the analysis of the algorithm by
Abrahamsen and Beretta~\cite{AbrahamsenBeretta20}.

\begin{lemma}[\cite{AbrahamsenBeretta20}, proof of Theorem 9]
\label{lemma:area:rectangles}
There exists an online algorithm for
translational packing of axis-aligned rectangles
$\mathcal R=(R_1,\dots,R_m)$ in the plane such that the area of the
axis-aligned bounding box produced by the algorithm is
\[
\mathcal A_{\mathrm{rect}}(\mathcal R)
=
O\left(
\sqrt m\left(
\area(\mathcal R)
+
\max_{1\le i\le m}\width(R_i)
\cdot
\max_{1\le i\le m}\height(R_i)
\right)
\right).
\]
\end{lemma}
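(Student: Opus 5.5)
Since \cref{lemma:area:rectangles} is quoted from the proof of Theorem~9 in~\cite{AbrahamsenBeretta20}, the intended proof simply re-reads that analysis. If I had to reconstruct it, I would combine a shelf algorithm inside a box of carefully chosen width with a guess-and-double strategy on the scale $S:=\area(\mathcal R)+WH$, where $W$ and $H$ are the current maximum width and height. I expect the whole $\sqrt m$ loss to come from combining the boxes of different phases.

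\emph{Single phase.} Suppose $S$ and $W,H$ are known up to constant factors, and by symmetry (swap the axes) assume $H\le W$, so that $H\le\sqrt{WH}\le\sqrt S$. I would use shelf packing in a strip of fixed width $X:=\max\{W,\sqrt S\}$, growing upwards.
\begin{itemize}
\item Round each rectangle's height up to a power of two and keep one open shelf per dyadic height class $h$.
\item A rectangle goes into the open shelf of its class if it fits; otherwise a new shelf of height $h$ and width $X$ is opened on top.
\end{itemize}
Every closed shelf is at least half full in area, because each rectangle has width at most $W\le X$ and height more than $h/2$. The open shelves have total height at most $\sum_h h\le 2H$. Hence the used height is $O(\area(\mathcal R)/X+H)$, and the area is $O(\area(\mathcal R)+XH)$. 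Moreover $XH\le WH+\sqrt S\,H\le WH+S$, so one phase costs $O(S)$.

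\emph{Phases.} Online, none of $W$, $H$, $S$ or the orientation is known. I would run phases: a new phase starts, in a fresh box, whenever $S$ has at least doubled, or $W$ or $H$ has outgrown its estimate by a factor~$2$, or the orientation assumption $H\le W$ flips. Each phase $i$ then produces a rectangle $B_i$ with $\area(B_i)=O(S_i)$. Its side lengths are $O(\max\{W_i,\sqrt{S_i}\})$ and $O(\area(\mathcal R_i)/X_i+H_i)$, or the transpose.

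\emph{Combining the phase boxes (main obstacle).} The difficulty is placing the boxes $B_i$ online so that their bounding box has area only $O(\sqrt m\,S)$. The boxes may have very different aspect ratios, for example a long thin box from an early phase with tiny $H$ followed by a tall one. This is exactly where the $\Omega(\sqrt n)$ lower bound of~\cite{AbrahamsenBeretta20} lives, so some loss is unavoidable.

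First attempt: stack all horizontally oriented boxes vertically and all vertically oriented boxes horizontally, in an L-shape.
\begin{itemize}
\item The resulting bounding box has area at most $(\sum_i w_i)(\sum_i h_i)$ summed over the two families.
\item There are at most $m$ phases, since each contains a rectangle. Hence $\sum_i h_i\le \sqrt{m\sum_i h_i^2}$ by Cauchy--Schwarz.
\item Each $h_i$ is at most $\sqrt{S_i}$ up to the term $\area(\mathcal R_i)/X_i\le\sqrt{S_i}$, so $\sum_i h_i^2=O(\sum_i S_i)=O(S)$ by geometric growth.
\item Since $\max_i w_i=O(\max\{W,\sqrt S\})$, the width of the stack is at most a constant times $\max\{W,\sqrt S\}$.
\end{itemize}
Together these give area $O(\max\{W,\sqrt S\}\cdot\sqrt{mS})$.

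This is $O(\sqrt m\,S)$ when $W\le\sqrt S$. When $W>\sqrt S$ it degrades, and the point I am least sure of is how to repair that case. What I would try first is to split phases further by dyadic width, so that the boxes within a column share a common width scale, and then charge $W\cdot\sum_i h_i$ against $\sqrt m\,(WH+\area(\mathcal R))$ using $h_i=O(H+\area(\mathcal R_i)/W)$.
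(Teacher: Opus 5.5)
The paper gives no proof of this lemma. It imports the bound from the analysis of Theorem~9 in \cite{AbrahamsenBeretta20}, so your reconstruction has to stand on its own, and it does not.

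\textbf{The phase boxes have the wrong shape.} The case $W>\sqrt S$ that you leave open is not just a weakness of the L-shaped arrangement: no arrangement of your boxes can work. Feed $k$ rectangles of size $k\times 1$, then one of size $L\times\varepsilon$ with $L\ge k^2$.
\begin{itemize}
\item While the first $k$ arrive, $H=1\le W=k$ and $S\le k(k+1)$. So your strip width is $X=\max\{W,\sqrt S\}\in[k,2k)$, and each shelf holds a single rectangle.
\item Hence the largest phase box contains $\Omega(k)$ rectangles stacked on top of each other.
\item Any packing containing that box and a rectangle of width $L$ has bounding-box area $\Omega(kL)$.
\item But $\sqrt m\,S=O(\sqrt k\,L)$, and even $\OPT\le(1+\varepsilon)L$ (put the $k$ rectangles in one row on top of the long one).
\end{itemize}
Your suggested repair fails on the same instance. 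It needs $h_i=O(H+\area(\mathcal R_i)/W)$, but here $H+\area(\mathcal R)/W=O(1)$, while the early boxes, already committed, have total height $k$. The $\Omega(\sqrt n)$ lower-bound construction shows what the hedge must be. After these $k$ rectangles the adversary may follow with a long horizontal or a long vertical rectangle. Surviving both within a factor $O(\sqrt k)$ forces a packing of roughly $k^{3/2}\times k^{1/2}$: balanced in units of the current $W$ and $H$, not square in absolute units.

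\textbf{The geometric-growth step is false.} The claim ``$\sum_i S_i=O(S)$ by geometric growth'' fails under your own phase rule. A phase triggered by an orientation flip, or by $W$ or $H$ doubling while $\area(\mathcal R)$ dominates, need not increase $S$. Take $\varepsilon\le 1/m$, lengths $1<t_1<t_2<\dots<2$, and the sequence $t_1\times\varepsilon$, $\varepsilon\times 1$, $\varepsilon\times t_2$, $1\times\varepsilon$, $t_3\times\varepsilon$, $\varepsilon\times 1$, $\varepsilon\times t_4$, $1\times\varepsilon$, and so on.
\begin{itemize}
\item For $j\ge 2$, the rectangle of length $t_j$ flips the orientation and opens a fresh box.
\item The next rectangle ($\varepsilon\times1$ or $1\times\varepsilon$) does not flip, but it is the first of its dyadic class in that box. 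So it opens a shelf (or column) of area greater than $1$.
\item The boxes therefore have total area $\Omega(m)$, while $\area(\mathcal R)+WH=O(1)$ and $\OPT=O(1)$.
\end{itemize}
So the claimed bound fails before any question of arranging boxes arises. A correct argument must avoid opening fresh boxes on such triggers, for instance by maintaining persistent structures for wide and for tall rectangles. It must also replace the geometric-growth claim by an actual charging argument.

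A smaller point: a closed shelf need not be half full when widths can be as large as $X$. Only two consecutive shelves of the same class are jointly over-full, which still gives your height bound.
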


Our main result about area packing is now proved as follows.

\begin{proof}[Proof of~\cref{thm:main}~(\ref{item:iii})] 
Let $\mathcal I=(P_1,\ldots,P_n)$ be the input, and write 
\[ 
    W:=\max_i\width(P_i),\qquad H:=\max_i\height(P_i). 
\] 
Every feasible packing has bounding-box area at least $\area(\mathcal I)$ and at least $HW$. 
Hence $ \area(\mathcal I)+HW=O(\OPT_{\mathrm{area}}(\mathcal I))$.  
We run \cref{lemma:box:packing} on $\mathcal I$, and place every opened box using the rectangle packing algorithm from \cref{lemma:area:rectangles}. Let $\mathcal B$ be the set of opened boxes, and define 
\[ 
W_{\mathcal B}:=\max_{B\in\mathcal B}\width(B),\qquad H_{\mathcal B}:=\max_{B\in\mathcal B}\height(B). 
\] 
We get 
\[ 
\area(\mathcal B)=O(\log^2 n(\area(\mathcal I)+HW)), \qquad H_{\mathcal B}=O(H), \qquad W_{\mathcal B}=O(W). 
\] 
Moreover, $|\mathcal B|\le n$. By \cref{lemma:area:rectangles}, the area of the produced bounding box is \begin{align*} O\left(\sqrt{|\mathcal B|} \left(\area(\mathcal B)+H_{\mathcal B}W_{\mathcal B}\right)\right) &= O\left(\sqrt n \left(\log^2 n(\area(\mathcal I)+HW)+HW\right)\right)\\ &= O\left(\sqrt n\log^2 n(\area(\mathcal I)+HW)\right)\\ &= O\left(\sqrt n\log^2 n\cdot \OPT_{\mathrm{area}}(\mathcal I)\right), \end{align*} 
where the last step follows from the fact that any solution to area packing of $\mathcal{I}$ must have area  at least $\max (\area(\mathcal{I}),HW)$.
Each original polygon is packed inside its assigned box, so this gives a feasible packing of $\mathcal I$. 
\end{proof}

\section{Packing Hyperdisks}
\label{ssec:disks}
We finish with an application of online TSP scheduling to the higher-dimensional packing problem introduced by Alt, Cabello, Cheong, Park, and Seiferth~\cite{AltCCPS26}. 
The offline algorithm from~\cite{AltCCPS26} reduces the translational volume packing of unit hyperdisks to ordering them along suitable stabbing lines, where the ordering can be computed using an approximation of the minimum-length Hamiltonian path using Christofides' algorithm, as the distances form a metric.
We replace this offline ordering step by our online TSP scheduling algorithm.

A unit hyperdisk in $\mathbb R^{d+1}$ is a unit-radius ball contained in a $d$-dimensional affine subspace. From now on, we assume that all hyperdisks in this section are unit hyperdisks.
Since rotations are not allowed, a hyperdisk is specified, up to translation, by the unit normal vector of its supporting hyperplane. We assume that the input hyperdisks are pairwise non-parallel.
For a finite set $\mathcal{D}$ of hyperdisks in $\mathbb{R}^{d+1}$, let $\OPT_\mathrm{AABB}(\mathcal D)$ be the minimum volume axis-aligned bounding box for a translational packing of $\mathcal{D}$.

Let $s \in \mathbb S^d$ be a unit vector in $\mathbb{R}^{d+1}$. Given a sequence of hyperdisks $D_1, \dots, D_m$, its \emph{stabbing length} (for short, \emph{span}) \emph{in direction $s$} is defined as follows: translate the hyperdisks such that their centers are on a line of direction vector $s$ in the given order so that consecutive hyperdisks touch but their relative interiors are disjoint; the stabbing length is the distance between the first and the last center. In general, for any translational packing of a set of hyperdisks $\mathcal{D}$, where the centers are on a line in direction $s$, the \emph{stabbing length} of the packing is the distance between the first and the last center.

The \textsc{DiskStabbing}$[\mathcal{D},s]$ problem for a finite set of hyperdisks $\mathcal D=\{D_1, \dots, D_m\}$ with respect to a direction $s\in \mathbb{S}^d$ is to find a linear order (i.e., a permutation) of the hyperdisks in $\mathcal{D}$ that minimizes the stabbing length in direction $s$.
Let $\OPT_\mathrm{stab}(\mathcal D,s)$ denote the optimum solution of \textsc{DiskStabbing}$[\mathcal{D},s]$.

We will use the following results (where $e_1,\ldots, e_{d+1}$ denote the standard basis vectors in $\mathbb{R}^{d+1}$). Remember, $d_s(D, D')$ is the separation needed along direction $s \in \mathbb S^d$ so that after translation hyperdisks $D$ and $D'$ do not overlap. 

\begin{lemma}[{\cite[Theorem 3]{AltCCPS26}}] \label{lemma:disks:metric}
    For any $s \in \mathbb S^d$, the function $d_s$ is a metric on the set of $d$-dimensional unit hyperdisks with normals not orthogonal to $s$ (and where parallel hyperdisks are considered equivalent). 
\end{lemma}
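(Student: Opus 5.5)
I will compute $d_s$ in closed form, then check the metric axioms directly. Fix $s\in\mathbb S^d$ and two unit hyperdisks $D,D'$ with unit normals $u,u'$, neither orthogonal to $s$, and centered at the origin. Place $D'$ at $ts$ for $t\ge 0$. The relative interiors of $D$ and $D'+ts$ are disjoint exactly when $t$ is at least the largest $t$ for which $D\cap(D'+ts)\neq\emptyset$; equivalently, $ts\in D-D'$ (the Minkowski difference, a convex body). So the first step is
\[
d_s(D,D')=\max\{t\ge 0: ts\in D+(-D')\}=\max\{t: ts\in D\oplus D'\},
\]
using the central symmetry $-D'=D'$. This is the gauge function of the convex, centrally symmetric body $K=D\oplus D'$ evaluated along $s$: $d_s(D,D')=1/\|s\|_K$. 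For parallel disks, $K=2D$ and $d_s(D,D')=2r$, where $r$ is the half-length of the chord of $D$ in direction $s$ (finite and positive because $u\not\perp s$); for $u=u'$ we treat this as distance $0$ by the stated identification.

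Nonnegativity and symmetry are then immediate: $D\oplus D'=D'\oplus D$, and $K$ is symmetric so the direction sign is irrelevant. Positivity for non-parallel disks follows since $K$ contains a neighbourhood of $0$ in the plane spanned by $s$ and the disks, and the quotient identifies parallel copies.

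The real content is the triangle inequality $d_s(D,D'')\le d_s(D,D')+d_s(D',D'')$. I would prove it geometrically rather than via the formula. Place $D'$ at $d_s(D,D')s$ touching $D$, and $D''$ at $(d_s(D,D')+d_s(D',D''))s$ touching $D'$. I want to show $D$ and $D''$ are then disjoint (in relative interiors), which gives exactly $d_s(D,D'')\le$ the sum by monotonicity: once $ts\notin \operatorname{relint}$-overlap range, larger $t$ stays disjoint since the overlap set $\{t: ts\in K\}$ is an interval by convexity. The key observation is a separation statement: the hyperdisk $D'$, being a flat disk crossed by the stabbing line, separates the infinite cylinder $C=\{x: \operatorname{dist}(x,\ell)\le 1\}$ around the line $\ell=\mathbb R s$ locally. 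More precisely, I would project onto $\ell$ after restricting to the slab, or better: show every point of $D$ lies "below" the hyperplane of $D'$ within $C$ and every point of $D''$ lies "above" it. Since all disks have radius $1$ and centers on $\ell$, each lies in $C$, and $D'$ (a unit disk centered on $\ell$ with normal not orthogonal to $s$) meets every segment of $C$ parallel to $\ell$? That is false in general (a tilted disk does not fill the cylinder cross-section), which is the main obstacle.

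So for the triangle inequality I would instead use the formula: by the support-function description, $t\,s\in D\oplus D'$ iff for all unit $w$, $t\langle w,s\rangle\le h_D(w)+h_{D'}(w)$ with $h_D(w)=\sqrt{1-\langle w,u\rangle^2}$. Thus
\[
d_s(D,D')=\min_{w:\langle w,s\rangle>0}\frac{\sqrt{1-\langle w,u\rangle^2}+\sqrt{1-\langle w,u'\rangle^2}}{\langle w,s\rangle}.
\]
Taking $w$ optimal for the pair $(D,D'')$, and using $h_{D'}\ge 0$, the triangle inequality would follow from $h_D(w)+h_{D''}(w)\le$ something involving the minimisers for the other two pairs. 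This is where I expect the difficulty. My first attempt would be to show the minimiser $w$ can always be taken in a two-dimensional family determined by $s$ and the normals, reducing to a planar computation (chord lengths of ellipses, the projections of the disks onto a plane containing $s$). There the disks become segments of known length and slope through points of a line, and the distance is the $\tfrac{b_i+b_j+|s_i-s_j|}{2}$-type formula already used for parallelograms in \cref{lemma:strip:same:height}. Its triangle inequality is elementary, a sum of absolute differences. Because this lemma is cited from \cite[Theorem 3]{AltCCPS26}, I would ultimately rely on that reference for this reduction step if the direct argument becomes too long.
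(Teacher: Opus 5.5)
The paper gives no proof of this lemma; it imports it from \cite[Theorem~3]{AltCCPS26}. Deferring to that reference is therefore legitimate. As an independent argument, however, your proposal leaves the only non-trivial axiom, the triangle inequality, unproved, and the route you sketch would fail.

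\emph{The support-function route.} Your formula writes $d_s$ as an \emph{infimum} over $w$ of $\Phi_{D,D'}(w)=(h_D(w)+h_{D'}(w))/\langle w,s\rangle$. What $h_{D'}\ge 0$ gives is the pointwise bound $\Phi_{D,D''}\le\Phi_{D,D'}+\Phi_{D',D''}$. Hence $d_s(D,D'')\le\min_w\bigl(\Phi_{D,D'}(w)+\Phi_{D',D''}(w)\bigr)$, but that minimum is \emph{at least} $d_s(D,D')+d_s(D',D'')$, which is the wrong direction. Fixing $w$ optimal for $(D,D'')$ gives no lower bound on the other two terms.

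\emph{The planar reduction.} Projecting onto a plane containing $s$ turns a hyperdisk into an ellipse, not a segment. Even with \emph{sections}, where the disks do become length-$2$ segments centred on the stabbing line, the planar distance is not the parallelogram formula of \cref{lemma:strip:same:height}. That formula has a trivial triangle inequality because all parallelograms have the same height. Here the segments reach different distances perpendicular to the line, so the obstacle you found for the cylinder (a steep middle disk does not shield the outer two) reappears in the plane.

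\emph{The parallel case.} Your computation is backwards. If $u\not\perp s$, the line through the centre in direction $s$ meets $D$ only at the centre. So $K=2D\subset u^\perp$ meets $\mathbb R s$ only in $0$, and $d_s(D,D)=0$ automatically. A chord of positive length in direction $s$ exists only when $u\perp s$.

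The missing idea is a \emph{supremum} representation via sections. Each hyperdisk meets the stabbing line $\ell$ only in its centre. So a common relative-interior point of $D$ and $D'+ts$ with $t\neq0$ lies in some $2$-plane $\Pi\supset\ell$. There $D\cap\Pi$ is a diameter of $D$, with slope $a_\Pi=-\langle\theta,u\rangle/\langle s,u\rangle$ in an orthonormal frame $(\theta,s)$ of $\Pi$. Two such segments with slopes $a,b$ and centres at distance $t$ cross iff $|t|<\delta(a,b):=|a-b|/\sqrt{1+\max(a^2,b^2)}$. Hence $d_s(D,D')=\sup_\Pi\delta(a_\Pi,b_\Pi)$. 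Since a supremum of pseudometrics is a pseudometric, it suffices to prove the triangle inequality for $\delta$ on $\mathbb R$.

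Take slopes $a,b,c$ with $|a|\ge|c|$.
\begin{enumerate}
\item If $|b|\le|a|$, both $\delta(a,b)$ and $\delta(b,c)$ have denominator at most $\sqrt{1+a^2}$, and you are reduced to $|a-c|\le|a-b|+|b-c|$.
\item If $|b|>|a|$ (the steep middle segment, which is the genuinely hard case), by symmetry $b>0$. You then need $F(b)\ge|a-c|/\sqrt{1+a^2}$ for $F(x)=(2x-a-c)/\sqrt{1+x^2}$. Since $F'(x)=(2+(a+c)x)/(1+x^2)^{3/2}$, you get $F(b)\ge\min\{F(|a|),2\}$. Both $F(|a|)$ and $2$ dominate $|a-c|/\sqrt{1+a^2}$ because $|c|\le|a|$.
\end{enumerate}
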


\begin{theorem}[{\cite[Theorem 6]{AltCCPS26}}]\label{lemma:disks:stabbing}
    Let $\mathcal D$ be a family of unit hyperdisks whose normals make an angle of at most $\phi_0 =\arccos(1/\sqrt{d+1})$ with the $x_{d+1}$-axis. 
    Then \[\OPT_\mathrm{stab}(\mathcal D,e_{d+1})\leq 2(d+1)^d \cdot \OPT_\mathrm{AABB}(\mathcal D).\]
\end{theorem}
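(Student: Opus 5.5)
The plan is to compare both sides of the inequality with one combinatorial quantity: a minimum-length Hamiltonian path in the space of \emph{slopes} of the hyperdisks. Write each hyperdisk $D$ with unit normal $u$, $|u_{d+1}|\ge 1/\sqrt{d+1}$, as the graph of an affine function $x_{d+1}=c_D+g_D\cdot z$ over its projection $E_D\subset\mathbb R^d$. This projection is an ellipsoid with semi-axes $1,\dots,1,|u_{d+1}|$, and the slope vector satisfies $|g_D|=\sqrt{1-u_{d+1}^2}/|u_{d+1}|\le\sqrt d$. The angle condition $\phi_0=\arccos(1/\sqrt{d+1})$ is exactly what makes all slopes lie in the ball of radius $\sqrt d$ and all projections have $d$-volume at least $\omega_d/\sqrt{d+1}$.

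\emph{Upper bound on the metric.} If two hyperdisks $D,D'$ have centres on a common vertical line, they are disjoint as soon as $D'$ lies above $D$ over $E_D\cap E_{D'}$. Over points at distance at most $1$ from the common axis, the height difference changes by at most $|g_D-g_{D'}|$. Hence
\[
d_{e_{d+1}}(D,D')\le \min\{\,|g_D-g_{D'}|,\;2\sqrt{d/(d+1)}\,\}\cdot O(1).
\]
The second term comes from the vertical extent of each disk. It follows that $\OPT_\mathrm{stab}(\mathcal D,e_{d+1})$ is at most a constant times the length of a shortest path through the slope points $g_D$, with each edge capped at $O(1)$.

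\emph{Lower bound on $\OPT_\mathrm{AABB}$.} Fix an optimal packing in a box with horizontal sides $a_1,\dots,a_d$ and height $a_{d+1}$, and let $V=\prod_i a_i$. The extent of a disk in direction $e_i$ is at least $2/\sqrt{d+1}$, so each $a_i\ge 2/\sqrt{d+1}$.

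I would partition the slope ball into cubes of side $\delta$ and count, for each cube, how many disks it can contain. Two disjoint disks whose slopes differ by at most $\delta$ and whose projections overlap substantially must be vertically separated by some $\Omega(\cdot)$ amount over that overlap, so they "consume" volume. Averaging over vertical lines through the base, each vertical line meets the disks of one slope-cube in a stack that fits within height $a_{d+1}$. Integrating over the base, where each projection covers area $\ge\omega_d/\sqrt{d+1}$, should give
\[
V\gtrsim \sum_{\text{cubes}}(\#\text{disks in cube})\cdot(\text{required separation}).
\]
Comparing this with the length of a path that traverses the cubes in a boustrophedon order should yield $\OPT_\mathrm{stab}\le 2(d+1)^d V$. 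I expect the factor $(d+1)^d$ to arise as $(\sqrt{d+1})^{2d}$, one factor $\sqrt{d+1}$ per horizontal side from $a_i\ge 2/\sqrt{d+1}$ and one per slope/area distortion.

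\emph{Main obstacle.} The hardest step is this lower bound: showing that the optimal box volume dominates the slope-space path length. The difficulty is that disjoint nearly parallel disks can be packed very closely, so counting disks alone fails. My first attempt would be a direct charging argument in the optimal packing. Order the disks by the height of the topmost disk they intersect along a fixed generic vertical line. For consecutive disks $D,D'$ that share a vertical line, bound $d_{e_{d+1}}(D,D')$ by their vertical gap in the optimal packing plus a horizontal correction of at most $|g_D-g_{D'}|\cdot(\text{horizontal offset})$. Then sum over the roughly $\prod a_i\cdot(d+1)^{d/2}$ vertical lines needed so that every disk is stabbed by one of them. Each such line contributes total gap at most $a_{d+1}$, which would give a bound of $O((d+1)^{d}V)$.
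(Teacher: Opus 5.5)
This paper does not prove the statement; it imports it from Alt et al.\ without proof, so your proposal has to stand on its own. Your first half is sound and can be sharpened. With centres on a common vertical line, $d_{e_{d+1}}(D,D')=\max_{w\in E_D\cap E_{D'}}(g_D-g_{D'})\cdot w$. Both projections contain the ball of radius $r:=1/\sqrt{d+1}$, so $r|g_D-g_{D'}|\le d_{e_{d+1}}(D,D')\le |g_D-g_{D'}|$, and no cap is needed.

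The gap is the second half, the lower bound on $\OPT_{\mathrm{AABB}}$. This is the entire content of the theorem, and you only describe it with ``should give'' and ``I expect''. The slope-cube route fails as described. Disjoint nearly parallel disks can be stacked with arbitrarily small spacing, so nothing bounds the number of disks per slope cube in terms of volume. A boustrophedon order of the cubes also says nothing about the path length inside a cube: for $d\ge 2$ that is itself a TSP through $k$ points, of length up to order $\delta k^{1-1/d}$, and you never charge it to volume. The constant $2(d+1)^d$ is likewise only guessed.

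Your vertical-line charging is the right idea, but as written it is circular. In $d_{e_{d+1}}(D,D')\le(\text{gap})+|g_D-g_{D'}|\cdot(\text{offset})$ the error term is of the same order as the left-hand side, and you never bound the sum of these error terms. With lines spaced merely so that every disk is stabbed, the offset can reach $r$, so nothing can be absorbed. Your ordering rule also does not define one Hamiltonian order, since a disk may be stabbed by many lines.

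The missing step is to assign each disk to a single line passing within $\rho<r$ of the projection of its centre, e.g.\ one line per cell of a grid of base cells of diameter $\rho=r/2$. Suppose $D,D'$ are disjoint and the line over $z^*$ meets them at heights $h<h'$. Let $f(z)$ be the height of $D'$ over $z$ minus the height of $D$ over $z$; it is affine. It is nonzero on the interior of the intersection of the two projections (by disjointness) and positive at $z^*$, so it is positive there. That intersection contains $B(z^*,r-\rho)$. Evaluating at $z^*+(r-\rho)(g_D-g_{D'})/|g_D-g_{D'}|$ gives the multiplicative bound
$h'-h\ge (r-\rho)|g_D-g_{D'}|\ge (r-\rho)\,d_{e_{d+1}}(D,D')$.
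Hence the cost along each line telescopes to at most $a_{d+1}/(r-\rho)$.

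Transitions between lines cost at most $2\max_D|g_D|$ each. Charge them via $a_{d+1}\ge 2|g_D|/\sqrt{1+|g_D|^2}\ge 2|g_D|/\sqrt{d+1}$, not via your claim $a_{d+1}\ge 2/\sqrt{d+1}$. That claim is false, since nearly horizontal disks have almost no vertical extent. The bound $a_i\ge 2r$ holds only for $i\le d$, which is exactly what limits the number of lines to $O_d(\prod_{i\le d}a_i)$.

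This yields $\OPT_{\mathrm{stab}}(\mathcal D,e_{d+1})=O_d(\OPT_{\mathrm{AABB}}(\mathcal D))$, which is all the paper's application needs. Matching the exact factor $2(d+1)^d$ would require a more careful count.
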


\begin{lemma}[Online Stabbing]\label{lemma:disks:online:stabbing}
    For every direction $s$ and every online sequence $\mathcal D $ of at most $n$ unit hyperdisks whose normals are pairwise distinct and not orthogonal to $s$, there is an online algorithm that assigns a point $t(D) \in \mathbb{R}$ to each hyperdisk in $\mathcal{D}$ such that 
    \[|t(D) - t(D')| \geq d_s(D, D')\]
    for all $D, D'\in \mathcal{D}$, and whose final stabbing length in direction $s$ is $O(\OPT_\mathrm{stab}(\mathcal D,s) \log^2 n)$.
\end{lemma}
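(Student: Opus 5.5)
The plan is to apply \cref{thm:TSP:scheduling} directly to the metric space given by \cref{lemma:disks:metric}. The points of the space are the hyperdisks with normals not orthogonal to $s$, with parallel hyperdisks identified, and the distance is $d_s$. Since the input normals are pairwise distinct, distinct input hyperdisks are distinct points of this space. Hence the online sequence $\mathcal D$ is a legitimate input to \OnlineTSPscheduling$[n,M,d_s]$. When $D$ arrives, we feed it to the algorithm of \cref{thm:TSP:scheduling} and set $t(D):=p_D$, the visit time it returns. The scheduling constraint $|p_D-p_{D'}|\ge d_s(D,D')$ for all earlier $D'$ is exactly the required separation condition. It holds for all pairs because every pair is checked when its later element arrives.

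For the cost, we first identify $\OPT$ of the TSP instance with $\OPT_\mathrm{stab}(\mathcal D,s)$. Take any order $D_{\sigma(1)},\dots,D_{\sigma(m)}$ and translate the disks so that their centers lie on a line in direction $s$ with consecutive disks touching. By definition of $d_s$, consecutive centers are then at distance $d_s(D_{\sigma(i)},D_{\sigma(i+1)})$, and they advance monotonically along the line. So the stabbing length of this order equals the TSP path length $\sum_i d_s(D_{\sigma(i)},D_{\sigma(i+1)})$. Minimizing over orders gives $\OPT_\mathrm{stab}(\mathcal D,s)=\OPT$, which the paper already noted equals the shortest TSP path length.

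Next we place each $D$ with its center at $c+t(D)\,s$ for a fixed base point $c$. If $|t(D)-t(D')|\ge d_s(D,D')$, then $D$ and $D'$ have disjoint relative interiors. This should follow from how $d_s$ is defined: pushing two disks farther apart along $s$ than their touching separation keeps them disjoint. That monotonicity is presumably part of the analysis in \cite{AltCCPS26}, or follows from the fact that the set of separations along $s$ at which two disks overlap is an interval around $0$, by convexity of the Minkowski difference. The resulting stabbing length is $\max t-\min t\le \max_D p_D$. By \cref{thm:TSP:scheduling} this is $O(\OPT\log^2 n)=O(\OPT_\mathrm{stab}(\mathcal D,s)\log^2 n)$.

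The main point needing care is the disjointness monotonicity just mentioned, that is, that $d_s$ is a threshold rather than just one touching value. I would argue it as follows. For fixed $D,D'$, the set $\{\lambda: (D+\lambda s)\cap D' \text{ has interior intersection}\}$ is the intersection of the line $\mathbb R s$ with the convex set $D'-D$ (relative-interior version). It is therefore an interval, and by the central symmetry of $D$ and $D'$ it is symmetric about $0$, so it equals $(-d_s,d_s)$. A minor further point is that $n$ need not be known, which \cref{thm:TSP:scheduling} already handles.
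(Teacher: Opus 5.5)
Your proposal is correct and follows the paper's proof. The paper simply runs the algorithm of \cref{thm:TSP:scheduling} on the metric $d_s$ from \cref{lemma:disks:metric} and notes that a shortest Hamiltonian path in this metric has length $\OPT_\mathrm{stab}(\mathcal D,s)$. Your extra Minkowski-sum argument is sound and makes explicit a point the paper leaves implicit: for non-parallel disks $\mathrm{int}(D+D')=\mathrm{ri}(D)+\mathrm{ri}(D')$, so $|t(D)-t(D')|\ge d_s(D,D')$ really does give disjoint relative interiors.
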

\begin{proof}
    By Lemma \ref{lemma:disks:metric}, $d_s$ is a metric. A Hamiltonian path in this metric is an ordering of hyperdisks and its length is the stabbing length in direction $s$. The lemma simply follows from our TSP scheduling algorithm of~\cref{thm:TSP:scheduling}.
\end{proof}
We now describe the online packing algorithm. When a hyperdisk with (unit) normal vector $u=(u_1, \dots, u_{d+1})$ arrives, we assign it to \emph{class} $i\in \{1,\ldots , d+1\}$ maximizing $|u_i|$. Then $|u_i| \geq 1 / \sqrt{d+1}$, so the angle between $u$ and the standard basis vector $e_i$ is at most $\phi_0$. For each class $i$, we use the stabbing algorithm from Lemma \ref{lemma:disks:online:stabbing} in direction $e_i$.

The algorithm works in two phases.
Let $i_1$ be the class of the first arriving hyperdisk. 
We stay in phase I as long as newly arriving hyperdisks belong to this class and their stabbing length is at most 1. 
If a hyperdisk of a different class arrives, or if the placement of the next hyperdisk from class $i_1$ would make the stabbing length exceed 1, we start a new phase. 

When we transition to phase~II, the span of all hyperdisks in phase I is at most 1 in direction $e_{i_1}$. We can enclose these hyperdisks in an axis-aligned hypercube of side length 3, because the centers of the hyperdisks in phase~I lie on an axis-aligned line segment of length at most 1, and all hyperdisks lie in the Euclidean 1-neighborhood of this line segment.
From now on, we use an approach similar to box packing in \Cref{sec:box-packing}: we always pack hyperdisks in axis-aligned hypercubes of side length 3 and stack consecutive hypercubes in the direction $e_{i_1}$. During phase~II, let $t_i(D)$ be the online stabbing coordinate assigned to a hyperdisk in class $i$.

For each class $i$, partition the stabbing line into intervals $[k, k+1)$, $k\in \mathbb Z$. Whenever the first hyperdisk of class $i$ whose assigned stabbing coordinate lies in such an interval arrives, i.e., $\lfloor t_i(D) \rfloor = k$, we open a new hypercube, and place the center of a hyperdisk with stabbing coordinate $t_i(D)$ at $(\tfrac{3}{2}, \dots, \tfrac{3}{2}) + (t_i(D) - k - \tfrac{1}{2})e_i$ inside that hypercube.

Then every center has distance at least 1 from every facet of the hypercube, so the hyperdisk is contained in the cube.

\mainDisks*

\begin{proof}
    Let $\mathcal{D}$ be a set of all hyperdisks, $\mathcal D_i$ the set of all hyperdisks of class $i$, and let $\OPT_\mathrm{STAB}(\mathcal D_i):=\OPT_\mathrm{stab}(\mathcal D_i,e_i)$ be the optimum stabbing length of $\mathcal D_i$ in direction $e_i$. Since the normal of every hyperdisk in $\mathcal D_i$ makes an angle at most $\phi_0$ with $e_i$, \cref{lemma:disks:stabbing} gives
    \[\OPT_\mathrm{STAB}(\mathcal D_i) \leq 2(d+1)^d \OPT_\mathrm{AABB}(\mathcal D_i). \]
    Denote the online stabbing length of class $i$ by $S_i$. By \cref{lemma:disks:online:stabbing}, we have $S_i = O_d(\OPT_\mathrm{AABB}(\mathcal D_i)\log^2 n)$.

    First, suppose the algorithm never switches to the second phase. Then all hyperdisks lie in the first class $i_1$ (that is, $\mathcal{D}_{i_1}=\mathcal{D}$), and are placed on a stabbing line of span at most $O_d(\OPT_\mathrm{AABB}(\mathcal D_{i_1})\log^2 n)$. Let $E_j$ be the maximum extent in coordinate direction $e_j$ (when the centers are placed at the origin). The axis-aligned bounding box enclosing our disk packing has volume at most
    \[ \left(E_{i_1} + O_d(\OPT_\mathrm{AABB}(\mathcal D) \log^2 n)\right)\cdot \prod_{j\neq i_1}E_j.\]
We can bound $\prod_{j=1}^{d+1}E_j$ simply by $\OPT_\mathrm{AABB}(\mathcal D)$ and since $E_j \leq 2$ for all $j$, we have $\prod_{j \neq i_1} E_j \leq 2^d$ and thus the enclosing volume is just
    \[O_d(\OPT_\mathrm{AABB}(\mathcal D)\log^2 n).\]

    Now suppose the algorithm switches to the second phase at some point. The disks from the first phase are contained in one hypercube of side length 3. 
    The relative interiors of all hyperdisks are disjoint which follows as the hypercubes themselves are disjoint and within each hypercube this follows from the disjointness of the hyperdisks stabbing algorithm.
    The number of second-phase hypercubes opened for any class $i$ is bounded by $\lceil S_i \rceil +2$. Hence the total number of hypercubes is 
    \[1+\sum_{i=1}^{d+1}S_i 
    = O_d\left(\log^2 n \sum_{i=1}^{d+1}\OPT_\mathrm{AABB} (\mathcal D_i)\right) 
    = O_d\left(\log^2 n \OPT_\mathrm{AABB} (\mathcal D)\right).\]
    Since each hypercube has volume $3^{d+1}$, we get an asymptotic competitive ratio of $O_d(\log^2 n)$.

    We can also eliminate the additive constant, by proving that we have $\OPT_{\mathrm{AABB}}(\mathcal D) \cdot \log^2 n = \Omega_d(1)$. If the switch of phases occurred because the next hyperdisk would have increased the span beyond 1, then for the subset of hyperdisks seen up to the switch, we have 
    \[1 < S_{i_1} \leq O_d(\log^2 n \cdot \OPT_\mathrm{AABB}).\]
    
    If the switch occurred because a second class appeared, then at least two classes are non-empty. For every direction $e_r$, choose a non-empty class $i\neq r$; the unit normal vector of a hyperdisk in class $i$ satisfies $|u_r|\leq \tfrac{1}{\sqrt 2}$, and therefore its projection in direction $e_r$ has constant length, thus $\OPT_\mathrm{AABB} \geq \prod_j E_j = \Omega(1)$, again implying $\OPT_{\mathrm{AABB}}(\mathcal D) \cdot \log^2 n = \Omega_d(1)$.
\end{proof}

The following theorem shows that it is impossible to extend the approach above to an $O(\polylog n)$-competitive algorithm for disks of arbitrary radii. In particular, the theorem gives the lower bound $\Omega(n^{1/6})$ for 2-dimensional disks in $\mathbb R^3$.

\lowerDisks*

\begin{proof}
From~\cite{AltCCPS26}, we know that there is a family $\mathcal D$ of $n-1$ unit hyperdisks in $\mathbb R^{d+1}$ such that any convex container where we can pack $\mathcal D$ under translations
has volume $\Omega(n^{\frac{d-1}{d}})$.
    The adaptive adversary first presents the $n-1$ unit hyperdisks from $\mathcal D$ and then one large hyperdisk $D^*$ specified below.

    Let $L_1, \dots, L_{d+1}$ be the length of the algorithm's current axis-aligned bounding box after the $n-1$ arrivals.
    We thus have 
    \[\prod_{j=1}^{d+1}L_j \geq \Omega\left(n^\frac{d-1}{d}\right).\]
    Hence there exists some coordinate direction $e_i$ such that \[L_i \geq \Omega\left(n^\frac{d-1}{d(d+1)}\right).\]
    The adversary now presents one final hyperdisk $D^*$ of radius $n^{1/d}$ whose normal vector is $e_i$. Thus it has width 0 in direction $e_i$ and $2n^{1/d}$ in every coordinate direction other than $e_i$.

    The online algorithm cannot move the first $n$ hyperdisks. Therefore its final bounding box still has side length at least $L_i$ in direction $e_i$. It must also have side length $\Omega(n^{1/d})$ in each of the remaining $d$ directions because of $D^*$. Thus the final volume of the packing produced by the online algorithm is at least 
    \[\Omega\left({(n^{1/d})}^d \cdot L_i\right) = \Omega\left(n^{1+\frac{d-1}{d(d+1)}}\right).\]
    
    On the other hand, the offline optimum for the whole instance is $O(n)$:
    The large hyperdisk fits in a box of dimensions $O(n^{1/d}\times \dots \times 1 \times \dots \times n^{1/d} )$. We can place the $n-1$ unit hyperdisks in a box of the same dimensions. Since every unit hyperdisk is contained in the unit ball around its center, we can place their centers on a $d$-dimensional grid of side length $O(n^{1/d})$ and constant spacing, which gives a feasible packing contained in an axis-aligned box with $d$ side lengths of $O(n^{1/d})$ and one side length $O(1)$, hence the theorem follows.
\end{proof}

\section{Online TSP Ordering}\label{sec:TSP}
We now prove~\cref{thm:TSP}. The proof is conceptually similar to the proof of~\cref{thm:TSP:scheduling}. It employs the elementary trees by Azar, Panigrahi, and Vardi~\cite{azar2026nearly} having to overcome the same obstacles. Let us define the problem formally. 
\begin{definition}[Online TSP Ordering]
Let $(M,d)$ be a metric space and $\eps\geq 0$. Points $x_1,\ldots,x_n\in M$ arrive online,
and upon arrival each $x_i$ must be assigned to an empty cell of an array $A$ consisting of $(1+\eps)n$ cells. After all $n$ points have been inserted, let $z_1,\dots,z_n$ be the non-empty entries of the array going from left to right, and define the cost as $\sum_{i=1}^{n-1}d(z_i,z_{i+1})$.
\end{definition}

We denote this problem \OnlineTSPordering$[n,\eps,M,d]$.
We first consider the case where $\OPT$ is known to the algorithm.

\begin{lemma}\label{lemma:known:OPT}
Let $\eps\in (0, 1]$. There exists an algorithm for
\OnlineTSPordering$[n,\eps,M,d]$ with competitive ratio
$O((\log^2 n)/\eps)$ if $\OPT$ is given as part of the input.
\end{lemma}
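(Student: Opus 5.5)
We reuse the metric elementary trees from the proof of \cref{thm:online:tsp:scheduling:known:opt}. The only change is that the leaves are now mapped to array cells rather than to visit times. By rescaling, assume $\OPT=1$; if $\OPT=0$ all points coincide and any placement costs $0$. Also assume $n$ exceeds a large constant, since otherwise any placement costs at most $(n-1)\cdot 1=O(1)$.

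\textbf{Parameters and capacity.} Choose $H$ maximal with $(H+1)2^H\le \eps n/2$. Then $2^H=\Theta(\eps n/\log n)$ and $H=O(\log n)$. Take $\ell:=\lfloor (1+\eps)n/2^H\rfloor$ elementary trees. Their total number of leaves is at most $(1+\eps)n$, and for $n$ large it is at least $n+\eps n/2$. Identify the leaves of the $k$th tree, in DFS order, with a block of $2^H$ consecutive array cells, placing the blocks left to right. Insertion uses the same rule as before: choose a feasible path of maximum labeled depth, where feasibility means $d(x,y)\le 2^{h-H-1}$ for the deepest label $y$ at height $h$. The same path is also used to label nodes.

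\Cref{claim:sched:partial:separation} holds verbatim. Since any $\delta$-separated set has size at most $1/\delta+1$, the number of empty leaves in nonempty trees is at most $(H+1)2^H\le\eps n/2$. Hence an insertion can fail only if the number of empty leaves is at least $n+\eps n/2-(n-1)>\eps n/2$ while no empty tree remains, which is a contradiction. So every point is inserted. We have $\ell=O(n/2^H)=O((\log n)/\eps)$.

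\textbf{Cost bound.} Each array cell is a leaf, so the final sequence $z_1,\dots,z_n$ visits occupied leaves in DFS order, tree by tree. There are at most $\ell-1$ transitions between consecutive nonempty trees, and each costs at most the diameter, which is at most $1$; these contribute $O((\log n)/\eps)$ in total. Inside a tree, consider consecutive occupied leaves $u,w$ with least common ancestor of height $h$. By \cref{claim:sched:tree:distance}, $d(z_u,z_w)\le 2^{h-H+1}$.

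We charge this transition to the node of height $h$. A node is the least common ancestor of at most one pair of consecutive occupied leaves: the rightmost occupied leaf of its left subtree paired with the leftmost occupied leaf of its right subtree. A tree has $2^{H-h}$ nodes at height $h$, so each tree contributes at most $\sum_{h=1}^H 2^{H-h}\cdot 2^{h-H+1}=2H$. Summing over the trees gives total cost $O(\ell H)=O((\log^2 n)/\eps)$.

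\textbf{Main obstacle.} The crude per-tree bound $2H$ already suffices, so the only delicate step is the capacity argument. It must use the slack $\eps n$ correctly: the bound of \cref{claim:sched:partial:separation} on empty leaves must be at most $\eps n/2$, while the $(1+\eps)n$ array cells must still cover all leaves of all $\ell$ trees. Padding unused trailing cells is harmless, because empty cells contribute nothing to the cost. The other point to handle is rounding $2^H$ against $(1+\eps)n$ when $\eps$ is small, using $\eps\le 1$ and $n$ large. I expect both to go through once constants are fixed.
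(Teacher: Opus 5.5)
Your proposal matches the paper's proof. It uses the same metric elementary trees with maximum-labeled-depth insertion, the partial-node separation bound to show that $(H+1)2^H$ slack cells suffice, and injectivity of the least common ancestors of consecutive occupied leaves to get cost at most $2H$ per tree and $\ell(2H+1)=O((\log^2 n)/\eps)$ overall; your $\eps n/2$ slack and the floor in $\ell$ handle integrality more carefully than the paper does.

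The one loose end is that ``$n$ large'' does not ensure an admissible $H$ exists when $\eps n$ is bounded, and you only get $2^H=\Omega(\eps n/\log n)$, not $\Theta$. However, whenever $\eps\le \log^2 n/n$, any placement costs at most $(n-1)\OPT\le(\log^2 n/\eps)\OPT$, and your main argument covers everything above that. This is a cheaper fix than the paper's appeal to Bertram's $O(\sqrt n)$-competitive algorithm for $\eps\le 1/\sqrt n$.
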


\begin{proof}
By normalizing, we may assume that $\OPT=1$ (the treatment of $\OPT=0$ is trivial).  We may also clearly assume that $n$ is sufficiently large. Finally, we may assume that $\eps>1/\sqrt{n}$, since otherwise, we can apply the $O(\sqrt{n})$ competitive algorithm by Bertram~\cite{DBLP:conf/esa/Bertram25} which works even with $\eps=0$.
We again use the \emph{elementary tree} data structure~\cite{azar2026nearly} but with the same new labeling scheme and modification of how elements are inserted into this data structure as in the proof of~\cref{thm:online:tsp:scheduling:known:opt}. Recall the notation from that section: $H$ is the height of an elementary tree with $2^H$ leaves. The height of the leaves is zero, so that the height of the root is $H$. 
We can use an elementary tree $T$ of height $H$ to insert elements into an array of size $2^H$. If a DFS order of the leaves of $T$ are $v_1,\dots,v_{2^H}$, then we can let elements inserted at $v_i$ be placed in the $i$'th cell of the array. We make the following claim.

\begin{claim}\label{claim:cost}
The cost of any solution produced by inserting elements into an array of size $2^H$ using an elementary tree of height $H$ is at most $2H $.
\end{claim}

\begin{proof}[Proof of Claim.]
    Let $B$ be the array. Let $c_1,\dots c_k$ denote the non-empty cells of $B$ going from left to right, and let $z_1,\dots,z_k\in M$ denote their content, so that the cost of the solution is $\sum_{i=1}^{k-1}d(z_{i},z_{i+1})$. Let $v_1,\dots, v_k$ denote the leaves of $T$ corresponding to cells $c_1,\dots,c_k$. For $1\leq i\leq k-1$, let $u_i$ denote the least common ancestor of $v_i$ and $v_{i+1}$. It is easy to check that the mapping $i\mapsto u_i$ is injective for $1\leq i\leq k-1$. By the same ball-containment argument used in \cref{sec:TSP-scheduling}, if for some $i$, $u_i$ is at height $h$, then $d(z_{i},z_{i+1})\leq 2^{h-H+1}$.
    Since the number of nodes at height $h$ is $2^{H-h}$ and the mapping $i\mapsto u_i$ is injective, we conclude that 
    \[
    \sum_{i=1}^{k-1}d(z_{i},z_{i+1})
    \leq \sum_{h=1}^H 2^{H-h}\cdot 2^{h-H+1}
    \leq 2H .
    \qedhere
    \]    
\end{proof}

The problem with using a single elementary tree of size $n(1+\eps)$ is that in the worst-case, it will only have space for $O(\log n)$ elements. To resolve this, we define $\ell=n(1+\eps)/2^H$ and partition the array $A$ into subarrays $B_1,\dots, B_{\ell}$ each of size $2^H$, where we will fix $H$ later.
We build an elementary tree $T_i$ on top of each $B_i$ and when a new element $x$ arrives, it is inserted along a feasible path of maximum labeled depth among all paths in all elementary trees. If the maximum labeled depth is $0$, that is, if we start a new tree, we choose the leftmost tree whose root is still unlabeled. All other ties are broken arbitrarily. With enough trees, it turns out that we can insert all $n$ elements. To prove this, we use~\cref{claim:sched:partial:separation}, which we restate here for convenience.

\begin{claim}\label{claim:labels}
Let $v$ and $v'$ be distinct partial nodes of height $h$, possibly in different elementary trees. If $y$ and $y'$ are their labels, then \[ d(y,y')> 2^{h-H-1} . \]
\end{claim}

Call a node \emph{partial} if it has exactly one labeled child. By Claim~\ref{claim:labels}, for every fixed height $h$, the labels of partial nodes at height $h$ are $2^{h-H-1}$-separated. Since $\OPT=1$, any set of input points that is $\delta$-separated has size at most $1/\delta+1$, by an identical argument to the one given in the proof of~\cref{thm:online:tsp:scheduling:known:opt}. Thus the number of partial nodes at height $h$ is at most $2^{H-h+1}+1$.

Every empty cell in a non-empty elementary tree lies in the subtree rooted at the unlabeled child of a unique partial node. A partial node at height $h$ contributes $2^{h-1}$ such empty cells. Hence the total number of empty cells in non-empty elementary trees is at most
\[
    \sum_{h=1}^H (2^{H-h+1}+1)2^{h-1}
    \leq (H+1)2^H .
\]
We choose $H$ maximal such that $(H+1)2^H\leq \eps n$. Such a $H$ exists by the assumption that $\eps>1/\sqrt{n}$ and that $n$ is sufficiently large. Then, $2^H=\Omega(\eps n/\log n)$, and so $\ell=O(\log n/\eps)$.

It remains to see that the algorithm never fails before all $n$ elements have been inserted. If it failed before inserting some element, then no empty elementary tree could remain, since starting a new tree is always feasible. Thus all remaining empty cells would lie in non-empty elementary trees. But before $n$ insertions have been made, the array contains more than $\eps n$ empty cells, contradicting the bound $(H+1)2^H\leq \eps n$ above.

Finally, by Claim~\ref{claim:cost}, the cost inside each elementary tree is at most $2H$. Since $\OPT=1$, the diameter of the input point set is at most $1$, and hence the cost of moving between consecutive non-empty elementary trees is at most $1$ per transition. Therefore the total cost is at most
\[
    \ell(2H+1)=O\left(\frac{\log^2 n}{\eps}\right),
\]
as desired.
\end{proof}

\subsection{Unknown OPT}

To obtain~\cref{thm:TSP}, we need to consider the setting when $\OPT$ is not part of the input. The technique is conceptually similar to the analogous proof in~\cite{azar2026nearly}, but we include the argument for completeness.

\begin{lemma}\label{lemma:TSP:unknown:OPT}
Let $\eps\in (0, 1]$. There exists an algorithm for
\OnlineTSPordering$[n,\eps,M,d]$ with competitive ratio $O((\log^3 n)/\eps)$ if $\OPT$ is unknown to the algorithm.
\end{lemma}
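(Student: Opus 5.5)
Since the array has fixed size $(1+\eps)n$, the approach is to reserve space for phases instead of translating schedules. The guess-and-double strategy from the proof of \cref{thm:online:tsp:scheduling:unknown:opt} gives the phases. Upon arrival of $x_t$ we compute a $2$-approximation $A_t$ of $\OPT_t$, the optimal path length of the prefix $X_t$. While $A_t=0$ all points coincide, and we treat them as a degenerate phase. Otherwise we maintain an estimate $\Delta_i$ with $\OPT_t\le A_t\le\Delta_i$, and move to a new phase with $\Delta_{i+1}=2^k\Delta_i$ (minimal $k$) when this fails. Within phase $i$ we run the algorithm of \cref{lemma:known:OPT} with $\OPT$ replaced by $\Delta_i$. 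Its analysis only uses an upper bound on the optimum: the separation bound $1/\delta+1$ and the diameter bound both hold with $\Delta_i\ge \OPT_t$. So inside phase $i$ the cost is $O(\Delta_i(\log^2 n)/\eps)$.

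Phases cannot simply be concatenated, because there could be many of them. I would cap their number by exploiting that the estimates grow geometrically. If $\Delta_j\le \Delta_i/n^2$, then all points of phase $j$ (and earlier) lie within distance $\Delta_j$ of each other, and their contribution is negligible. Concretely, I would group phases into \emph{epochs} of consecutive phases whose estimates lie in a window of ratio $n^2$, i.e.\ $O(\log n)$ doublings. A total cost of $n\cdot\Delta_{\text{small}}\le \Delta_i/n$ from older points is acceptable.

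The array is then partitioned into $O(\log n)$ disjoint segments, each of size $(1+\eps/2)n/L$ times a suitable factor, one for each of the $L=O(\log n)$ active phase slots. The preferred way to handle sizes is to give every phase its own run of the known-$\OPT$ algorithm on a contiguous block sized for all remaining points with slack $\eps/(c\log n)$. However, this would give $\log^3 n/\eps$ only if the slack loss were just a log factor, and it is not. So instead I would give each phase the full $\eps$ slack by reusing the same set of elementary trees across phases. Upon a phase change, all partially used trees are retired; fresh trees are still available because retired trees waste at most $(H+1)2^H$ leaves per phase. I would take $H$ with $(H+1)2^H\le \eps n/O(\log n)$, which leaves $2^H=\Omega(\eps n/\log^2 n)$ and $\ell=O(\log^2 n/\eps)$ trees overall.

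Cost accounting: each tree used in phase $i$ costs at most $2H\Delta_i$ internally (\cref{claim:cost} scaled). A phase uses $O(\log n/\eps)$ trees plus retirements. Transitions between consecutive trees cost at most the current diameter, hence at most $\Delta_{\max}\le 4\OPT$. Summing trees times $H$, we get that phase $i$ contributes $O(\Delta_i\log^2 n/\eps)$ from its own trees. The transitions contribute $\ell\cdot O(\OPT)=O(\OPT\log^2 n/\eps)$. Among the phases, the geometric sum of $\Delta_i$ is $O(\OPT)$. The remaining issue is that phases with tiny $\Delta_i$ may occupy many trees; this is why the number of trees is bounded globally by $\ell$. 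The extra $\log n$ enters because retirement waste must be charged against $\eps n$ across up to $O(\log n)$ effective phases, after merging phases that are $n^2$-far apart into trivial ones. I expect the main obstacle to be exactly this: bounding the number of phases that actually retire trees by $O(\log n)$, and handling the degenerate early phases by placing such near-duplicate points adjacent within a single tree.
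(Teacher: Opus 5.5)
There is a genuine gap in the space argument, and it sits exactly at the step you call the main obstacle. Your cost accounting would go through.

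With a single global height $H$, $(H+1)2^H\le \eps n/O(\log n)$, and all partially used trees retired whenever the estimate changes, your waste bound is (number of tree-retiring phases)$\,\cdot\,(H+1)2^H$. For this to be at most $\eps n$ you need only $O(\log n)$ such phases, and that is false in general. Take points $x_t=8^t$ on the real line. Every arrival more than doubles $\OPT_t$, so every arrival starts a new phase. In your scheme each of these singleton phases starts in a fresh tree, since all partial trees were just retired, and then retires it. Hence all $\ell=O((\log^2 n)/\eps)$ trees are exhausted after $\ell\ll n$ arrivals.

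Grouping phases into windows of estimate ratio $n^2$ does not rescue this. The example still has $\Theta(n/\log n)$ windows, and nothing in the proposal lets a phase avoid retiring the trees it touched. The underlying problem is that a global $H$ ties the waste of a phase to $n$ rather than to the number of points the phase actually receives.

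The missing idea is to make the waste of each epoch proportional to its own number of points. Trees are allocated left to right in the array on demand, not via a fixed partition. In the paper, each epoch (estimate $\Delta_i$, with $\Delta_{i+1}>2\Delta_i$) begins by placing its first $N_0=\Theta((\log n)/\eps)$ points in consecutive cells. This costs $O(N_0\Delta_i)$, wastes nothing, and absorbs arbitrarily many tiny epochs such as those in the example above.

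The remaining points of the epoch go into sub-phases of capacities $N_0,2N_0,4N_0,\dots$. A sub-phase of capacity $N$ uses its own height $H$, chosen maximal with $(H+1)2^H\le \eps N/10$. The partial-node count bounds its waste by $\eps N/10$. Every started sub-phase except the last is full, so the started capacities of an epoch with $n_i$ points sum to less than $2n_i$. Hence the total waste is at most $\eps n/5$, regardless of how many epochs occur.

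For the cost, a sub-phase uses $O(H/\eps)$ trees and therefore costs $O(\Delta_i(\log^2 n)/\eps)$. An epoch has $O(\log n)$ sub-phases, which is where the third logarithm comes from. Finally, $\sum_i\Delta_i=O(\OPT)$ completes the bound.
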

\begin{proof}
We may again assume that $\eps>1/\sqrt{n}$ and that $n$ is sufficiently large. For a prefix $x_1,\dots,x_t$, let $\OPT_t$ denote the cost of an optimal traveling salesperson path through these points. The algorithm will not compute $\OPT_t$. Instead, let $a_t$ be the cost of a path returned by a fixed polynomial-time $2$-approximation algorithm for metric TSP path on $x_1,\dots,x_t$, for instance the standard approximation arising from an MST. Thus
\[
    \OPT_t\leq a_t\leq 2\OPT_t .
\]
Define
\[
    D_t=\max_{s\leq t} a_s .
\]
Since $\OPT_t$ is non-decreasing, we have
\[
    \OPT_t\leq D_t\leq 2\OPT_t .
\]

If $D_t=0$, then the first $t$ points are at pairwise distance zero, and we insert them consecutively at the first $t$ cells of the array. Suppose now that $D_t>0$ for the first time. We start the first epoch with guess
\[
    \Delta_1=2D_t .
\]
More generally, during an epoch with guess $\Delta_i$, before inserting a new element $x_t$ we compute $D_t$. If $D_t>\Delta_i$, the current epoch ends, and a new epoch starts with guess
\[
    \Delta_{i+1}=2D_t .
\]
Thus $\Delta_{i+1}>2\Delta_i$, while every $\Delta_i$ is at most $2D_n\leq 4\OPT$. Hence
\[
    \sum_i \Delta_i=O(\OPT).
\]

Fix an epoch with guess $\Delta_i$. Every set of elements inserted during this epoch has optimal TSP-path cost at most $\Delta_i$, since each such set is contained in a prefix $x_1,\dots,x_t$ with $D_t\leq \Delta_i$ at the time of insertion.

Let $\eta=\eps/10$, and let
\[
    N_0=\left\lceil \frac{100\log n}{\eta}\right\rceil .
\]
The first $N_0$ elements of an epoch are inserted consecutively. The remaining elements of the epoch are handled in phases of capacities
\[
    N_0,2N_0,4N_0,\dots .
\]
For a phase of capacity $N$, choose $H$ maximal such that
\[
    (H+1)2^H\leq \eta N .
\]
The phase opens elementary trees of height $H$ from left to right, opening a new tree only when the next element has no feasible path in any tree already open. Within the currently open trees of the phase, a path is feasible for $x$ if either it contains no labeled node, or, letting $v_h$ be its deepest labeled node and $y$ the label of $v_h$, we have
\[
    d(x,y)\leq \Delta_i 2^{h-H-1}.
\]
The element is inserted along a feasible path of maximum labeled depth among all currently open trees of the phase. If no such path exists, a new elementary tree is opened immediately to the right of the previous trees of the phase, and $x$ is inserted there. All other ties are broken arbitrarily. A phase ends after receiving $N$ elements, unless the epoch ends first.

We first bound the number of cells used. Consider a phase of capacity $N$ in epoch $i$, and suppose it receives $q\leq N$ elements. The proof of Claim~\ref{claim:labels}, with all distances scaled by $\Delta_i$, shows that at every height $h$, the labels of partial nodes of height $h$ in the elementary trees of this phase are $\Delta_i2^{h-H-1}$-separated. Since the elements of the phase have optimal TSP-path cost at most $\Delta_i$, the number of partial nodes of height $h$ is at most
\[
    2^{H-h+1}+1 .
\]
A partial node of height $h$ accounts for at most $2^{h-1}$ empty cells. Therefore the total number of empty cells in non-empty elementary trees of the phase is at most
\[
    \sum_{h=1}^H (2^{H-h+1}+1)2^{h-1}
    \leq (H+1)2^H
    \leq \eta N .
\]
Since trees are opened only when they become non-empty, the number of cells allocated to this phase is at most
\[
    q+\eta N .
\]

Now consider an epoch containing $n_i$ elements. The initial consecutive part uses exactly one cell per inserted element. The capacities of the phases started in the epoch form a geometric sequence. If the last started phase has capacity $L$, then the sum of the previous phase capacities is $L-N_0$, and hence $L<n_i$. Thus the sum of all started phase capacities is less than $2n_i$. Hence the epoch allocates at most
\[
    n_i+2\eta n_i
\]
cells. Summing over all epochs, the total number of allocated cells is at most
\[
    (1+2\eta)n\leq (1+\eps)n .
\]

It remains to bound the cost. Consider again a phase of capacity $N$ and height $H$ in epoch $i$. The number of non-empty elementary trees in the phase is at most
\[
    \frac{(1+\eta)N}{2^H}.
\]
By maximality of $H$,
\[
    \eta N < (H+2)2^{H+1},
\]
and therefore the number of non-empty elementary trees is $O(H/\eta)$. By the scaled version of Claim~\ref{claim:cost}, each elementary tree contributes cost at most $2\Delta_i H$. The cost between consecutive non-empty elementary trees in the same phase is at most $\Delta_i$, since the elements of the phase have diameter at most $\Delta_i$. Hence the cost of one phase is
\[
    O\left(\frac{\Delta_i H^2}{\eta}\right)
    =
    O\left(\frac{\Delta_i\log^2 n}{\eps}\right).
\]

There are $O(\log n)$ phases in one epoch. The initial consecutive part contributes at most $N_0\Delta_i=O(\Delta_i\log n/\eps)$, and the transitions between consecutive parts of the same epoch contribute $O(\Delta_i\log n)$. Therefore the total cost of epoch $i$ is
\[
    O\left(\frac{\Delta_i\log^3 n}{\eps}\right).
\]
The cost between two consecutive epochs is at most the guess of the later epoch, and these costs are absorbed by $\sum_i O(\Delta_i)$. Since $\sum_i\Delta_i=O(\OPT)$, the total cost is
\[
    O\left(\frac{\OPT\log^3 n}{\eps}\right).
    \qedhere
\]
\end{proof}

Now \cref{thm:TSP} follows directly from~\cref{lemma:known:OPT} and~\cref{lemma:TSP:unknown:OPT}.

\mainTSP*

\section{Open Problems}
\label{sec:con}
We list here some open problems. We find \OnlineTSPscheduling~interesting in its own right, due to its applicability to geometric packing problems but also to the scheduling problems discussed in the introduction. The only lower bound we are aware of for the competitive ratio is $\Omega\left(\sqrt\frac{\log n}{\log \log n}\right)$ which is inherited from the lower bound for online strip packing in~\cite{aamand2023online}. It would be interesting to narrow the gap to the $O(\log^2 n)$ upper bound. In particular, we wonder if it is possible to obtain a better lower bound using the fact that we have more flexibility to choose the metric. 

For online translational strip packing for convex pieces, we show that the optimal competitive ratio is polylogarithmic in $n$. Can the competitive ratio be further reduced, or can the lower bound be improved? A similar question can be asked for the other packing variants. 

Finally, we know by \cref{thm:arbitrary:radius:hyperdisk:lower} that it is impossible to obtain polylogarithmic competitive ratios for online translational packing of general convex pieces in three dimensions.
In fact, whereas in the two-dimensional version of the problem, we can relate the cost of an optimal offline solution to a TSP cost, in three dimensions, we do not even have a grasp of the offline optimal cost.
Is there an approximation algorithm with non-trivial approximation guarantees for, e.g., volume packing or strip packing of convex polyhedra?
What can be said about online algorithms?

\subsection*{Acknowledgements}
This work was initiated at the Dagstuhl Seminar 25372: Precision in Geometric Algorithms.
We thank the organizers and participants for a fruitful atmosphere.

\bibliography{ref}

@article{Allahverdi1999,
  title={A review of scheduling research involving setup considerations},
  author={Allahverdi, Ali and Gupta, Jatinder N.D.\ and Aldowaisan, Tariq},
  journal={Omega},
  volume={27},
  number={2},
  pages={219--239},
  year={1999},
  doi={10.1016/S0305-0483(98)00042-5}
}

@article{allahverdi2008survey,
  title={A survey of scheduling problems with setup times or costs},
  author={Allahverdi, Ali and Ng, Chi To and Cheng, T.C. Edwin and Kovalyov, Mikhail Y.},
  journal={European Journal of Operational Research},
  volume={187},
  number={3},
  pages={985--1032},
  year={2008},
  doi={10.1016/j.ejor.2006.06.060}
}

@article{LinYing2022,
  title={Single machine scheduling problems with sequence-dependent setup times and precedence delays},
  author={Lin, Shih-Wei and Ying, Kuo-Ching},
  journal={Scientific Reports},
  volume={12},
  number={1},
  pages={9430},
  year={2022},
  publisher={Nature Publishing Group UK London},
  doi={10.1038/s41598-022-13278-y}
}

@misc{LeibHellerKuehn2025,
  title={Computing an optimal single machine schedule with sequence dependent setup times using shortest path computations},
  author={Leib, Dominik and Heller, Till and K{\"u}hn, Raphael},
  howpublished={Preprint},
  eprint={2507.20611},
  year={2025}
}

@article{AltCCPS26,
  author       = {Helmut Alt and
                  Sergio Cabello and
                  Otfried Cheong and
                  Ji{-}won Park and
                  Nadja Seiferth},
  title        = {Packing $d$-dimensional balls into a $d+1$-dimensional
                  container},
  journal      = {Comput. Geom.},
  volume       = {132},
  pages        = {102219},
  year         = {2026},
  doi          = {10.1016/J.COMGEO.2025.102219},
}

@article{MiyazawaPSSW16,
  author       = {Fl{\'{a}}vio Keidi Miyazawa and
                  Lehilton L. C. Pedrosa and
                  Rafael Crivellari Saliba Schouery and
                  Maxim Sviridenko and
                  Yoshiko Wakabayashi},
  title        = {Polynomial-Time Approximation Schemes for Circle and Other Packing Problems},
  journal      = {Algorithmica},
  volume       = {76},
  number       = {2},
  pages        = {536--568},
  year         = {2016},
  _url          = {https://doi.org/10.1007/s00453-015-0052-4},
  doi          = {10.1007/S00453-015-0052-4}
}

@inproceedings{KarKW26,
  author       = {Debajyoti Kar and
                  Arindam Khan and
                  Andreas Wiese},
  title        = {Approximation Schemes and Structural Barriers for the Two-Dimensional
                  Knapsack Problem with Rotations},
  booktitle    = {Proc.\ 58th {ACM} Symposium on Theory of Computing (STOC)},
  pages        = {1145--1156},
  _publisher    = {{ACM}},
  year         = {2026},
  _url          = {https://doi.org/10.1145/3798129.3800826},
  doi          = {10.1145/3798129.3800826},
  biburl       = {https://dblp.org/rec/conf/stoc/KarKW26.bib},
  bibsource    = {dblp computer science bibliography, https://dblp.org}
}

@inproceedings{AcharyaBG0MW24,
  author       = {Pritam Acharya and
                  Sujoy Bhore and
                  Aaryan Gupta and
                  Arindam Khan and
                  Bratin Mondal and
                  Andreas Wiese},
  title        = {Approximation Schemes for Geometric Knapsack for Packing Spheres and
                  Fat Objects},
  booktitle    = {Proc.\ 51st International Colloquium on Automata, Languages, and Programming                   (ICALP)},
  series       = {LIPIcs},
  volume       = {297},
  pages        = {8:1--8:20},
  publisher    = {Schloss Dagstuhl},
  year         = {2024},
  _url          = {https://doi.org/10.4230/LIPIcs.ICALP.2024.8},
  doi          = {10.4230/LIPICS.ICALP.2024.8}
}

@inproceedings{BansalK14,
  author       = {Nikhil Bansal and
                  Arindam Khan},
  title        = {Improved Approximation Algorithm for Two-Dimensional Bin Packing},
  booktitle    = {Proc.\ 25th {ACM-SIAM} Symposium on Discrete Algorithms (SODA)},
  pages        = {13--25},
  _publisher    = {{SIAM}},
  year         = {2014},
  doi          = {10.1137/1.9781611973402.2}
}

@article{GalvezGIHKW21,
  author       = {Waldo G{\'{a}}lvez and
                  Fabrizio Grandoni and
                  Salvatore Ingala and
                  Sandy Heydrich and
                  Arindam Khan and
                  Andreas Wiese},
  title        = {Approximating Geometric Knapsack via {$L$}-packings},
  journal      = {{ACM} Trans. Algorithms},
  volume       = {17},
  number       = {4},
  pages        = {33:1--33:67},
  year         = {2021},
  doi          = {10.1145/3473713}
}

@inproceedings{Kar0R25,
  author       = {Debajyoti Kar and
                  Arindam Khan and
                  Malin Rau},
  title        = {Improved Approximation Algorithms for Three-Dimensional Bin Packing},
  booktitle    = {Proc.\ 52nd International Colloquium on Automata, Languages, and Programming (ICALP)},
  series       = {LIPIcs},
  volume       = {334},
  pages        = {104:1--104:20},
  publisher    = {Schloss Dagstuhl},
  year         = {2025},
  doi          = {10.4230/LIPICS.ICALP.2025.104}
}

@inproceedings{JansenK0ST25,
  author       = {Klaus Jansen and
                  Debajyoti Kar and
                  Arindam Khan and
                  K. V. N. Sreenivas and
                  Malte Tutas},
  title        = {Improved Approximation Algorithms for Three-Dimensional Knapsack},
  booktitle    = {Proc.\ 41st Symposium on Computational Geometry (SoCG)},
  series       = {LIPIcs},
  volume       = {332},
  pages        = {60:1--60:18},
  publisher    = {Schloss Dagstuhl},
  year         = {2025},
  doi          = {10.4230/LIPICS.SOCG.2025.60}
}

@inproceedings{Jansen0LS22,
  author       = {Klaus Jansen and
                  Arindam Khan and
                  Marvin Lira and
                  K. V. N. Sreenivas},
  title        = {A {PTAS} for Packing Hypercubes into a Knapsack},
  booktitle    = {Proc.\ 49th International Colloquium on Automata, Languages, and Programming (ICALP)},
  series       = {LIPIcs},
  volume       = {229},
  pages        = {78:1--78:20},
  publisher    = {Schloss Dagstuhl},
  year         = {2022},
  doi          = {10.4230/LIPICS.ICALP.2022.78}
}

@inproceedings{aamand2023online,
  title={Online sorting and translational packing of convex polygons},
  author={Aamand, Anders and Abrahamsen, Mikkel and Beretta, Lorenzo and Kleist, Linda},
  booktitle={Proc.\ 34th ACM-SIAM Symposium on Discrete Algorithms (SODA)},
  pages={1806--1833},
  year={2023},
  doi= {10.1137/1.9781611977554.CH69}}

@inproceedings{azar2026nearly,
  title={Nearly Tight Bounds for the Online Sorting Problem},
  author={Azar, Yossi and Panigrahi, Debmalya and Vardi, Or},
  booktitle={Proc.\ 37th ACM-SIAM Symposium on Discrete Algorithms (SODA)},
  pages={6642--6658},
  year={2026},
  doi={10.1137/1.9781611978971.237}
}

@inproceedings{azar2026beyond,
  title={Online Metric {TSP}: {B}eyond the $\sqrt{n}$ Barrier},
  author={Azar, Yossi and Panigrahi, Debmalya and Vardi, Or},
  booktitle={Proc.\ 53rd International Colloquium on Automata, Languages, and Programming (ICALP)},
  pages={18:1--18:18},
  year={2026},
  publisher={Schloss Dagstuhl},
  series={LIPIcs},
  doi={ 10.4230/LIPIcs.ICALP.2026.18}
}

@InProceedings{OrthoOnlineSoCG,
  author =	{Gerlach, Tim and Hennies, Benjamin and Kleist, Linda},
  title =	{{Online Packing of Orthogonal Polygons}},
  booktitle =	{Symposium on Computational Geometry (SoCG 2026)},
  pages =	{52:1--52:17},
  year =	{2026},
  volume =	{367},
  doi =		{10.4230/LIPIcs.SoCG.2026.52}
}

@incollection{epstein2018multidimensional,
  title={Multidimensional packing problems},
  author={Epstein, Leah and van Stee, Rob},
  booktitle={Handbook of Approximation Algorithms and Metaheuristics},
  edition={2nd},
  pages={553--570},
  year={2018},
  editor={Teofilo F. Gonzalez},
  publisher={Chapman and Hall/CRC},
  doi={10.1201/9781351236423-31}
}

@article{CHRISTENSEN201763,
title = "Approximation and online algorithms for multidimensional bin packing: A survey",
journal = "Computer Science Review",
volume = "24",
pages = "63--79",
year = "2017",
issn = "1574-0137",
doi = "10.1016/j.cosrev.2016.12.001",
author = "Henrik I. Christensen and Arindam Khan and Sebastian Pokutta and Prasad Tetali"
}

@article{DBLP:journals/sigact/Stee15,
  author    = {Rob van Stee},
  title     = {{SIGACT} News Online Algorithms Column 26: {B}in packing in multiple
               dimensions},
  journal   = {{SIGACT} News},
  volume    = {46},
  number    = {2},
  pages     = {105--112},
  year      = {2015},
  doi       = {10.1145/2789149.2789167},
  bibsource = {dblp computer science bibliography, https://dblp.org}
}

@InProceedings{han2007strip,
author="Han, Xin
and Iwama, Kazuo
and Ye, Deshi
and Zhang, Guochuan",
title="Strip Packing vs.\ Bin Packing",
booktitle="Proc.\ 3rd Algorithmic Aspects in Information and Management (AAIM)",
year="2007",
pages="358--367",
volume={4508},
series={LNCS},
publisher={Springer},
doi={10.1007/978-3-540-72870-2_34}
}

@article{fekete2017online,
  author    = {S{\'{a}}ndor P. Fekete and
               Hella{-}Franziska Hoffmann},
  title     = {Online Square-into-Square Packing},
  journal   = {Algorithmica},
  volume    = {77},
  number    = {3},
  pages     = {867--901},
  year      = {2017},
  doi       = {10.1007/s00453-016-0114-2}
}

@article{alt_convexOffline_JoCG,
	title={Approximating minimum-area rectangular and convex containers for packing convex polygons},
	author={Alt, Helmut and de Berg, Mark and Knauer, Christian},
	journal={Journal of Computational Geometry},
	pages={1--10},
	year={2017},
	volume={8},
	number={1},
	doi={10.20382/jocg.v8i1a1}
}

@article{alt_convexOffline_corr,
	title={Corrigendum to: Approximating minimum-area rectangular and convex containers for packing convex polygons},
	author={Alt, Helmut and de Berg, Mark and Knauer, Christian},
	journal={Journal of Computational Geometry},
	pages={653--655},
	year={2020},
	volume={11},
	number={1},
	doi={10.20382/jocg.v11i1a26}
}

@inproceedings{Milenkovic96,
  author    = {Victor Milenkovic},
  title     = {Translational Polygon Containment and Minimal Enclosure using Linear
               Programming Based Restriction},
  booktitle = {Proc.\ 28th {ACM} Symposium on the Theory
               of Computing (STOC)},
  pages     = {109--118},
  year      = {1996},
  doi       = {10.1145/237814.237840},
  bibsource = {dblp computer science bibliography, https://dblp.org}
}

@article{doi:10.1111/j.1475-3995.1999.tb00171.x,
author = {Milenkovic, Victor J. and Daniels, Karen},
title = {Translational polygon containment and minimal enclosure using mathematical programming},
journal = {International Transactions in Operational Research},
volume = {6},
number = {5},
pages = {525--554},
doi = {10.1111/j.1475-3995.1999.tb00171.x},
year = {1999}
}

@article{MILENKOVIC19993,
title = "Rotational polygon containment and minimum enclosure using only robust {2D} constructions",
journal = "Computational Geometry",
volume = "13",
number = "1",
pages = "3--19",
year = "1999",
issn = "0925-7721",
doi = "10.1016/S0925-7721(99)00006-1",
author = "Victor J. Milenkovic",
}

@article{DBLP:journals/algorithmica/Milenkovic97,
  author       = {Victor Milenkovic},
  title        = {Multiple Translational Containment. {P}art {II:~E}xact Algorithms},
  journal      = {Algorithmica},
  volume       = {19},
  number       = {1/2},
  pages        = {183--218},
  year         = {1997},
  url          = {https://doi.org/10.1007/PL00014416},
  doi          = {10.1007/PL00014416},
  bibsource    = {dblp computer science bibliography, https://dblp.org}
}

@inproceedings{AbrahamsenBeretta20,
  author =	{Abrahamsen, Mikkel and Beretta, Lorenzo},
  title =	{Online Packing to Minimize Area or Perimeter},
  booktitle =	{Proc.\ 37th Symposium on Computational Geometry (SoCG)},
  pages =	{6:1--6:15},
  year =	{2021},
  series={LIPIcs}, 
  volume=189, 
  publisher={Schloss Dagstuhl},
  doi =		{10.4230/LIPIcs.SoCG.2021.6}
}

@article{Csirik1997shelf,
  author    = {J{\'{a}}nos Csirik and
               Gerhard J. Woeginger},
  title     = {Shelf Algorithms for On-Line Strip Packing},
  journal   = {Information Processing Letters},
  volume    = {63},
  number    = {4},
  pages     = {171--175},
  year      = {1997},
  doi       = {10.1016/S0020-0190(97)00120-8}
}

@article{baker1983shelf,
  title={Shelf algorithms for two-dimensional packing problems},
  author={Baker, Brenda S. and Schwarz, Jerald S.},
  journal={SIAM Journal on Computing},
  volume={12},
  number={3},
  pages={508--525},
  year={1983},
  publisher={SIAM},
  doi = {10.1137/0212033}
}

@article{YeOnlineStrip,
title={A note on online strip packing},
author={Ye, Deshi and Han, Xin and Zhang, Guochuan},
journal={Journal of Combinatorial Optimization},
volume={17},
number={4},
pages={417--423},
year={2009},
doi={10.1007/s10878-007-9125-x},
publisher={Springer}
}

@inproceedings{DBLP:conf/icalp/MerinoW20,
  author    = {Arturo I. Merino and
               Andreas Wiese},
  title     = {On the Two-Dimensional Knapsack Problem for Convex Polygons},
  booktitle = {Proc.\ 47th International Colloquium on Automata, Languages, and Programming ({ICALP})},
  series    = {LIPIcs},
  pages     = {84:1--84:16},
  year      = {2020},
  doi       = {10.4230/LIPIcs.ICALP.2020.84},
  bibsource = {dblp computer science bibliography, https://dblp.org}
}

@article{DBLP:journals/theoretics/AbrahamsenMS24,
  author       = {Mikkel Abrahamsen and
                  Tillmann Miltzow and
                  Nadja Seiferth},
  title        = {Framework for $\exists\mathbb{R}$-Completeness of Two-Dimensional
                  Packing Problems},
  journal      = {TheoretiCS},
  volume       = {3},
  eid          = {11},
  year         = {2024},
  _url          = {https://doi.org/10.46298/theoretics.24.11},
  doi          = {10.46298/THEORETICS.24.11},
  timestamp    = {Wed, 11 Mar 2026 13:25:33 +0100},
  biburl       = {https://dblp.org/rec/journals/theoretics/AbrahamsenMS24.bib},
  bibsource    = {dblp computer science bibliography, https://dblp.org}
}

@inproceedings{DBLP:conf/esa/KurpiszS23,
  author       = {Adam Kurpisz and
                  Silvan Suter},
  title        = {Improved Approximations for Translational Packing of Convex Polygons},
  booktitle    = {Proc.\ 31st European Symposium on Algorithms ({ESA})},
  pages        = {76:1--76:14},
  year         = {2023},
  doi          = {10.4230/LIPICS.ESA.2023.76},
  bibsource    = {dblp computer science bibliography, https://dblp.org}
}

@article{coffman1980performance,
  title={Performance bounds for level-oriented two-dimensional packing algorithms},
  author={Coffman, Jr., Edward G. and Garey, Michael R. and Johnson, David S. and Tarjan, Robert Endre},
  journal={SIAM Journal on Computing},
  volume={9},
  number={4},
  pages={808--826},
  year={1980},
  doi= {10.1137/0209062}
}

@article{epstein2019lower,
	title={A lower bound for online rectangle packing},
	author={Epstein, Leah},
	journal={Journal of Combinatorial Optimization},
	volume={38},
	pages={846--866},
	year={2019},
	doi={10.1007/s10878-019-00423-z}
}

@article{balogh2019lower,
	title={Lower bounds for several online variants of bin packing},
	author={Balogh, J{\'a}nos and B{\'e}k{\'e}si, J{\'o}zsef and D{\'o}sa, Gy{\"o}rgy and Epstein, Leah and Levin, Asaf},
	journal=tocs,
	volume={63},
	pages={1757--1780},
	year={2019},
	doi={10.1007/s00224-019-09915-1}
}

@inproceedings{DBLP:conf/esa/AbrahamsenB0K024,
  author       = {Mikkel Abrahamsen and
                  Ioana O. Bercea and
                  Lorenzo Beretta and
                  Jonas Klausen and
                  L{\'{a}}szl{\'{o}} Kozma},
  _editor       = {Timothy M. Chan and  Johannes Fischer and John Iacono and Grzegorz Herman},
  title        = {Online Sorting and Online {TSP}: {R}andomized, Stochastic, and High-Dimensional},
  booktitle    = {Proc.\ 32nd European Symposium on Algorithms (ESA)},
  series       = {LIPIcs},
  volume       = {308},
  pages        = {5:1--5:15},
  publisher    = {Schloss Dagstuhl},
  year         = {2024},
  _url          = {https://doi.org/10.4230/LIPIcs.ESA.2024.5},
  doi          = {10.4230/LIPICS.ESA.2024.5},
  timestamp    = {Fri, 21 Nov 2025 23:44:11 +0100},
  biburl       = {https://dblp.org/rec/conf/esa/AbrahamsenB0K024.bib},
  bibsource    = {dblp computer science bibliography, https://dblp.org}
}

@inproceedings{DBLP:conf/esa/Bertram25,
  author       = {Christian Bertram},
  _editor       = {Anne Benoit and Haim Kaplan and Sebastian Wild and Grzegorz Herman},
  title        = {Online Metric {TSP}},
  booktitle    = {Proc.\ 33rd European Symposium on Algorithms (ESA)},
  series       = {LIPIcs},
  volume       = {351},
  pages        = {80:1--80:9},
  publisher    = {Schloss Dagstuhl},
  year         = {2025},
  _url          = {https://doi.org/10.4230/LIPIcs.ESA.2025.80},
  doi          = {10.4230/LIPICS.ESA.2025.80},
  timestamp    = {Fri, 21 Nov 2025 23:44:11 +0100},
  biburl       = {https://dblp.org/rec/conf/esa/Bertram25.bib},
  bibsource    = {dblp computer science bibliography, https://dblp.org}
}

@inproceedings{DBLP:conf/soda/000126b,
  author       = {Yang Hu},
  _editor       = {Kasper Green Larsen and Barna Saha},
  title        = {Nearly Optimal Bounds for Stochastic Online Sorting},
  booktitle    = {Proc.\ 37th {ACM-SIAM} Symposium on Discrete Algorithms  (SODA)},
  pages        = {4969--4995},
  _publisher    = {SIAM},
  year         = {2026},
  url          = {https://doi.org/10.1137/1.9781611978971.180},
  doi          = {10.1137/1.9781611978971.180},
  timestamp    = {Thu, 19 Feb 2026 16:57:53 +0100},
  biburl       = {https://dblp.org/rec/conf/soda/000126b.bib},
  bibsource    = {dblp computer science bibliography, https://dblp.org}
}

@inproceedings{DBLP:conf/waoa/NirjhorW25,
  author       = {Jubayer Nirjhor and
                  Nicole Wein},
  _editor       = {Jannik Matuschke and Jos{\'{e}} Verschae},
  title        = {Improved Online Sorting},
  booktitle    = {Proc.\ 23rd Workshop on Approximation and Online Algorithms (WAOA)},
  series       = {LNCS},
  volume       = {16077},
  pages        = {187--197},
  publisher    = {Springer},
  year         = {2025},
  _url          = {https://doi.org/10.1007/978-3-032-06706-7\_13},
  doi          = {10.1007/978-3-032-06706-7_13},
  timestamp    = {Sun, 09 Nov 2025 16:31:20 +0100},
  biburl       = {https://dblp.org/rec/conf/waoa/NirjhorW25.bib},
  bibsource    = {dblp computer science bibliography, https://dblp.org}
}

@inproceedings{DBLP:conf/stacs/KalavasPT26,
  author       = {Andreas Kalavas and
                  Charalampos Platanos and
                  Thanos Tolias},
  _editor       = {Meena Mahajan and Florin Manea and Annabelle McIver and Kim Thang Nguyen},
  title        = {A Polylogarithmic Competitive Algorithm for Stochastic Online Sorting
                  and {TSP}},
  booktitle    = {Proc.\ 43rd Symposium on Theoretical Aspects of Computer Science (STACS)},
  series       = {LIPIcs},
  volume       = {364},
  pages        = {58:1--58:17},
  publisher    = {Schloss Dagstuhl},
  year         = {2026},
  _url          = {https://doi.org/10.4230/LIPIcs.STACS.2026.58},
  doi          = {10.4230/LIPICS.STACS.2026.58},
  timestamp    = {Wed, 25 Feb 2026 17:18:42 +0100},
  biburl       = {https://dblp.org/rec/conf/stacs/KalavasPT26.bib},
  bibsource    = {dblp computer science bibliography, https://dblp.org}
}

@article{diedrich2008approximation,
  title={Approximation algorithms for {3D} orthogonal knapsack},
  author={Diedrich, Florian and Harren, Rolf and Jansen, Klaus and Th{\"o}le, Ralf and Thomas, Henning},
  journal={Journal of Computer Science and Technology},
  volume={23},
  number={5},
  pages={749--762},
  year={2008},
  doi = {10.1007/S11390-008-9170-7}
}

@inproceedings{jansen2014new,
  title={A new asymptotic approximation algorithm for 3-dimensional strip packing},
  author={Jansen, Klaus and Pr{\"a}del, Lars},
  booktitle={Proc.\ 40th International Conference on Current Trends in Theory and Practice of Informatics (SOFSEM)},
  series       = {LNCS},
  volume       = {8327},
  pages        = {327--338},
  publisher    = {Springer},
  year         = {2014},
  doi          = {10.1007/978-3-319-04298-5_29},
}

@article{DBLP:journals/dcg/LassakZ91,
  author       = {Marek Lassak and
                  Jixian Zhang},
  title        = {An On-line Potato-Sack Theorem},
  journal      = {Discret. Comput. Geom.},
  volume       = {6},
  pages        = {1--7},
  year         = {1991},
  url          = {https://doi.org/10.1007/BF02574670},
  doi          = {10.1007/BF02574670},
  bibsource    = {dblp computer science bibliography, https://dblp.org}
}

@article{ausiello2001algorithms,
  author       = {Giorgio Ausiello and
                  Esteban Feuerstein and
                  Stefano Leonardi and
                  Leen Stougie and
                  Maurizio Talamo},
  title        = {Algorithms for the On-Line Travelling Salesman},
  journal      = {Algorithmica},
  volume       = {29},
  number       = {4},
  pages        = {560--581},
  year         = {2001},
  doi          = {10.1007/S004530010071}
}

@article{coppersmith1989multidimensional,
  title={Multidimensional on-line bin packing: algorithms and worst-case analysis},
  author={Coppersmith, Don and Raghavan, Prabhakar},
  journal={Operations Research Letters},
  volume={8},
  number={1},
  pages={17--20},
  year={1989},
  doi={10.1016/0167-6377(89)90027-8}
}

@article{csirik1993line,
  title={An on-line algorithm for multidimensional bin packing},
  author={Csirik, J{\'a}nos and Van Vliet, Andr{\'e}},
  journal={Operations Research Letters},
  volume={13},
  number={3},
  pages={149--158},
  year={1993},
  doi={10.1016/0167-6377(93)90004-Z}
}

@article{DBLP:journals/siamcomp/EpsteinS05,
  author       = {Leah Epstein and
                  Rob van Stee},
  title        = {Optimal Online Algorithms for Multidimensional Packing Problems},
  journal      = {{SIAM} J. Comput.},
  volume       = {35},
  number       = {2},
  pages        = {431--448},
  year         = {2005},
  url          = {https://doi.org/10.1137/S0097539705446895},
  doi          = {10.1137/S0097539705446895},
  bibsource    = {dblp computer science bibliography, https://dblp.org}
}

@article{DBLP:journals/algorithmica/SeidenS03,
  author       = {Steven S. Seiden and
                  Rob van Stee},
  title        = {New Bounds for Multidimensional Packing},
  journal      = {Algorithmica},
  volume       = {36},
  number       = {3},
  pages        = {261--293},
  year         = {2003},
  _url          = {https://doi.org/10.1007/s00453-003-1016-7},
  doi          = {10.1007/S00453-003-1016-7},
  timestamp    = {Wed, 17 May 2017 14:25:13 +0200},
  biburl       = {https://dblp.org/rec/journals/algorithmica/SeidenS03.bib},
  bibsource    = {dblp computer science bibliography, https://dblp.org}
}

@article{DBLP:journals/talg/HanCTZZ11,
  author       = {Xin Han and
                  Francis Y. L. Chin and
                  Hing{-}Fung Ting and
                  Guochuan Zhang and
                  Yong Zhang},
  title        = {A new upper bound 2.5545 on 2D Online Bin Packing},
  journal      = {{ACM} Trans. Algorithms},
  volume       = {7},
  number       = {4},
  pages        = {50:1--50:18},
  year         = {2011},
  url          = {https://doi.org/10.1145/2000807.2000818},
  doi          = {10.1145/2000807.2000818},
  bibsource    = {dblp computer science bibliography, https://dblp.org}
}

@article{DBLP:journals/iandc/HanIYZ16,
  author       = {Xin Han and
                  Kazuo Iwama and
                  Deshi Ye and
                  Guochuan Zhang},
  title        = {Approximate strip packing: {R}evisited},
  journal      = {Inf. Comput.},
  volume       = {249},
  pages        = {110--120},
  year         = {2016},
  _url          = {https://doi.org/10.1016/j.ic.2016.03.010},
  doi          = {10.1016/J.IC.2016.03.010},
  timestamp    = {Tue, 21 Mar 2023 21:09:35 +0100},
  biburl       = {https://dblp.org/rec/journals/iandc/HanIYZ16.bib},
  bibsource    = {dblp computer science bibliography, https://dblp.org}
}

@article{DBLP:journals/acta/BrownBK82,
  author       = {Donna J. Brown and
                  Brenda S. Baker and
                  Howard P. Katseff},
  title        = {Lower Bounds for On-Line Two-Dimensional Packing Algorithms},
  journal      = {Acta Informatica},
  volume       = {18},
  pages        = {207--225},
  year         = {1982},
  url          = {https://doi.org/10.1007/BF00264439},
  doi          = {10.1007/BF00264439},
  bibsource    = {dblp computer science bibliography, https://dblp.org}
}

@article{KernP13,
  author       = {Walter Kern and
                  Jacob Jan Paulus},
  title        = {A tight analysis of {B}rown-{B}aker-{K}atseff sequences for online strip
                  packing},
  journal      = {J. Comb. Optim.},
  volume       = {26},
  number       = {2},
  pages        = {333--344},
  year         = {2013},
  doi          = {10.1007/S10878-012-9463-1}
}

@article{HarrenK15,
  author       = {Rolf Harren and
                  Walter Kern},
  title        = {Improved Lower Bound for Online Strip Packing},
  journal      = {Theory Comput. Syst.},
  volume       = {56},
  number       = {1},
  pages        = {41--72},
  year         = {2015},
  doi          = {10.1007/S00224-013-9494-8}
}

@article{DBLP:journals/tcs/HurinkP11,
  author       = {Johann L. Hurink and
                  Jacob Jan Paulus},
  title        = {Improved online algorithms for parallel job scheduling and strip packing},
  journal      = {Theor. Comput. Sci.},
  volume       = {412},
  number       = {7},
  pages        = {583--593},
  year         = {2011},
  _url          = {https://doi.org/10.1016/j.tcs.2009.05.033},
  doi          = {10.1016/J.TCS.2009.05.033},
  timestamp    = {Wed, 17 Feb 2021 22:00:20 +0100},
  biburl       = {https://dblp.org/rec/journals/tcs/HurinkP11.bib},
  bibsource    = {dblp computer science bibliography, https://dblp.org}
}

\end{document}